\documentclass[12pt]{amsart}
\usepackage{amssymb,amscd}

\usepackage{tikz-cd}



\usepackage[colorlinks=true,allcolors = blue]{hyperref} 

\textwidth 6.5truein
\textheight 8.67truein
\oddsidemargin 0truein
\evensidemargin 0truein
\topmargin 0truein

\let\frak\mathfrak

\def\>{\relax\ifmmode\mskip.666667\thinmuskip\relax\else\kern.111111em\fi}
\def\<{\relax\ifmmode\mskip-.333333\thinmuskip\relax\else\kern-.0555556em\fi}
\def\vsk#1>{\vskip#1\baselineskip}
\def\vv#1>{\vadjust{\vsk#1>}\ignorespaces}
\def\vvn#1>{\vadjust{\nobreak\vsk#1>\nobreak}\ignorespaces}

\let\Medskip\medskip
\def\medskip{\par\Medskip}
\let\Bigskip\bigskip
\def\bigskip{\par\Bigskip}

\let\Maketitle\maketitle
\def\maketitle{\Maketitle\thispagestyle{empty}\let\maketitle\empty}

\newtheorem{thm}{Theorem}[section]
\newtheorem{cor}[thm]{Corollary}
\newtheorem{lem}[thm]{Lemma}

\numberwithin{equation}{section}

\theoremstyle{definition}
\newtheorem*{rem}{Remark}

\let\mc\mathcal
\let\nc\newcommand

\let\al\alpha

\let\la\lambda
\let\La\Lambda

\let\phi\varphi
\let\si\sigma
\let\Si\Sigma

\let\om\omega
\let\Om\Omega

\let\der\partial

\let\ox\otimes

\let\geq\geqslant
\let\le\leqslant
\let\leq\leqslant

\let\on\operatorname
\let\bi\bibitem
\let\bs\boldsymbol

\def\C{{\mathbb C}}
\def\Z{{\mathbb Z}}

\def\B{{\mc B}}
\def\D{{\mc D}}

\def\F{{\mc F}}
\def\L{{\mc L}}
\def\M{{\mc M}}

\def\O{{\mc O}}

\def\V{{\mc V}}
\def\K{{\mc K}}
\def\O{{\mc O}}

\def\+#1{^{\{#1\}}}

\def\End{\on{End}}

\def\const{\on{const}}
\def\ch{\on{ch}}

\def\Wr{\on{Wr}}
\def\Sym{\on{Sym}}


\def\gln{\mathfrak{gl}_N}

\def\beq{\begin{equation}}
\def\eeq{\end{equation}}
\def\be{\begin{equation*}}
\def\ee{\end{equation*}}

\nc{\bea}{\begin{eqnarray*}}
\nc{\eea}{\end{eqnarray*}}
\nc{\bean}{\begin{eqnarray}}
\nc{\eean}{\end{eqnarray}}

\def\h{{\mathfrak h}}

\nc{\Il}{{\mc I_{\bs\la}}}
\nc{\bla}{{\bs\la}}
\nc{\Fla}{\F_{\bs\la}}
\nc{\tfl}{{T^*\Fla}}
\nc{\GL}{{GL_n(\C)}}
\nc{\GLC}{{GL_n(\C)\times\C^*}}

\def\slt{{\frak{sl}_2}}

\nc{\glt}{{\frak{gl}_2}}
\def\KZ/{{\slshape KZ\/}}
\def\qKZ/{{\slshape qKZ\/}}
\def\XXX/{{\slshape XXX\/}}
\nc{\arr}{\rightarrow}
\nc{\larr}{\longrightarrow}
\nc{\A}{{\mc A}}
\nc{\Ax}{{\mc A(\xi)}}
\nc{\cdet}{{\on{cdet}\,}}
\nc{\Fun}{{\on{Fun}_{\slt}\!\!V[0]\,}}
\nc{\Oz}{{\O(\zeta,m,\ell)}}
\nc{\Bm}{{\B(\mu, m,\ell)}}
\nc{\sk}{{\sqrt{-1}}}

\begin{document}

\hrule width0pt
\vsk->
\title[Dynamical Bethe algebra and quasi-polynomials]
{Dynamical $\slt$ Bethe algebra and functions
\\
 on pairs of quasi-polynomials}

\author[A.\,Slinkin, D.\,Thompson, A.\,Varchenko]
{A.\,Slinkin$^{\diamond}$, D.\,Thompson$^*$, A.\,Varchenko$^{\star}$}

\maketitle

\begin{center}
{\it $^{\diamond, *, \star}\<$Department of Mathematics, University
of North Carolina at Chapel Hill\\ Chapel Hill, NC 27599-3250, USA\/}

\vsk.5>
{\it $^{\diamond}\<$
National Research University Higher School of Economics\\ 20 Myasnitskaya Street, 101000 Moscow, Russia\/}

\vsk.5>
{\it $^{\star}\<$Faculty of Mathematics and Mechanics, Lomonosov Moscow State
University\\ Leninskiye Gory 1, 119991 Moscow GSP-1, Russia\/}

\end{center}

\bigskip
\hfill
On the Occasion of the 70th Birthday of Igor Krichiver

\bigskip

\vsk>
{\leftskip3pc \rightskip\leftskip \parindent0pt \Small
{\it Key words\/}: Commuting differential operators, eigenfunctions, Weyl group invariance,
\\
\phantom{aaaaaaaaqq}
Bethe ansatz, Wronskian equation,
quasi-polynomials

\vsk.6>
{\it 2010 Mathematics Subject Classification\/}: 17B80, 81R12, 14M15
\par}

{\let\thefootnote\relax
\footnotetext{\vsk-.8>\noindent
$^\diamond\<$
{\it E\>-mail}: slinalex@live.unc.edu
\\
$^*$
{\it E\>-mail}:
dthomp@email.unc.edu
\\
$^\star\<$
{\it E\>-mail}:
anv@email.unc.edu\,, supported in part by NSF grants
DMS-1665239, DMS-1954266}}

\begin{abstract} 

We consider  the space $\text{Fun}_{\frak{sl}_2}\!V[0]$ of functions
on the Cartan subalgebra of $ \frak{sl}_2$ with values
in the zero weight subspace $V[0]$ of a tensor product of irreducible finite-dimensional
$\frak{sl}_2$-modules. We consider the algebra $\mathcal B$
of commuting differential  operators on $\text{Fun}_{\frak{sl}_2}\,V[0]$, constructed
by  V.\,Rubtsov, A.\,Silantyev, D.\,Talalaev in 2009. We describe the relations between
the action of $\mathcal B$ on $\text{Fun}_{\frak{sl}_2}\,V[0]$ and spaces of pairs 
of quasi-polynomials.

\end{abstract}

{\small\tableofcontents\par}

\setcounter{footnote}{0}
\renewcommand{\thefootnote}{\arabic{footnote}}

\section{Introduction}

A  quantum integrable model is a vector space  $V$
and an algebra $\B$ of commuting linear operators on $V$, 
 called  the Bethe algebra of Hamiltonians. 
The problem is to find eigenvectors and eigenvalues.
If the vector space is a space of functions, then the Hamiltonians are  differential or difference operators.

We say that a quantum integrable model can be geometrized,
 if there is a topological space (a scheme) $X$ with an algebra
$\O_X$ of functions on $X$,  an isomorphism of vector spaces $\psi :\O_X\to V$,  an isomorphism
of algebras $\tau:\O_X\to\B$ such that
\bea
\psi(fg) = \tau(f)\,\psi(g),\qquad \forall f,g\in\O_X.
\eea
These objects $\O_X, \psi,\tau$ identify the  $\B$-module $V$ with the regular representation of the algebra
  $\O_X$ of functions.

If a quantum integrable model $(V,\B)$ is geometrized, then the eigenvectors 
of $\B$ in $V$ are identified with delta-functions of points of $X$ and the eigenvalues of an eigenvector
in $ V$ correspond to evaluations of functions on $X$ at the corresponding  point of $X$.

\smallskip

Our motivation to geometrize the Bethe algebras  came from the examples
considered in \cite{MTV3, MTV5}, where the algebra of Hamiltonians acting on a subspace 
of a tensor product of $\gln$-modules was identified with the
algebra of functions on the intersection of suitable Schubert cycles in a 
Grassmannian. That identification gave an unexpected relation between the representation
theory and Schubert calculus. 

\smallskip

The examples in \cite{MTV3, MTV5} are related to models with  a finite-dimensional vector space $V$.
How to proceed in the infinite-dimensional case of commuting differential operators
 is not clear yet. In this paper we discuss an example. 
 In our infinite-dimensional space $V$
 we distinguish a family of finite-dimensional subspaces $E[\mu]$, $\mu\in\C$, each of which is invariant
 with respect to the algebra $\B$ of commuting differential operators. We geometrize each of the pairs
 $(E[\mu], \B\big\vert_{E[\mu]})$, thus constructing a family of topological spaces $X[\mu]$, $\mu\in\C$. 
 We observe that natural interrelations between the subspaces $E[\mu]$ correspond to natural interrelations
 between the topological spaces $X[\mu]$. For example,  the Weyl involution
$V\to V$, available in our case, identifies $E[\mu]$ and 
 $E[-\mu]$. We show that this identification corresponds to a natural
 isomorphism $X[\mu]\to X[-\mu]$.

\smallskip

Representation theory provides a source of commuting differential or difference operators. In this
paper we discuss the  construction due to 
  V.\,Rubtsov, A.\,Silantyev, D.\,Talalaev, \cite{RST}. That quantum integrable model
   is called the {\it quantum dynamical Gaudin model}. We study the $\slt$ trigonometric version
   of the quantum dynamical Gaudin model, while in \cite{RST}  the $\gln$ elliptic version
was considered. 
\smallskip

Consider the Lie algebra $\slt$ and its Cartan subalgebra $\h\subset \slt$,
$\dim \h = 1$.
For $s=1,\dots,n$, let    $V_{m_s}$ be the irreducible  $\slt$-module of dimension $m_s+1$.
Let  $V=\otimes_{s=1}^nV_{m_s}$,
\bea
V[0] = \{ v\in\ V\ | \ hv=0, \  \ \forall h\in\h\}\,,
\eea
the zero weight subspace.  The space $V[0]$ is nontrivial if $M=\sum_{s=1}^nm_s$ is even.
Let 
$\Fun$  be the
  space of $V[0]$-valued functions on $\h$.
Fix a subset $z=\{z_1,\dots,z_n\}\subset \C^\times$. 
Having these data,  Rubtsov, Silantyev, and Talalaev  construct a family of commuting differential operators acting on $\Fun$.

First,  one  constructs a $2\times 2$-matrix
$\left[
\begin{matrix}
x \der_x & 0  
\\
0 & x \der_x 
\end{matrix} 
\right]+ L(x)\,
=(\delta_{ij}\der_x +  L_{ij}(x))$, 
where  $x$ is a parameter,  
$\der_x =\frac{\der}{\der x}$,
and $L_{ij}(x)$  are differential operators on $\Fun$  
depending on $x$. 
Let
$\mc C = \on{cdet}\big[\delta_{ij}\der_x +  L_{ij}(x)\big],$
where {\it cdet} is the column determinant of the matrix with non-commuting entries,
$\on{cdet}
\left[\begin{array}{cc}
a & b\\
c & d
\end{array}\right] = ad-cb.$
The operator $\mc C$ can be rewritten in the form
\bea
\der_x^2 + C_1(x)\der_x+C_2(x) ,
\eea
where $C_1(x), C_2(x)$  are differential operators on $\Fun$, whose coefficients are rational functions of $x$.
For any $a,b\in \C -\{z_1,\dots,z_n\}$ and $i,j=1,2$ the operators $C_i(a)$, $C_j(b)$ commute. 
The space $\Fun$ with the algebra
$\B$ generated by these commuting differential operators
 is called the {\it quantum dynamical Gaudin model}.

\smallskip
We show that the  algebra $\B$ is generated by the trigonometric KZB operators $H_0$,
$H_1(z)$, \dots, $H_n(z)$, see them in \cite{FW, JV}. 
The KZB operator $H_0$ is also known as the
trigonometric  Hamiltonian operator of the quantum two-particle
Calogero-Moser system with spin space $V$. The operator $H_0$
is the second order differential operator independent of $z$. 
\smallskip

For any $\mu\notin\Z$ we define the subspace $E[\mu]\subset \Fun$ as
the space of meromorphic
eigenfunctions of $H_0$ with eigenvalue $\pi \sqrt{-1}\, \frac{\mu^2}2$ and prescribed poles. The subspaces
$E[\mu]$ were introduced in \cite{FV2} and studied in \cite{JV}. We have $\dim E[\mu]=\dim V[0]$.
The Bethe algebra $\B$ preserves each of $E[\mu]$.

\smallskip
The $\slt$ Weyl involution acts on $\Fun$. The Bethe algebra $\B$ is Weyl group invariant. 
The Weyl involution induces an isomorphism  $E[\mu]\to E[-\mu]$, which is called in \cite{FV2}
the scattering matrix.  The scattering matrix 
$E[\mu]\to E[-\mu]$ is an isomorphism of $\B$-modules.

\smallskip
The basis of the geometrization procedure lies in the following observation. Let $\psi\in E[\mu]$ be an eigenvector of $\B$,
\bea
C_i(x) \psi = E_i(x, \psi) \psi, \qquad i=1,2,
\eea
where $E_i(x,\psi)$ are scalar eigenvalue functions of $x$. We assign to $\psi$ the scalar differential operator
\bea
\mc E_\psi = \der_x^2 + E_1(x,\psi)\der_x+E_2(x,\psi).
\eea
 We show that the kernel of $\psi$ is spanned by 
two quasi-polynomials
$x^{-\mu/2} f(x), x^{\mu/2} g(x)$, where $f(x),g(x)$ are monic polynomials of  degree $M/2$, with the property
that the  Wronskian 
of the two quasi-polynomials  is
\bean
\label{f1}
\Wr(x^{-\mu/2} f(x), x^{\mu/2} g(x))\, = \,\frac {\mu}x\,\prod_{s=1}^n (x-z_s)^{m_s}.
\eean
This fact suggests that the space $X[\mu]$ geometrizing $(E[\mu], \B\big\vert_E[\mu])$
is the space of pairs  $(x^{-\mu/2} f(x)$, $x^{\mu/2} g(x))$ of quasi-polynomials
 with Wronskian given by \eqref{f1}.

 \smallskip
 In this paper we show that this is indeed so. 
We also show that 
 the scattering matrix isomorphism
$E[\mu]\to E[-\mu]$ corresponds to the natural isomorphism $X[\mu]\to X[-\mu]$ defined by the transposition of 
the quasi-polynomials. 

\smallskip
The main message of this paper is the deep relation between the quantum dynamical Gaudin model $(\B, \Fun)$ and
the spaces of pairs of quasi-polynomials.

\smallskip
It would be interesting to develop the elliptic version of this correspondence. In the elliptic version the pairs of quasi-polynomials
are replaced with pairs of theta-polynomials, see \cite{ThV}, but the elliptic KZB operator
$H_0$  does depend on $z$ and does not have apparent analogs of the 
subspaces $E[\mu]$.

\medskip
The paper is organized as follows. In Section \ref{sec 2} we define the
$\slt$ quantum dynamical Gaudin model. 
In Section \ref{sec 3} we discuss properties of the spaces $E[\mu]$. 
In Section \ref{sec 4} we introduce the quantum trigonometric Gaudin model 
$(V[\nu],\B(z,\mu,V[\nu])$ on a weight subspace $V[\nu]\subset V$ and show that
the quantum dynamical Gaudin model $(E[\mu], \B\big\vert_{E[\nu]})$
is isomorphic to the quantum trigonometric  Gaudin model 
$(V[0],\B(z,\mu,V[0])$ on the zero weight subspace. In Section \ref{sec 5} we describe
the Bethe ansatz for the quantum trigonometric  Gaudin model.
In Sections \ref{sec 6} and \ref{sec 7} we describe the kernel of the operator $\mc E_\psi$.
In Sections \ref{sec 8} - \ref{sec 11} we develop the geometrization procedure. The constructions
of Sections \ref{sec 9} - \ref{sec 11} are parallel to the geometrization constructions in
\cite{MTV3, MTV2}.

\medskip

The authors thank V.\,Tarasov for useful discussions.

\section{Quantum dynamical Gaudin model}
\label{sec 2}

\subsection{$\glt$ $RST$-operator}
\label{ssec RST} 
\label{ssec not}

 Consider the complex Lie algebra $\glt$ with standard basis  $e_{11}$, $e_{12}$, $e_{21}$, $e_{22}$.
 Denote by $\h$  the Cartan subalgebra of $\glt$ with basis  $e_{11},e_{22}$ and elements 
 $\la_1 e_{11}+\la_2 e_{22}$.
 Denote 
 \bea
 \la :=\la_1-\la_2.
 \eea
Let $z=\{z_1,\dots,z_n\}\subset \C^\times$ be a set of nonzero pairwise distinct numbers.

\smallskip

Let $V^1,\dots,V^n$ be $\glt$-modules and $V=\otimes_{k=1}^nV^k$.  
Let  $V = \oplus_{\nu\in\h^*} V[\nu]$
be  the weight decomposition, where $ V[\nu] = \{v\in V\ |\ e_{jj}v = \nu (e_{jj})v \ \on{for}\ j=1,2 \}$.
In particular,
\bea
V[0] = \{v\in V\ |\ e_{11}v =  e_{22}v = 0\}.
\eea

For $g\in \glt$, denote $g^{(s)} = 1 \otimes \cdots \otimes g \otimes \cdots \otimes 1\in \End(V)$, with  $g$ in the $s$-th factor.
An element $e_{jk}$ acts on $V$ by $e_{jk}^{(1)}+ \dots + e_{jk}^{(n)}$.
\smallskip

Let $u$ be a variable. Denote
\bea
x= e^{-2\pi \sqrt{-1}u}.
\eea
Let $\der_u = \frac{\der }{\der u}$, \, $\der_x = \frac{\der }{\der x}$,\, $\der_{\la_j} = \frac{\der }{\der \la_j}$ and so on.
\smallskip

Introduce a $2\times 2$-matrix $\L$,
 \bean
\label{matrixL0}
\phantom{aaa}
\\
\notag
  \begin{bmatrix} \L_{11}  &\L_{12}
  \\
   \L_{21}  &\L_{22}
\end{bmatrix} 
 =
 \begin{bmatrix} 
 \pi 
 \sqrt{-1} 
\sum_{s=1}^n \frac{z_s+x}{z_s-x} \,e_{11}^{(s)} 
+ \pi \cot (\pi \la) \, e_{22} \; & \; 
\pi \sqrt{-1} \sum_{s=1}^n \frac{z_s+x}{z_s-x} \,e_{21}^{(s)} - \pi \cot (\pi \la) e_{21} 
 \\
\pi \sqrt{-1} \sum_{s=1}^n \frac{z_s+x}{z_s-x} \,e_{12}^{(s)} + \pi \cot (\pi \la) e_{12} \; 
& \; \pi \sqrt{-1} \sum_{s=1}^n \frac{z_s+x}{z_s-x} \,e_{22}^{(s)} - \pi \cot (\pi \la) e_{11} 
\\
\end{bmatrix},
\eean

\vsk.2>
\noindent
The entries of $\L$ are $\End(V)$-valued trigonometric functions of $u$ and $\la$.

\smallskip
The {\it universal dynamical differential operator} (or the {\it $RST$-operator})
is defined by the formula 
\bean
\label{trigRST}
\mc C = \cdet (\delta_{jk}\partial_u - \delta_{jk} \, \partial_{\la_{j}} + \L_{jk}),
\eean
where for a $2\times 2$-matrix $A = (a_{jk})$ with noncommuting entries the column determinant
 is defined by the formula 
\bea
\cdet A = a_{11}a_{22} - a_{21}a_{12}\,.
\eea
Write the $RST$-operator in the form
\bean
\label{D}
\mc C = \partial_u^2 \,+ \,C_1(x)\, \partial_u \,+ \,C_2(x),
\eean
where $C_1(x)$ and $C_2(x)$ are functions in $x$ with 
values in the space of linear differential operators in variables $\la_1,\la_2$ with coefficients in $\End(V)$.

\begin{thm}
[\cite{RST}]
\label{thm RST}

Fix  $z=\{z_1,\dots,z_n\}\subset \C^\times$.  
Then for any $a \in \C-\{z_1,\dots,z_n\}$ the operators $C_1(a), C_2(a)$, 
restricted to $V[0]$-valued functions of $\la_1,\la_2$,
 define 
linear differential operators in $\la_1, \la_2$ with coefficients in $\End(V[0])$. Moreover, for any 
$a,b \in \C -\{z_1,\dots,z_n\}$, 
the differential operators $C_j(a)$, $C_k(b)$, $j,k=1,2$, acting on the space of $V[0]$-valued functions of $\la_1,\la_2$
commute: 
\bean
\label{cC}
[C_j(a), C_k(b)] = 0, \qquad j,k=1,2.
\eean
\end{thm}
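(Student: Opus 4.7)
The statement has two parts: (i) the coefficient operators $C_1(a), C_2(a)$, a priori differential operators on $V$-valued functions, descend to operators with values in $\End(V[0])$; (ii) the family $\{C_j(a)\mid j=1,2,\ a\in\C\minus\{z_1,\dots,z_n\}\}$ is mutually commuting. I would dispatch (i) directly and reduce (ii) to the standard dynamical $RLL$-formalism underlying \cite{RST}.

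\emph{Step 1: Restriction.} Expand the column determinant
\be
\mc C = (\der_u-\der_{\la_1}+\L_{11})(\der_u-\der_{\la_2}+\L_{22}) - \L_{21}\L_{12}.
\ee
For each constituent operator I check the $\glt$-weight it carries on $V$. The operators $e_{jj}^{(s)}$ commute with the global Cartan, hence preserve every weight space $V[\nu]$, in particular $V[0]$. The global Cartan contributions $\pi\cot(\pi\la)\,e_{22}$ in $\L_{11}$ and $-\pi\cot(\pi\la)\,e_{11}$ in $\L_{22}$ annihilate $V[0]$ outright. Hence $\L_{jj}$ preserves $V[0]$. The off-diagonal entry $\L_{12}$ shifts weight by $\alpha_{12}=(1,-1)$ and $\L_{21}$ shifts by $-\alpha_{12}$, so the product $\L_{21}\L_{12}$ sends $V[0]$ back to $V[0]$. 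Expanding $\mc C$ in powers of $\der_u$ and collecting coefficients then gives operators $C_1(x), C_2(x)$, rational in $x$ with poles only at the points $z_s$, acting on $V[0]$-valued functions of $\la_1,\la_2$; evaluating at $x=a\notin\{z_1,\dots,z_n\}$ yields the asserted operators in $\End(V[0])$.

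\emph{Step 2: Commutativity via dynamical $RLL$.} This is the substantial part. Following the strategy of \cite{RST}, I would interpret the matrix
\be
K(x,\la)_{jk} \;=\; \dl_{jk}\der_u \,-\, \dl_{jk}\der_{\la_j} \,+\, \L_{jk}(x,\la)
\ee
as a dynamical Lax operator and show that, as a function of the spectral parameter $x$, it satisfies a Felder-type dynamical $RLL$ commutation relation
\be
R(x/y,\la)\,K_1(x,\la-h^{(2)})\,K_2(y,\la) \,=\, K_2(y,\la-h^{(1)})\,K_1(x,\la-h^{(2)})\,R(x/y,\la),
\ee
where $R$ is the trigonometric $\slt$ dynamical $R$-matrix. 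With this relation established, the column determinant $\cdet K(x,\la)$ is the quantum determinant of a dynamical Manin matrix, and by the general mechanism developed in \cite{RST} its coefficients in the expansion in $\der_u$ commute among themselves at different values of the spectral parameter. Restricting the statement to the zero-weight subspace kills the dynamical shifts, leaving $[C_j(a),C_k(b)]=0$ on $\Fun_{\slt}\!V[0]$.

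\emph{Main obstacle.} The hard part is verifying the dynamical $RLL$ relation in the trigonometric $\slt$ setting in a form directly usable here. In \cite{RST} the computation is carried out in the elliptic $\gln$ case; I would extract the $\slt$ case by restricting to a rank-one subalgebra and take the trigonometric degeneration (elliptic modulus $\tau\to\sk\infty$), checking that both the $R$-matrix and the Lax matrix degenerate consistently so that the $RLL$ identity survives the limit. Once this is in place, the rest is bookkeeping: the $\der_{\la_j}$ terms in $K$ account for the dynamical shifts $\la-h^{(1)},\la-h^{(2)}$, and the cancellation of the nontrivial terms in $[\cdet K(a),\cdet K(b)]$ follows from the same commutation calculation as in the non-dynamical Gaudin case, modulo these shifts. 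A useful sanity check I would perform along the way is that the order-zero and order-one coefficients $C_2(a), C_1(a)$ reduce, in the limit of trivial dynamical dependence, to operators of the trigonometric Gaudin model announced in Section~\ref{sec 4}.
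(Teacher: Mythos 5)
Your Step 1 is correct and is exactly the right weight-space bookkeeping: the diagonal entries of $\L$ preserve every weight subspace (and their global Cartan terms vanish on $V[0]$), while $\L_{21}\L_{12}$ returns $V[0]$-valued functions to themselves, so the coefficients of the expanded column determinant restrict to operators with coefficients in $\End(V[0])$.

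For Step 2 the paper itself offers no proof --- Theorem \ref{thm RST} is quoted from \cite{RST}, the trigonometric $\slt$ operator being a degeneration of the elliptic $\gln$ one --- so your plan of re-deriving the dynamical $RLL$/Manin-matrix argument is faithful to the source. But as written it is a program rather than a proof: the trigonometric dynamical $RLL$ relation for $K$, the fact that the shifted matrix is Manin so that $\cdet$ yields commuting coefficients, and the claim that restriction to $V[0]$ neutralizes the dynamical shifts are all asserted, not verified, and the last point needs a genuine argument since commutativity fails off the zero-weight subspace. You also miss the much cheaper independent route that the paper records right after Theorem \ref{S_2(X)}: one computes that $C_1(x)$ acts by zero on $\Fun$ (formula \eqref{mc C}) and that $C_2(x)$ is an explicit scalar-coefficient combination of the identity and the KZB operators $H_0,H_1(z),\dots,H_n(z)$ (formula \eqref{C2}), whence $[C_2(a),C_2(b)]=0$ follows at once from the commutativity of the KZB operators proved in \cite{FW}. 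If you want a self-contained argument within the scope of this paper, that is the one to write out; if you insist on the $RLL$ route, the degeneration from the elliptic $\glt$ case of \cite{ThV} must actually be carried through.
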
 

\smallskip
The  elliptic version of the $RST$-operator for $\gln$ was introduced by V.\,Rubtsov, A.\,Silantyev, D.\,Talalaev in \cite{RST}.
The elliptic version of the $\gln$ $RST$-operator, in particular for the case of $N=2$, 
was discussed in \cite{ThV}. The $RST$-operator,
defined in \eqref{D}, is the trigonometric 
degeneration of the elliptic $\glt$  $RST$-operator.

\subsection{Dynamical Bethe algebra of $\Fun$} 

In this paper, we are interested in  the $\slt$ version of the $RST$-operator.

\smallskip
The Lie algebra $\slt$ is a Lie subalgebra of $\glt$. We have  $\glt=\slt\oplus \C(e_{11}+e_{22})$,
where $e_{11}+e_{22}$ is a central element. 
Let $V^1,\dots,V^n$ be $\slt$-modules, thought of as $\glt$-modules,
 where the central element $e_{11}+e_{22}$ acts by zero.
Let $V=\ox_{k=1}^nV^k$ be the tensor product of the $\slt$-modules. 

\smallskip

 In this paper {\it we consider only such tensor products.}

\smallskip
We consider the Cartan subalgebra of $\slt$ consisting of elements $\la_1e_{11}+\la_2e_{22}$
with $\la_1+\la_2=0$.
 We identify the algebra of functions 
on the Cartan subalgebra of $\slt$ with the algebra of functions in
the variable 
\bea
\la=\la_1-\la_2\,,
\eea
 since the elements
$\la_1e_{11}+\la_2e_{22}$
with $\la_1+\la_2=0$ are uniquely determined by the difference of coordinates.

Denote by $\Fun$ the space of $V[0]$-valued meromorphic functions on the Cartan subalgebra of $\slt$.
In other words, $\on{Fun}_{\slt}\!\!V[0]$ is the space of $V[0]$-valued meromorphic functions in
one variable $\la$.

\smallskip
Each coefficient $C_1(x), C_2(x)$ of the $RST$-operator, defines a differential operator
acting on  $\Fun$. From now on {\it we consider  the coefficients $C_1(x)$, $C_2(x)$ as 
a family of  commuting differential operators  on  $\Fun$}, depending on the parameter $x$.

\smallskip
The commutative algebra of differential operators on  $\Fun$  generated by the identity operator
 and the operators $\{ C_j(a)\ |\ j=1,2, \ a\in \C -\{z_1,\dots,z_n \}\}$
is called the {\it dynamical Bethe algebra} of $\Fun$. The dynamical Bethe algebra depends on the choice of
the numbers $\{z_1,\dots,z_n\}$.

\subsection{Tensor product of $\slt$-modules}
\label{sec TP}

Given $m\in \Z_{\geq 0}$, denote by $V_{m}$ the 
 irreducible $\slt$-module with highest weight $m$. 
 It has a basis $v^m_0,\dots,v^m_m$ such that
 \bean
 \label{V_m}
 \phantom{aaaaa}
 (e_{11}-e_{22})v^m_k=(m-2k)v^m_k,
 \quad
 e_{21}v^m_k=(k+1)v^m_{k+1},
\quad
 e_{12}v^m_k=(m-k+1)v^m_{k-1}.
 \eean
 
 \smallskip
From now on our tensor product $V$ is of the form
\bean
\label{ten V}
V\,=\, \ox_{s=1}^n V_{m_s}\,, \qquad m_s\in\Z_{> 0}\,.
\eean
We have the weight decomposition $V =\oplus_{\nu\in\Z}V[\nu]$ consisting of  weight subspaces
\bean
\label{wdec}
V[\nu] \,=\, \{v\in V\ |\ (e_{11}-e_{22}) v = \nu v\,\}\,.
\eean
If $V[\nu]$ is nonzero, then
\bean
\label{mu m}
\nu = \sum_{s=1}^n m_s - 2k,
\eean
for some nonnegative integer $k$. 
 The dimension of $V[0]$ is positive if the sum $\sum_{s=1}^n m_s$ is even.

 \subsection{Operator $\mc A(\mu)$} 
 \label{sec WG}

The $\slt$ Weyl group $W$ consists of two elements: identity and involution $\si$.
The projective action of $W$ on  $V_m$  is given by the formula
\bea
\si: v^m_k \mapsto  (-1)^kv^m_{m-k}
\eea
for any $k$. We have $\si^2=(-1)^m$. The Weyl group $W$  acts on the tensor product $V$ diagonally.

\smallskip

Following \cite{TV}, introduce
\bea
p(\mu) \,=\, \sum_{k=0}^\infty \,e_{21}^k e_{12}^k\,\frac1{k!}\, 
\prod_{j=0}^{k-1}\,\frac 1{\mu +e_{22}-e_{11}-j}\,,
\qquad \mu\in \C.
\eea 
The series $p(\mu)$ acts on $V_m$, since only a finite number of terms acts nontrivially.
The formula for the action of $p(\mu)$ on a basis vector $v^m_k$
becomes more symmetric if $\mu$ is replaced by $\mu+\frac {\nu}2-1$, where
$\nu=m-2k$ is the weight of $v^m_k$,
\bean
\label{ppro}
p\Big(\mu+\frac {\nu}2-1\Big) v^m_k\,=\,\prod_{j=0}^{k-1}\, \frac{\mu +\frac m2-j}{\mu -\frac m2 + j} \,v^m_k\,,
\eean
see \cite[Section 2.5]{TV}. 

\smallskip
The series $p(\mu)$ acts  on the tensor product $V$ in the standard way. 
Introduce the operator 
\bean
\label{mc A}
\mc  A(\mu) \,:\,V\,\to\,  V,
\quad
v \ \mapsto\ \si p(\mu) \,v\,.
\eean
The  operator $\mc A(\mu)$ is a meromorphic function of $\mu$. For any $\nu$, we have
$\mc A(\mu) V[\nu] \subset V[-\nu]$, and
$\lim_{\mu\to \infty} \mc A(\mu) = \si$\,.  The operator $\mc A(\mu)$ may be considered as a deformation of the Weyl group
operator $\si$.

\begin{lem}
\label{lem v iso}

For  $V= \ox_{s=1}^n V_{m_s}$ as in \eqref{ten V}, denote $M=\sum_{s=1}^n m_s$. 
Assume that $\mu \notin \frac M2+\Z$.
Then for any $\nu$ the  operator
\bean
\label{A mu}
\mc A\Big(\mu +\frac {\nu}2-1\Big)\Big\vert_{V[\nu]}\,:\,V[\nu]\,\to\,V[-\nu]
\eean
is an isomorphism of vector spaces. The composition of the operator
$\mc A\Big(\mu +\frac {\nu}2-1\Big)\Big\vert_{V[\nu]}$ and the operator
\bean
\label{A-mu}
\mc A\Big(-\mu -\frac {\nu}2+1\Big)\Big\vert_{V[-\nu]}\,:\,V[-\nu]\,\to\,V[\nu]
\eean
is the scalar operator on $V[\nu]$ of multiplication by $(-1)^M \frac{\mu -\nu/2}{\mu +\nu/2}$\,.

\end{lem}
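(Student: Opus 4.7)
Setting $x = \mu + \nu/2 - 1$, the composition in question is $\mc A(-x)\mc A(x)|_{V[\nu]}$: this is the ``dynamical Weyl group unitarity'' composition, and the assertion is that it is a scalar with value $(-1)^M\tfrac{\mu-\nu/2}{\mu+\nu/2}$. I would first verify this identity on a single irreducible factor $V_m$. Each weight space $V_m[\nu]$ is one-dimensional, spanned by $v_k^m$ with $\nu=m-2k$. Combining \eqref{ppro} with $\sigma v_k^m=(-1)^k v_{m-k}^m$ gives $\mc A(\mu+\nu/2-1)v_k^m$ as an explicit rational multiple of $v_{m-k}^m\in V_m[-\nu]$. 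Applying the same recipe to $v_{m-k}^m$ with the shifted parameter $-\mu-\nu/2+1 = -\mu+2+(-\nu)/2-1$, and then $\sigma$ once more, gives $(-1)^m$ times a product of two rational factors; after cancellation this telescopes to $\tfrac{\mu-\nu/2}{\mu+\nu/2}$.

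The second and harder step is to extend to the tensor product $V=\bigotimes_s V_{m_s}$, where $p(\mu)$ does not factor because it is built from $\slt$-elements acting through the coproduct, and $V[\nu]$ is no longer one-dimensional. The cleanest route is through the multiplicativity of the dynamical Weyl group operator $\mc A(\mu)$ of \cite{TV}: on tensor products $\mc A(\mu)$ decomposes as a product of single-factor operators composed with dynamical $R$-matrix corrections. In the composition $\mc A(-x)\mc A(x)$ the two $R$-matrix factors, evaluated at opposite parameters, cancel as inverses, leaving only a contribution determined by the single-factor scalars from the first step. A bookkeeping argument using the dynamical $R$-matrix identities then reduces the tensor-product composition to the scalar $(-1)^M\tfrac{\mu-\nu/2}{\mu+\nu/2}$ on $V[\nu]$, independently of the chosen weight vector.

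The isomorphism claim follows at once: under $\mu\notin\tfrac M2+\Z$, and since any weight $\nu$ of $V$ satisfies $|\nu|\le M$, both $\mu\pm\nu/2$ are nonzero, so the scalar is finite and nonzero, forcing $\mc A(\mu+\nu/2-1)|_{V[\nu]}$ to be invertible. The main obstacle is the tensor-product step: if one wants to avoid the structural dynamical-$R$-matrix framework, one can instead prove directly that the composition is \emph{a priori} scalar on each $V[\nu]$ (for instance by exhibiting a large enough weight-preserving commuting subalgebra that it commutes with, or by using Weyl-group intertwining of $\B$-modules) and then evaluate it on a single convenient vector, such as a highest-weight tensor in all factors but one, to identify the scalar.
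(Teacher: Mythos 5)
Your single-factor computation is in the right spirit, but the place where you locate the difficulty — and the machinery you propose for it — is not where the paper goes, and the detour you sketch has a real gap. The operator $\mc A(\mu)=\si p(\mu)$ is built entirely out of $e_{12}$, $e_{21}$, $e_{11}-e_{22}$ acting on $V$ through the coproduct, i.e.\ out of the \emph{global} $\slt$-action. Consequently $\mc A(\mu)$ preserves the isotypic decomposition $V\cong\oplus_m V_m^{\oplus d_m}$ of the tensor product as an abstract $\slt$-module, and the single-irreducible computation via \eqref{ppro} applies verbatim to each summand; no dynamical $R$-matrix multiplicativity is needed, and your worry that ``$p(\mu)$ does not factor because it is built from $\slt$-elements acting through the coproduct'' points in exactly the wrong direction — the coproduct action is precisely what makes the reduction to irreducibles immediate. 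This is the paper's (one-line) argument: the only role of the tensor structure is that the highest weights occurring are $m=M-2k$, so that $\tfrac m2+\Z=\tfrac M2+\Z$ (which is why the hypothesis is $\mu\notin\tfrac M2+\Z$ and why every factor $\mu\pm\tfrac m2\mp j$ in \eqref{ppro} is nonzero, giving the isomorphism directly, independently of the composition statement) and that $(-1)^m=(-1)^{M-2k}=(-1)^M$ (which is where the sign comes from). If you insist on the $R$-matrix route, the claimed cancellation of the two $R$-matrix factors ``evaluated at opposite parameters'' is asserted, not proved, and carries the whole weight of the argument.

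Two further cautions. First, the nontrivial content of the composition formula is not the reduction to irreducibles but the fact that the resulting scalar on $V_m[\nu]$ is \emph{independent of $m$} (equal to $(-1)^M\tfrac{\mu-\nu/2}{\mu+\nu/2}$ for every isotypic component meeting $V[\nu]$); your phrase ``after cancellation this telescopes'' is exactly the step that must be written out, and the paper does not attempt it — it cites \cite[Theorem 10]{TV}. You should either carry out the telescoping in full and sanity-check it on the smallest case ($V=V_1$, $\nu=1$, where the claimed scalar is $-\tfrac{\mu-1/2}{\mu+1/2}$), being careful about how the parameter $-\mu-\tfrac\nu2+1$ matches the normalization in \eqref{ppro}, or cite the reference as the paper does. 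Second, your fallback of proving the composition is ``a priori scalar on $V[\nu]$'' and evaluating on one vector does not work as stated: the composition is automatically scalar on each one-dimensional $V_m[\nu]$, but different isotypic components could a priori carry different scalars, and ruling that out is the whole point.
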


\begin{proof} 
The  $\slt$ irreducible decomposition $V=\oplus _m V_m$ of the tensor product $V$
has the highest weights  $m$  of  the form  $m=M-2k$ for $k\in \Z_{\geq 0}$, only.
Now \eqref{A mu} is an isomorphism by formula \eqref{ppro}.
The statement on the composition is \cite[Theorem 10]{TV}.
\end{proof}

\begin{rem}
The operator $\mc A(\mu)$ is the (only) nontrivial element of the $\slt$ dynamical Weyl group of $V$, see definitions in
\cite{EV}.

\end{rem}

\subsection{KZB operators}
Introduce the following elements of $\glt\otimes\glt$\,,
\bea
&&
 \Om_{12} = e_{12}\otimes e_{21},   \phantom{aaaaaaaa}  \qquad \Om_{21} = e_{21}\otimes e_{12}, 
 \\
&&
\Om_0=e_{11}\otimes e_{11} + e_{22}\otimes e_{22},   \qquad
\Om = \Om_0 + \Om_{12} + \Om_{21}. 
\eea
The {\emph{KZB operators}} $H_0, H_1(z), \dots,H_n(z)$ 
are the following differential operators in variables $\la_1,\la_2$ acting on the space $\Fun$\,,
\bean
\label{tKZB} 
&&
\\
\notag
H_0 
&=& 
\frac{1}{4\pi \sqrt{-1}} (\partial_{\la_1}^2 + \partial_{\la_2}^2) + \frac{\pi \sqrt{-1}}{4} 
\sum_{s,t=1}^n \left[ \frac{1}{2} \Om^{(s,t)}_0 + \frac{1}{\sin^2(\pi\la)}
 \left(\Om_{12}^{(s,t)} + \Om_{21}^{(s,t)} \right) \right],
\\
\notag
H_s(z)
&=&
 - (e_{11}^{(s)} \partial_{\la_1} + e_{22}^{(s)} \partial_{\la_2}) + 
\sum_{t:\, t \ne s} \left[ \pi \sqrt{-1} \frac{z_t+ z_s}{z_t-z_s} \Om^{(s,t)} - \pi \cot (\pi \la) 
\left( \Om_{12}^{(s,t)} - \Om_{21}^{(s,t)} \right) \right],
\eean
cf. formulas in Section 3.4 of \cite{JV}.  The elliptic KZB operators were introduced in \cite{FW}.
In \eqref{tKZB} we consider the trigonometric degeneration of the elliptic KZB operators.

\smallskip
By \cite{FW} the operators $H_0, H_1(z), \dots, H_n(z)$ commute and 
$\sum_{s=1}^n H_s(z) = 0$.

\begin{rem}
The differential operator $H_0$ is the {\it Hamiltonian operator of the trigonometric quantum two-particle
system with spin space} $V$.
\end{rem}

\subsection{Coefficients $C_1(x)$, $C_2(x)$}

\begin{lem} 
We have
\bea
C_1(x) = \L^0_{11}(x)+\L^0_{22}(x) - \partial_{\la_1} - \partial_{\la_2}.
\eea
Hence the coefficient $C_1(x)$ acts by zero  on $\Fun$. 
\qed
\end{lem}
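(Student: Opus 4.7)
The plan is to compute $C_1(x)$ by direct expansion of the column determinant in \eqref{trigRST}, and then to observe that each of the two resulting pieces annihilates $\Fun$ separately.

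Set $A=\der_u - \der_{\la_1} + \L_{11}$ and $B = \der_u - \der_{\la_2}+\L_{22}$, so that $\mc C = AB - \L_{21}\L_{12}$. The factor $\L_{21}\L_{12}$ is purely a multiplication operator (no $\der_u$), so it contributes only to $C_2(x)$. Expanding $AB$ and repeatedly using $[\der_u,\der_{\la_j}]=0$ together with $\der_u\L_{jk} = \L_{jk}\der_u + (\der_u \L_{jk})$, I would collect all terms and write the result in the normal form
\be
\mc C \;=\; \der_u^{\,2} \;+\; \bigl(\L_{11}+\L_{22}-\der_{\la_1}-\der_{\la_2}\bigr)\der_u \;+\; (\text{terms without }\der_u),
\ee
reading off
\be
C_1(x)\;=\;\L_{11}(x)+\L_{22}(x)-\der_{\la_1}-\der_{\la_2}.
\ee
This is a routine bookkeeping calculation; the only thing to be careful about is placing $\der_u$ on the right when collecting its coefficient.

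Next I would show that $C_1(x)$ acts by zero on $\Fun$ by splitting it into the two pieces $\L_{11}+\L_{22}$ and $\der_{\la_1}+\der_{\la_2}$. Inspecting the explicit formula \eqref{matrixL0} gives
\be
\L_{11}+\L_{22}\;=\;\pi\sk\sum_{s=1}^n\frac{z_s+x}{z_s-x}\bigl(e_{11}^{(s)}+e_{22}^{(s)}\bigr)\;-\;\pi\cot(\pi\la)\,(e_{11}-e_{22}).
\ee
Since each factor $V^s$ of $V$ is an $\slt$-module, the central element $e_{11}+e_{22}$ acts by zero on $V^s$, killing the trigonometric sum; on the zero weight subspace $V[0]$, the operator $e_{11}-e_{22}$ also acts by zero, killing the $\cot$-term. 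Hence $\L_{11}+\L_{22}$ vanishes on all of $V[0]$, and in particular on $\Fun$.

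Finally, by the identification of the Cartan subalgebra of $\slt$ with the hyperplane $\la_1+\la_2=0$, elements of $\Fun$ are meromorphic functions of $\la=\la_1-\la_2$ alone, so $(\der_{\la_1}+\der_{\la_2})$ annihilates every element of $\Fun$. Combining the two vanishing statements gives $C_1(x)\big|_{\Fun}=0$, which is the conclusion. The only real obstacle is the careful operator ordering in the expansion of the cdet, but no conceptual difficulty arises.
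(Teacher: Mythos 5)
Your proof is correct and is exactly the routine computation the paper intends: the paper states this lemma with no proof at all (just \qed), treating both the cdet expansion and the two vanishing observations (the central element $e_{11}+e_{22}$ killing the trigonometric sum, $e_{11}-e_{22}$ vanishing on $V[0]$, and $\der_{\la_1}+\der_{\la_2}$ annihilating functions of $\la=\la_1-\la_2$) as immediate. Your write-up supplies those omitted details accurately, including the correct handling of operator ordering when collecting the coefficient of $\der_u$.
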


\begin{cor} The $RST$-operator \eqref{D} has the form
\bean
\label{mc C}
\mc C=\der_u^2+C_2(x)
\eean
as an operator on $\Fun$.
\end{cor}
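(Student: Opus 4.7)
The plan is to compute the coefficient of $\partial_u$ in the expansion \eqref{D} directly from the column-determinant formula \eqref{trigRST}, and then show that the resulting expression annihilates $\Fun$ for two independent reasons: an $\slt$-structure argument on $V[0]$ and a chain-rule argument on the Cartan variable.

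First I would expand
\bea
\mc C = (\partial_u - \partial_{\la_1} + \L_{11})(\partial_u - \partial_{\la_2} + \L_{22}) - \L_{21}\L_{12},
\eea
using $\cdet = a_{11}a_{22} - a_{21}a_{12}$. The product $\L_{21}\L_{12}$ contains no $\partial_u$ and so contributes only to $C_2(x)$. In the first factor, after multiplying out and pushing every $\partial_u$ to the right, the only contributions linear in $\partial_u$ are $\L_{11}\partial_u$, $-\partial_{\la_1}\partial_u$, $-\partial_u\partial_{\la_2} = -\partial_{\la_2}\partial_u$, and $\partial_u\L_{22} = \L_{22}\partial_u + (\partial_u\L_{22})$, where the parenthesized $u$-derivative of $\L_{22}$ is $\partial_u$-free and is absorbed into $C_2(x)$. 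Reading off the coefficient of $\partial_u$ then yields
\bea
C_1(x) = \L_{11}(x) + \L_{22}(x) - \partial_{\la_1} - \partial_{\la_2},
\eea
in agreement with the stated formula.

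Next I would verify that each summand of $C_1(x)$ annihilates $\Fun$. Since every $V^s$ is an $\slt$-module, the central element $e_{11}^{(s)}+e_{22}^{(s)}$ acts by zero, and on the zero weight subspace the total operator $e_{11}-e_{22}$ is also zero. Reading \eqref{matrixL0},
\bea
\L_{11}(x)+\L_{22}(x) = \pi\sqrt{-1}\sum_{s=1}^n \frac{z_s+x}{z_s-x}\bigl(e_{11}^{(s)}+e_{22}^{(s)}\bigr) + \pi\cot(\pi\la)(e_{22}-e_{11}),
\eea
and both summands vanish on $V[0]$-valued functions. Functions in $\Fun$ depend only on $\la = \la_1-\la_2$, hence the chain rule gives $(\partial_{\la_1}+\partial_{\la_2})f = 0$ for every $f\in\Fun$. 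Combining the two observations, $C_1(x)$ acts as zero on $\Fun$.

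The argument is essentially bookkeeping; the only point requiring genuine care is the noncommutativity of $\partial_u$ with the $u$-dependent entries $\L_{ij}$, but the resulting commutator $[\partial_u,\L_{22}]$ is $\partial_u$-free and therefore affects only $C_2(x)$, not $C_1(x)$.
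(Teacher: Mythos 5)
Your proposal is correct and follows the same route the paper takes: it establishes the preceding Lemma's formula $C_1(x)=\L_{11}(x)+\L_{22}(x)-\partial_{\la_1}-\partial_{\la_2}$ by expanding the column determinant, and then observes that this vanishes on $\Fun$ because $e_{11}^{(s)}+e_{22}^{(s)}$ acts by zero on $\slt$-modules, $e_{11}-e_{22}$ acts by zero on $V[0]$, and $\partial_{\la_1}+\partial_{\la_2}$ annihilates functions of $\la=\la_1-\la_2$. The paper leaves these steps implicit (the Lemma is stated without proof), so your write-up simply supplies the bookkeeping the authors omitted, including the correct handling of the commutator $[\partial_u,\L_{22}]$.
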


\begin{thm} [\cite{ThV}]  
\label{S_2(X)} 

We have 
\bean
\label{C2}
&&
\\
&&
\notag
C_2(x) = -2\pi \sqrt{-1} H_0 - \sum_{s=1}^n 
\left[ 2\pi \sqrt{-1}  \frac{H_s(z)}{1-x/z_s} +  
4\pi^2 \Big(-\frac{c_2^{(s)}}{1-x/z_s} + \frac{c_2^{(s)}}{(1-x/z_s)^2}\Big) \right],
\eean
where $c_2 = e_{11}e_{22} - e_{12}e_{21} + e_{11}$ is a central element of $\glt$.
\end{thm}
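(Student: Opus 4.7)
The plan is to compute $C_2(x)$ directly from the definition \eqref{trigRST} by expanding the column determinant as a polynomial in $\der_u$ and identifying the coefficient of $\der_u^0$. Writing $\mc C=(\der_u-\der_{\la_1}+\L_{11})(\der_u-\der_{\la_2}+\L_{22})-\L_{21}\L_{12}$ and moving every $\der_u$ past the matrix entries via $\der_u\L_{jk}=(\der_u\L_{jk})+\L_{jk}\der_u$, the $\der_u$-free term is
\begin{equation*}
C_2(x)=(\der_u\L_{22})+\L_{11}\L_{22}-\L_{21}\L_{12}-(\der_{\la_1}\L_{22})-\L_{22}\der_{\la_1}-\L_{11}\der_{\la_2}+\der_{\la_1}\der_{\la_2}.
\end{equation*}

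Next I would simplify this expression using three reductions available on $\Fun$: on each $\slt$-tensor factor the central element $e_{11}+e_{22}$ acts by zero; on $V[0]$ the total weight $e_{11}-e_{22}$ acts by zero, which kills several $\pi\cot(\pi\la)$ contributions coming from $\L_{11}+\L_{22}$-type combinations; and $\der_{\la_1}+\der_{\la_2}$ annihilates every function in $\Fun$, so in particular $\der_{\la_1}\der_{\la_2}$ collapses to $-\tfrac14(\der_{\la_1}^2+\der_{\la_2}^2)$, matching the kinetic term of $H_0$. Then I would organize the $x$-dependence. Using $\tfrac{z_s+x}{z_s-x}=-1+\tfrac{2}{1-x/z_s}$, a diagonal square produces both $\tfrac{1}{1-x/z_s}$ and $\tfrac{1}{(1-x/z_s)^2}$; off-diagonal products for $s\ne t$ split by partial fractions into simple poles at $x=z_s$ and $x=z_t$; and $\der_u\bigl(\tfrac{z_s+x}{z_s-x}\bigr)$ produces exactly the combination $-\tfrac{1}{1-x/z_s}+\tfrac{1}{(1-x/z_s)^2}$ appearing in the statement. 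Combined with $\der_{\la_1}(\pi\cot(\pi\la))=-\pi^2/\sin^2(\pi\la)$ and $\cot^2(\pi\la)+1=1/\sin^2(\pi\la)$, the $x$-independent part should reorganize into $-2\pi\sqrt{-1}H_0$, the double-pole residues at $x=z_s$ into $4\pi^2 c_2^{(s)}$, and the simple poles at $x=z_s$ into $-2\pi\sqrt{-1}H_s(z)-4\pi^2 c_2^{(s)}$.

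The main obstacle is the representation-theoretic bookkeeping: tracking which rational function of $x$ multiplies which element of $\End(V)$ so that the diagonal $s=t$ contributions reassemble precisely into the Casimir $c_2=e_{11}e_{22}-e_{12}e_{21}+e_{11}$, while the bilinear $s\ne t$ contributions fit into the $\Om^{(s,t)}_0+\Om^{(s,t)}_{12}+\Om^{(s,t)}_{21}$ combinations that appear in $H_s(z)$ and $H_0$. Once this matching is carried out term by term and compared with \eqref{tKZB}, the claimed identity follows.
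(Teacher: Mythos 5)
Your proposal is correct in outline, but it takes a genuinely different route from the paper: the paper's entire proof is a citation, observing that \eqref{C2} is the trigonometric degeneration of the elliptic identity in \cite[Theorem 4.9]{ThV}, whereas you give a self-contained direct computation, expanding the column determinant \eqref{trigRST}, isolating the $\der_u$-free coefficient, and reassembling it into the KZB operators using the reductions valid on $\Fun$ (the central element $e_{11}+e_{22}$ acting by zero, $e_{11}-e_{22}$ acting by zero on $V[0]$, and $\der_{\la_1}+\der_{\la_2}$ annihilating functions of $\la=\la_1-\la_2$) together with the partial-fraction identities for $\frac{z_s+x}{z_s-x}$ and the relation $\der_{\la_1}\big(\pi\cot(\pi\la)\big)=-\pi^2/\sin^2(\pi\la)$. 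Your expansion of $C_2(x)$ as $(\der_u\L_{22})+\L_{11}\L_{22}-\L_{21}\L_{12}-(\der_{\la_1}\L_{22})-\L_{22}\der_{\la_1}-\L_{11}\der_{\la_2}+\der_{\la_1}\der_{\la_2}$ is the correct $\der_u^0$ coefficient, and the identification of $\der_u\big(\tfrac{z_s+x}{z_s-x}\big)$ with the combination $-\tfrac{1}{1-x/z_s}+\tfrac{1}{(1-x/z_s)^2}$ is exactly what produces the $c_2^{(s)}$ terms in \eqref{C2}. One small slip: on $\Fun$ one has $\der_{\la_1}\der_{\la_2}=-\tfrac12(\der_{\la_1}^2+\der_{\la_2}^2)$, not $-\tfrac14(\der_{\la_1}^2+\der_{\la_2}^2)$; the coefficient $-\tfrac12$ is what matches the kinetic term of $-2\pi\sqrt{-1}\,H_0$, so the conclusion is unaffected once the constant is corrected. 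What each approach buys: the paper's argument is short but leans on the external elliptic result and on the reader accepting the degeneration procedure; yours is longer but verifiable entirely within the trigonometric setting, in the same spirit as the explicit calculation the paper does carry out for the universal trigonometric operator $\D$ in Section \ref{sec 4}.
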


\begin{proof}
This is the trigonometric degeneration of the elliptic version of this theorem, see 
  \cite[Theorem 4.9]{ThV}.
\end{proof}

\begin{cor} The dynamical Bethe algebra of $\Fun$ is generated by the identity operator and
the KZB operators $H_0, H_1(z), \dots, H_n(z)$.

\end{cor}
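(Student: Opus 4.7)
The plan is to combine the preceding lemma, which shows that $C_1(x)\equiv 0$ on $\Fun$, with the explicit expansion of $C_2(x)$ given in Theorem~\ref{S_2(X)}. By definition, the dynamical Bethe algebra is generated by the identity and the operators $\{C_j(a)\}$, so the vanishing of $C_1$ reduces the statement to the claim that the algebra generated by the identity together with $\{C_2(a)\mid a\in\C\setminus\{z_1,\dots,z_n\}\}$ coincides with the algebra generated by the identity and $H_0,H_1(z),\dots,H_n(z)$.

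For the forward inclusion I would use that $c_2\in U(\glt)$ is a central element, so it acts on each irreducible factor $V_{m_s}$ as a scalar; hence $c_2^{(s)}$ acts on $V$, and thus on $\Fun$, as a scalar multiple of the identity. Plugging this into the formula of Theorem~\ref{S_2(X)} exhibits every $C_2(a)$ as an explicit $\C$-linear combination of the identity, $H_0$, and $H_1(z),\dots,H_n(z)$.

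For the reverse inclusion I would rewrite the formula as a partial-fraction decomposition in $x$,
\be
C_2(x)\,=\,A_0\,+\,\sum_{s=1}^n\frac{A_s}{1-x/z_s}\,+\,\sum_{s=1}^n\frac{B_s}{(1-x/z_s)^2},
\ee
with $A_0=-2\pi\sqrt{-1}\,H_0$, $A_s=-2\pi\sqrt{-1}\,H_s(z)+4\pi^2\,c_2^{(s)}$, and $B_s=-4\pi^2\,c_2^{(s)}$. The $2n+1$ rational functions $1$, $(1-x/z_s)^{-1}$, $(1-x/z_s)^{-2}$ for $s=1,\dots,n$ are $\C$-linearly independent, so I would choose generic points $a_1,\dots,a_{2n+1}\in\C\setminus\{z_1,\dots,z_n\}$ for which the corresponding evaluation matrix is invertible. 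Solving the resulting linear system expresses each operator $A_0$, $A_s$, $B_s$ as a $\C$-linear combination of $C_2(a_1),\dots,C_2(a_{2n+1})$; using once more that $c_2^{(s)}$ is a scalar, one then recovers $H_0$ and each $H_s(z)$ as $\C$-linear combinations of the $C_2(a_j)$ and the identity.

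I do not expect any serious obstacle in this argument; the only point that demands care is the linear independence of the partial-fraction basis, which guarantees that finitely many evaluations of $C_2$ suffice to isolate the individual KZB operators.
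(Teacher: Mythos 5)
Your argument is correct and is exactly the reasoning the paper leaves implicit: the corollary is stated without proof immediately after Theorem~\ref{S_2(X)}, and the intended justification is precisely that $C_1(x)$ vanishes on $\Fun$, that $c_2^{(s)}$ acts as the scalar $-m_s(m_s+2)/4$, and that the partial-fraction form of \eqref{C2} lets one pass back and forth between the family $\{C_2(a)\}$ and the finite set $H_0,H_1(z),\dots,H_n(z)$ by evaluating at finitely many points. Your extra care about the linear independence of $1$, $(1-x/z_s)^{-1}$, $(1-x/z_s)^{-2}$ is the right (and only) point needing verification for the reverse inclusion, and it holds.
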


The commutativity of the KZB operators and formulas \eqref{mc C}, 
\eqref{C2} imply the commutativity $[C_2(a),C_2(b)]=0$  independently of  Theorem  
\ref{thm RST}.

\subsection{Weyl group invariance}
\label{sec Wgi}

The Weyl group acts on  $V[0]$
as explained in Section \ref{sec WG}.
Hence the Weyl group acts on $\Fun$  by the formula

\bean
\label{siF}
\si : \psi(\la) \mapsto \si (\psi (-\la)), \qquad
\psi \in\Fun\,.
\eean

\smallskip
\noindent
This action extends to a Weyl group action on
 $\End(\Fun)$, where for  
$T \in \End(\Fun)$ the operator $\si(T)$ is defined as the product 
$\si T\si^{-1}$ of the three elements of $\End(\Fun)$.

\begin{lem} [\cite{ThV}]
\label{weyl inv}
For any  $a\in \C-\{z_1,\dots,z_n\}$ the operator $C_2(a) \in \End(\Fun)$ is Weyl group invariant.
\end{lem}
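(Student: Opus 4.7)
The plan is to exploit Theorem \ref{S_2(X)}, which expresses $C_2(a)$ as a linear combination, with coefficients that are scalar functions of $a$ and $z$ only, of the KZB operators $H_0, H_1(z),\dots,H_n(z)$ together with the central elements $c_2^{(s)}\in U(\glt^{(s)})$. Since the conjugation action $T\mapsto \si T\si^{-1}$ on $\End(\Fun)$ is $\C$-linear and fixes all such $\la$-independent scalar coefficients, it suffices to verify that each of $c_2^{(s)}$, $H_0$, and $H_s(z)$ is individually Weyl invariant.

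For $c_2^{(s)}$ this is immediate: it is central in $U(\glt^{(s)})$, hence commutes with the operator by which $\si$ acts on $V_{m_s}$, and it is independent of $\la$, so the full Weyl action $\psi(\la)\mapsto \si(\psi(-\la))$ commutes with multiplication by $c_2^{(s)}$.

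For the KZB operators I would check invariance term by term in the formulas \eqref{tKZB}, using the following parities under $\si$: the involution swaps $e_{11}\leftrightarrow e_{22}$ and $e_{12}\leftrightarrow e_{21}$ on each tensor factor of $V$, and sends $\la\mapsto -\la$ (equivalently $\la_1\leftrightarrow\la_2$, $\der_{\la_1}\leftrightarrow\der_{\la_2}$). Consequently $\Om_0$, $\Om_{12}+\Om_{21}$ and the full Casimir $\Om$ are $\si$-invariant in $\glt\otimes\glt$, while $\Om_{12}-\Om_{21}$ is $\si$-antiinvariant; $\sin^2(\pi\la)$ and $\der_{\la_1}^2+\der_{\la_2}^2$ are even in $\la$, while $\cot(\pi\la)$ is odd; and $e_{11}^{(s)}\der_{\la_1}+e_{22}^{(s)}\der_{\la_2}$ is symmetric under the simultaneous swap. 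Matching parities in each summand of $H_0$ and of $H_s(z)$, every contribution is Weyl invariant; the only pairing where signs genuinely have to conspire to cancel is the term $-\pi\cot(\pi\la)(\Om_{12}^{(s,t)}-\Om_{21}^{(s,t)})$ in $H_s(z)$, where odd meets odd.

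The proof is essentially bookkeeping, and the main obstacle is simply organizing the sign analysis so that no factor is dropped. The cleanest presentation writes each summand as (scalar function of $\la$)\,$\times$\,(element of $U(\glt)^{\otimes n}$ together with first-order operators in $\la_1,\la_2$) and records the parity of each factor under $\si$; all other arithmetic then collapses. This is the trigonometric counterpart of the elliptic computation carried out in \cite{ThV}.
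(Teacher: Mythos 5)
Your proof follows the paper's route exactly: it reduces via formula \eqref{C2} to the Weyl invariance of the KZB operators $H_0,H_1(z),\dots,H_n(z)$ (plus the evident invariance of the central elements $c_2^{(s)}$), which the paper simply cites from \cite{FW} while you verify it by the parity bookkeeping you describe. One small correction to that bookkeeping: on each factor $V_{m_s}$ the involution conjugates $e_{12}$ to $-e_{21}$ and $e_{21}$ to $-e_{12}$ rather than performing a plain swap, but since $e_{12},e_{21}$ enter $H_0$ and $H_s(z)$ only through the quadratic combinations $\Om_{12},\Om_{21}$, the two signs cancel and your conclusions that $\Om_{12}+\Om_{21}$ is invariant and $\Om_{12}-\Om_{21}$ is anti-invariant stand.
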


\begin{proof}
By \cite{FW} the KZB operators  $H_0,H_1(z),\dots,H_n(z)$ are Weyl group invariant.
The lemma  follows from formula \eqref{C2}.
\end{proof}

\section{Eigenfunctions of $H_0$}
\label{sec 3}

\subsection{Trigonometric Gaudin operators}

The trigonometric $r$-matrix is defined by 
\bean
\label{r-matrix}
r(x) = \frac{\Om_+ x + \Om_-}{x-1},
\eean
where
$\Om_+ = \frac12 \Om_0 + \Om_{12}$,\ $\Om_- = \frac12 \Om_0 + \Om_{21}$.

For $\mu \in \C$ the trigonometric Gaudin operators acting on $V$ are defined as 
\bean
\label{trig Gaudin} 
\K_s( z,\mu) = \frac{\mu}{2}\,(e_{11}-e_{22})^{(s)} + \sum_{t:\, t \ne s} r^{(s,t)} (z_s/ z_t), \qquad s=1,\dots,n.
\eean
 Each operator $\K_s(z,\mu)$ preserves each of  the weight subspaces $V[\nu]$
and
 \bea
 [\K_s(z,\mu), \K_t( z,\mu)] = 0
 \eea
  for all $s,t$, see \cite{Ch, EFK}.

\subsection{Dynamical Bethe algebra of  $E(\mu)$} 
\label{sec BE}

Let 
\bea
\La=e^{-2\pi \sqrt{-1} \la}\,, \qquad\on{where}\   \la=\la_1-\la_2\,.
\eea
Let $\A$ be the algebra of functions in $\la$, which can be represented 
as meromorphic functions of $\La$ with poles only at the set $\{\La=1\}$.

\smallskip

For $\mu\in\C$ introduce  the $\A$-module $\A[\mu]$ 
of functions of the form
$e^{\pi \sqrt{-1} \mu\la}f,$ where $f\in\A$.  This module
is preserved by derivatives with respect to $\la_1,\la_2$. Therefore the KZB operator $H_0$ preserves the space
$\A[\mu]\otimes V[0]$.  Any $\psi \in \A[\mu]\otimes V[0]$
has the form
\bea
\psi(\la) = e^{\pi \sqrt{-1} \mu\la}\sum_{k=0}^\infty \La^k \psi^k, \qquad \psi^k\in V[0].
\eea

\begin{thm} [\cite{FV2}]
\label{prop}
Let $\mu \not\in\Z_{>0}$.
Then for any nonzero $v \in V[0]$, there exists a unique $\psi \in \A[\mu] \ox V[0]$ such that 
\bea
H_0\, \psi\, = \,\epsilon \,\psi\,,
\eea
for some $\epsilon\in\C$ and  $\psi^0=v$. Moreover,
$\epsilon = \,\pi \sqrt{-1}\, \frac{\mu^2}2$\,.

\end{thm}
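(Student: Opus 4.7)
The plan is to introduce the coordinate $\La = e^{-2\pi\sqrt{-1}\la}$ and convert $H_0\psi=\epsilon\psi$ into a linear ODE on $V[0]$-valued functions of $\La$ with regular singular points at $0,1,\infty$, which I then solve by a Frobenius expansion at $\La=0$. First I restrict $H_0$ to $V[0]$-valued functions: because $\sum_s e_{jj}^{(s)}$ annihilates $V[0]$, the $\Om_0$ piece of $H_0$ vanishes. With $E_{jk}:=\sum_s e_{jk}^{(s)}$ and $\Om:=E_{12}E_{21}+E_{21}E_{12}$ (an operator preserving the finite-dimensional $V[0]$), and using $\la_1+\la_2=0$ so that $\partial_{\la_1}^2+\partial_{\la_2}^2 = 2\partial_\la^2$, the operator takes the form
$$H_0 \;=\; \tfrac{1}{2\pi\sqrt{-1}}\,\partial_\la^2 \;+\; \tfrac{\pi\sqrt{-1}}{4\sin^2(\pi\la)}\,\Om .$$
Using $\partial_\la=-2\pi\sqrt{-1}\La\partial_\La$ and $1/\sin^2(\pi\la)=-4\La/(1-\La)^2$, and plugging in the ansatz $\psi = e^{\pi\sqrt{-1}\mu\la}\Phi(\La) = \La^{-\mu/2}\Phi(\La)$, the equation $H_0\psi=\epsilon\psi$ becomes
$$\bigl[(\La\partial_\La-\tfrac{\mu}{2})^2 - \tfrac{\epsilon}{2\pi\sqrt{-1}}\bigr]\,\Phi \;=\; \frac{\La}{2(1-\La)^2}\,\Om\,\Phi .$$

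Matching the $\La^0$ coefficient with $\Phi(0)=v\ne 0$ immediately forces $\epsilon=\pi\sqrt{-1}\mu^2/2$, and the equation simplifies to $\La\partial_\La(\La\partial_\La-\mu)\Phi = \tfrac{\La}{2(1-\La)^2}\Om\Phi$. Expanding $\Phi=\sum_{k\ge 0}\La^k\psi^k$ and using $\La/(1-\La)^2=\sum_{n\ge 1}n\La^n$, matching coefficients of $\La^k$ produces the recursion
$$2k(k-\mu)\,\psi^k \;=\; \sum_{j=0}^{k-1}(k-j)\,\Om\,\psi^j , \qquad k\ge 1 .$$
Since $\mu\notin\Z_{>0}$, the prefactor $2k(k-\mu)$ is nonvanishing for every $k\ge 1$, so the sequence $\{\psi^k\}_{k\ge 0}$ is determined uniquely from $\psi^0=v$. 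This gives existence and uniqueness of the formal Frobenius solution together with the claimed value of $\epsilon$.

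The main remaining obstacle is analytic: showing the formal series $\Phi(\La)$ defines an honest element of $\A\ox V[0]$. Since $\Om$ is bounded on the finite-dimensional $V[0]$ and the denominator $2k(k-\mu)$ grows quadratically, the power series $\Phi(\La)$ has radius of convergence at least $1$, so $\Phi$ is holomorphic on $|\La|<1$. Meromorphic continuation to the complement of $\{\La=1\}$ follows from standard Frobenius theory applied to the ODE above, whose only singular points on the Riemann sphere are the regular points $\La=0,1,\infty$. Using the $\slt$-decomposition of $V$, the eigenvalues of $\Om$ on $V[0]$ are seen to be $m(m+2)/2$ on each $V_m[0]$, and the corresponding indicial roots at $\La=1$ are the integers $-m/2$ and $m/2+1$; an analogous computation at $\La=\infty$ rules out branch cuts. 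The delicate point is confirming that the resonant logarithmic terms do not actually appear, which is handled by the standard Frobenius procedure applied to this specific equation.
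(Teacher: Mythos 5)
The paper itself offers no proof of this statement: it is quoted from [FV2] (cf.\ [JV]), so there is no internal argument to compare against. Your reduction is the right one and most of it checks out. On $V[0]$ the $\Om_0$-part of $H_0$ indeed vanishes, $H_0=\frac1{2\pi\sqrt{-1}}\der_\la^2+\frac{\pi\sqrt{-1}}{4\sin^2(\pi\la)}\,\Om$ with $\Om=E_{12}E_{21}+E_{21}E_{12}$, the substitution $\La=e^{-2\pi\sqrt{-1}\la}$, $\psi=\La^{-\mu/2}\Phi(\La)$ yields your equation, the $\La^0$-coefficient forces $\epsilon=\pi\sqrt{-1}\,\mu^2/2$ because $v\neq0$, and the recursion $2k(k-\mu)\psi^k=\sum_{j=0}^{k-1}(k-j)\Om\psi^j$ is correct and, since $\mu\notin\Z_{>0}$, determines all $\psi^k$ uniquely from $\psi^0=v$; a crude estimate (the $\|\psi^k\|$ grow at most polynomially, since $S_k\le S_{k-1}(1+\tfrac{C}{2|k-\mu|})$ for $S_k=\sum_{j\le k}\|\psi^j\|$) gives convergence on $|\La|<1$. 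This settles uniqueness and the value of $\epsilon$.

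The gap is in the final analytic step. Membership in $\A[\mu]\ox V[0]$ requires the continuation of $\Phi$ to be single-valued with at worst a pole at $\La=1$, and this is precisely where the content of the theorem lies; it is not ``handled by the standard Frobenius procedure.'' When the indicial exponents at a regular singular point differ by a positive integer --- here $(\tfrac m2+1)-(-\tfrac m2)=m+1$ on the $V_m$-isotypic block --- Frobenius' method generically produces a logarithm, and whether its coefficient vanishes depends on the specific equation. It does vanish here, but for a special reason: on the $V_m$-isotypic component of $V[0]$ the equation is the scalar P\"oschl--Teller equation with coupling $\tfrac{m(m+2)}{4}=g(g+1)$, $g=\tfrac m2\in\Z_{\ge0}$, and trivial local monodromy at $\la\in\Z$ holds exactly for such integer couplings (it fails for generic coupling, which is why the hypothesis that the $V_m$ occurring in $V[0]$ have $m$ even matters). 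You must either carry out the Frobenius computation at $\La=1$ and check that the obstruction to log-freeness vanishes for these couplings, or invoke the trivial-monodromy property of integer-coupling Calogero--Moser operators; as written, the decisive point is asserted rather than proved. (The remark about $\La=\infty$ is superfluous: once the monodromies around $\La=0$ and $\La=1$ are trivial, the one around $\infty$ is their product and is trivial automatically.)
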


Cf. \cite{JV}.  This function $\psi$ is denoted by $\psi_v$.
\smallskip

We denote by $E[\mu]$ the vector space of functions $\psi\in \A[\mu] \ox V[0]$ such that $H_0\,\psi =\,\pi \sqrt{-1}
\, \frac{\mu^2}2\,\psi$. For more  information on this space see \cite[Section 9]{JV}.

\begin{cor}
 For $\mu \notin \Z_{>0}$, the map 
\bean
\label{is VE}
V[0]\to E[\mu], \qquad v \mapsto \psi_v,
\eean
 is an isomorphism.
\end{cor}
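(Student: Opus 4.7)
The plan is to exhibit the ``leading coefficient'' map $\psi\mapsto\psi^0$ as a two-sided inverse to $v\mapsto\psi_v$. Linearity of $v\mapsto\psi_v$ follows directly from the uniqueness part of Theorem~\ref{prop}: for $v_1,v_2\in V[0]$, both $\psi_{v_1}+\psi_{v_2}$ and $\psi_{v_1+v_2}$ lie in $\A[\mu]\ox V[0]$, are eigenfunctions of $H_0$ with eigenvalue $\pi\sqrt{-1}\mu^2/2$, and have the same leading coefficient $v_1+v_2$; hence they coincide (set $\psi_0:=0$). Injectivity is then immediate: if $\psi_v=0$ then $\psi_v^0=v=0$. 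Surjectivity is equivalent to showing that any nonzero $\psi\in E[\mu]$ has $\psi^0\ne 0$, for then Theorem~\ref{prop} forces $\psi=\psi_{\psi^0}$.

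For the latter, expand a nonzero $\psi\in E[\mu]$ as
\begin{equation*}
\psi\,=\,e^{\pi\sqrt{-1}\mu\la}\sum_{k\ge 0}\La^k\psi^k,\qquad \psi^k\in V[0],
\end{equation*}
and let $k_0\ge 0$ be minimal with $\psi^{k_0}\ne 0$; the goal is $k_0=0$. Substitute into $H_0\psi=\pi\sqrt{-1}\mu^2/2\cdot\psi$ and read off the coefficient of $\La^{k_0}e^{\pi\sqrt{-1}\mu\la}$. On functions of $\la=\la_1-\la_2$ we have $\partial_{\la_1}^2+\partial_{\la_2}^2=2\partial_\la^2$, so the kinetic piece of $H_0$ contributes $\frac{\pi\sqrt{-1}}{2}(\mu-2k_0)^2\psi^{k_0}$. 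For the $\Om_0$ term, note that on an $\slt$-module $e_{11}=h/2$ and $e_{22}=-h/2$ with $h=e_{11}-e_{22}$, so $\sum_{s,t}\Om_0^{(s,t)}=\frac{1}{2}(e_{11}-e_{22})^2$, which annihilates $V[0]$. For the off-diagonal piece, the identity $\sin^{-2}(\pi\la)=-4\La/(1-\La)^2$ shows that its power-series expansion at $\La=0$ begins at order $\La^1$, so the terms containing $\Om_{12}^{(s,t)}+\Om_{21}^{(s,t)}$ contribute at level $\La^{k_0}$ only through $\psi^{k'}$ with $k'<k_0$, which vanish by minimality. The level-$\La^{k_0}$ equation thus collapses to $(\mu-2k_0)^2\psi^{k_0}=\mu^2\psi^{k_0}$, equivalently $k_0(k_0-\mu)=0$. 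Since $k_0\in\Z_{\ge 0}$ and $\mu\notin\Z_{>0}$, the only possibility is $k_0=0$, completing the proof.

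The main obstacle is the bookkeeping of this level-$\La^{k_0}$ equation: it is essential that both $\sum_{s,t}\Om_0^{(s,t)}$ annihilates the zero weight space and that $1/\sin^2(\pi\la)$ has no constant term in its $\La$-expansion. Without these two observations, the leading-order constraint would be a genuine operator equation on $\psi^{k_0}$ rather than the clean scalar dichotomy $k_0\in\{0,\mu\}$, and the hypothesis $\mu\notin\Z_{>0}$ would not be strong enough to exclude $k_0\ge 1$. Conversely, these two facts reduce the entire argument to the elementary observation that $(\mu-2k_0)^2=\mu^2$ has no positive integer solution unless $\mu$ itself is a positive integer.
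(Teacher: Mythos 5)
Your proof is correct. The paper itself offers no argument for this corollary: it is stated immediately after Theorem \ref{prop} and is treated as a direct consequence of the existence--uniqueness statement there. Strictly speaking, Theorem \ref{prop} as quoted gives injectivity and well-definedness of $v\mapsto\psi_v$ at once, but surjectivity requires the additional fact that a nonzero $\psi\in E[\mu]$ cannot have vanishing leading coefficient $\psi^0$ --- and that is exactly the point you supply. Your leading-order analysis is sound: the identities $\partial_{\la_1}^2+\partial_{\la_2}^2=2\partial_\la^2$ on functions of $\la$, the vanishing of $\sum_{s,t}\Om_0^{(s,t)}=e_{11}^2+e_{22}^2$ on $V[0]$, and the expansion $\sin^{-2}(\pi\la)=-4\La/(1-\La)^2=O(\La)$ all check out, and they reduce the lowest-order equation to the indicial relation $4k_0(k_0-\mu)=0$, which under $\mu\notin\Z_{>0}$ forces $k_0=0$. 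This is the same computation that underlies the recursion proving Theorem \ref{prop} in \cite{FV2, JV} (the coefficients $\psi^k$, $k\ge 1$, are determined from $\psi^0$ precisely because $(\mu-2k)^2\ne\mu^2$ for $k\in\Z_{>0}$), so your argument is not a different route but rather an explicit filling-in of the step the paper leaves implicit.
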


\begin{thm} [\cite{JV}] 
\label{H_s eigen}
Let $\mu \notin \Z_{>0}$. Then for $s=1,\dots,n$, the KZB operators $H_s(z)$ preserve the space $E[\mu]$. 
Moreover, for any $v \in V[0]$ we have
\bea
H_s(z) \psi_v = \psi_w, 
\eea
where $w=-2\pi \sqrt{-1} \, \K_s (z,\mu)\, v$.
\end{thm}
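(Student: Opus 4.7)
The plan is to reduce the theorem to a leading-coefficient computation, exploiting the rigidity of $E[\mu]$ guaranteed by Theorem \ref{prop}.

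First I would establish that $H_s(z)$ preserves $E[\mu]$. Since $H_s(z)$ commutes with $H_0$ by \cite{FW} and $E[\mu]$ is the $\pi\sqrt{-1}\mu^2/2$-eigenspace of $H_0$ inside $\A[\mu]\ox V[0]$, it suffices to show that $H_s(z)$ preserves $\A[\mu]\ox V[0]$. On the coefficient side, each operator entering \eqref{tKZB} is weight-zero for $\slt$: $\Om^{(s,t)}$ is $\glt$-invariant, $\Om_{12}^{(s,t)}-\Om_{21}^{(s,t)}$ has $\slt$-weight zero by a direct bracket check, and $e_{jj}^{(s)}$ commutes with $e_{11}-e_{22}$. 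Hence $V[0]$ is preserved. On the scalar side, $\pi\cot(\pi\la)=\pi\sqrt{-1}\,(1+\La)/(1-\La)\in\A$, and $\der_{\la_j}$ send $e^{\pi\sqrt{-1}\mu\la}f$ to another element of $\A[\mu]$ since $\der_\la(e^{\pi\sqrt{-1}\mu\la}f)=e^{\pi\sqrt{-1}\mu\la}(\pi\sqrt{-1}\mu f+f')$ with $f'\in\A$.

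Next, by Theorem \ref{prop} and the isomorphism \eqref{is VE}, any element of $E[\mu]$ is uniquely of the form $\psi_{w'}$ for some $w'\in V[0]$, and $w'$ is recovered as the $\La^0$-coefficient of $e^{-\pi\sqrt{-1}\mu\la}\psi_{w'}$. So it is enough to compute the leading term of $H_s(z)\psi_v$ and match it with $w=-2\pi\sqrt{-1}\K_s(z,\mu)v$. Write $\psi_v=e^{\pi\sqrt{-1}\mu\la}\sum_{k\ge 0}\La^k\psi_v^k$ with $\psi_v^0=v$, and use the expansion $\pi\cot(\pi\la)=\pi\sqrt{-1}+O(\La)$. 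After the $\slt$-reduction $\la_1+\la_2=0$, the derivative piece $-(e_{11}^{(s)}\der_{\la_1}+e_{22}^{(s)}\der_{\la_2})$ becomes $-(e_{11}-e_{22})^{(s)}\der_\la$ at leading order and contributes $-\pi\sqrt{-1}\mu\,(e_{11}-e_{22})^{(s)}v$, while the remaining terms contribute $\pi\sqrt{-1}\sum_{t\ne s}\bigl[\tfrac{z_t+z_s}{z_t-z_s}\Om^{(s,t)}-(\Om_{12}^{(s,t)}-\Om_{21}^{(s,t)})\bigr]v$ to the $\La^0$ term.

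Comparing with $-2\pi\sqrt{-1}\K_s(z,\mu)v=-\pi\sqrt{-1}\mu\,(e_{11}-e_{22})^{(s)}v-2\pi\sqrt{-1}\sum_{t\ne s}r^{(s,t)}(z_s/z_t)v$, everything reduces to the purely algebraic identity
\begin{equation*}
\frac{z_t+z_s}{z_t-z_s}\,\Om-(\Om_{12}-\Om_{21})\;=\;-2\,r(z_s/z_t),
\end{equation*}
which is a routine check using $\Om=\Om_0+\Om_{12}+\Om_{21}$ and $\Om_\pm=\tfrac12\Om_0+\Om_{12/21}$; explicitly, both sides equal $\bigl((z_s+z_t)\Om_0+2z_s\Om_{12}+2z_t\Om_{21}\bigr)/(z_t-z_s)$. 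The main obstacle is therefore not conceptual but clerical: one must correctly interpret the $\glt$-derivatives $\der_{\la_1},\der_{\la_2}$ under the $\slt$-constraint, keep careful track of signs and factors of $\pi\sqrt{-1}$ when expanding $\cot$ in powers of $\La$, and verify the $r$-matrix identity; once these are in hand, the uniqueness in Theorem \ref{prop} forces $H_s(z)\psi_v=\psi_w$.
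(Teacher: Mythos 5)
Your proposal is correct, but note that the paper does not prove this statement at all: Theorem \ref{H_s eigen} is imported from \cite{JV} and stated without proof, so there is no internal argument to compare against. Your self-contained derivation is the natural one and all of its steps check out: $H_s(z)$ preserves $\A[\mu]\ox V[0]$ because its operator coefficients have $\slt$-weight zero and $\pi\cot(\pi\la)=\pi\sqrt{-1}\,(1+\La)/(1-\La)\in\A$; commutation with $H_0$ then gives preservation of $E[\mu]$; and the isomorphism \eqref{is VE} reduces the identity $H_s(z)\psi_v=\psi_w$ to matching $\La^0$-coefficients. I verified the leading-order bookkeeping: the derivative term contributes $-\pi\sqrt{-1}\mu\,(e_{11}-e_{22})^{(s)}v$, the remaining terms contribute $\pi\sqrt{-1}\sum_{t\ne s}\bigl[\tfrac{z_t+z_s}{z_t-z_s}\Om^{(s,t)}-(\Om_{12}^{(s,t)}-\Om_{21}^{(s,t)})\bigr]v$, and the $r$-matrix identity you state holds, both sides equaling $\bigl((z_s+z_t)\Om_0+2z_s\Om_{12}+2z_t\Om_{21}\bigr)/(z_t-z_s)$. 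The only point worth making explicit is that your final appeal to uniqueness uses not just Theorem \ref{prop} but the injectivity of the leading-coefficient map $E[\mu]\to V[0]$ (so that an element of $E[\mu]$ with vanishing $\La^0$-term is zero); this is exactly the content of the corollary establishing that $v\mapsto\psi_v$ is an isomorphism, so the gap is only expository.
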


\begin{thm} 
\label{B20 thm}

Let  $\mu \notin \Z_{>0}$, $V= \ox_{s=1}^n V_{m_s}$\,,  and $v\in V[0]$.  Then
\bea
C_2(x)\,\psi_v\,=\,\psi_w\,,
\eea
where
\bean
\label{B20}
\phantom{aaa}
w\,=\, (2\pi \sqrt{-1})^2 \bigg[-\frac{\mu^2}{4} + \sum_{s=1}^n \Big[\frac{m_s(m_s+2)/4+\K_s(z,\mu)}{1-x/ z_s}  
- \; \frac{m_s(m_s+2)/4}{(1-x/z_s)^2} \Big] \bigg] v\,.
\eean
\end{thm}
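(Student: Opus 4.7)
The plan is to apply the decomposition formula \eqref{C2} of Theorem~\ref{S_2(X)} to $\psi_v$ term by term. Since $v \mapsto \psi_v$ is a linear isomorphism $V[0] \to E[\mu]$ (see \eqref{is VE}), and since Theorems~\ref{prop} and~\ref{H_s eigen} together with the fact that $c_2^{(s)}$ acts by a scalar guarantee that each operator on the right-hand side of \eqref{C2} preserves $E[\mu]$, the function $C_2(x)\psi_v$ automatically lies in $E[\mu]$ for each $x$ outside $\{z_1,\dots,z_n\}$, and therefore equals $\psi_w$ for a unique $w \in V[0]$ depending rationally on $x$. It then suffices to identify $w$ by reading off the action of each piece of \eqref{C2} on the preimage $v$ under $v \mapsto \psi_v$.

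For the Hamiltonian term, Theorem~\ref{prop} gives $H_0 \psi_v = \pi\sqrt{-1}\,(\mu^2/2)\,\psi_v$, so the contribution of $-2\pi\sqrt{-1}\,H_0$ to $w$ is $\pi^2\mu^2\,v = (2\pi\sqrt{-1})^2(-\mu^2/4)\,v$, matching the first summand of \eqref{B20}. For the trigonometric KZB operators, Theorem~\ref{H_s eigen} gives $H_s(z)\psi_v = \psi_{-2\pi\sqrt{-1}\K_s(z,\mu)v}$, so each term $-(2\pi\sqrt{-1}/(1-x/z_s))\,H_s(z)$ in \eqref{C2} contributes $(2\pi\sqrt{-1})^2\,\K_s(z,\mu)/(1-x/z_s)\cdot v$ to $w$, accounting for the $\K_s(z,\mu)$ part of~\eqref{B20}.

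It remains to handle the $c_2^{(s)}$ terms. Since $c_2 = e_{11}e_{22} - e_{12}e_{21} + e_{11}$ is central in $U(\glt)$, it acts as a scalar on each irreducible $V_m$. Evaluating on the highest-weight vector $v^m_0$ via \eqref{V_m}, and using that on our $\slt$-modules $e_{11}+e_{22}=0$ (so that $e_{11}v^m_0=(m/2)v^m_0$ and $e_{12}e_{21}v^m_0 = m\,v^m_0$), one finds $c_2|_{V_m}=-m(m+2)/4$. Hence $c_2^{(s)}$ acts on $V$, and hence on $\psi_v$, as the scalar $-m_s(m_s+2)/4$, and the last two summands in \eqref{C2} produce exactly the $m_s(m_s+2)/4$ contributions in \eqref{B20}. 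The only real obstacle is the sign bookkeeping: the repeated factors of $\sqrt{-1}$ in $H_0\psi_v$ and in $H_s(z)\psi_v$, the minus sign in front of $c_2^{(s)}/(1-x/z_s)$ in \eqref{C2}, and the overall normalization $(2\pi\sqrt{-1})^2=-4\pi^2$ factored out in \eqref{B20} must all be reconciled; once this is done, the four types of summands match exactly.
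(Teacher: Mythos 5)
Your proposal is correct and follows essentially the same route as the paper: the paper's proof is a one-line appeal to formula \eqref{C2} of Theorem \ref{S_2(X)}, Theorem \ref{H_s eigen} (together with Theorem \ref{prop} for the $H_0$ term), and the fact that $c_2$ acts on $V_{m_s}$ by $-m_s(m_s+2)/4$. Your sign bookkeeping and your evaluation of $c_2$ on the highest-weight vector both check out.
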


\begin{proof}
One computes the action of $C_2(X)$ on $\psi_v$ using Theorem \ref{S_2(X)}. The computation is based on 
Theorem \ref{H_s eigen} and the fact that $c_2$ acts on $V_{m_s}$ as multiplication by 
$-\frac{m_s(m_s+2)}{4}$.
\end{proof}

By Theorem \ref{H_s eigen} the subspace $E[\mu]\subset \Fun$ is invariant with respect to the 
action of the dynamical Bethe algebra. The restriction
of the dynamical Bethe algebra  to $E[\mu]$ is called the
{\it dynamical Bethe algebra} of $E[\mu]$ and  denoted by
$\B(z; E[\mu])$.  

\vsk.2>
Notice that $E[\mu]$ 
is a finite-dimensional vector space of dimension $\dim V[0]$. The space $E[\mu]$
does not depend on $ z$,
since the KZB operator $H_0$ does not depend on $ z$. 
The algebra $\B(z; E[\mu])$ is generated by the identity operator and
the KZB operators $H_1(z), \dots, H_s(z)$ and does depend on $ z$.

\subsection{Two-particle scattering matrix}

\begin{thm}
[{\cite[Lemma 6.2]{FV2}}]

\label{thm x=A}

For  $\mu\notin\Z$,  the action \eqref{siF} of the Weyl group involution $\si$ on $\Fun$ identifies
the spaces $E[\mu]$ and $E[-\mu]$. More precisely, for any $v\in V[0]$ we have
\bean
\label{vAw}
\si (\psi_v^\mu(-\la)) \,=\, \psi_{\mc A(\mu-1)v}^{-\mu}(\la),
\eean
where $\psi_v^\mu(\la)$ is the element of $E[\mu]$ with initial term $v$ and 
$\psi_{\mc A(\mu-1)v}^{-\mu}(\la)$ is the element of $E[-\mu]$ with 
initial term $\mc A(\mu-1)v$. Here $\mc A(\mu-1):V[0]\to V[0]$ is the vector isomorphism, defined in \eqref{mc A}.

\end{thm}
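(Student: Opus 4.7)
The plan is to establish the identity in three stages, leveraging the Weyl-invariance of $H_0$ and the uniqueness statement in Theorem \ref{prop}. First, since the KZB operator $H_0$ is Weyl-invariant (Lemma \ref{weyl inv}; equivalently, the $\la\mapsto-\la$ symmetry of \eqref{tKZB} combined with the $\si$-action on $V[0]$ reproduces the Weyl action \eqref{siF} on $\Fun$), the assignment
\[
\tau\colon\psi(\la)\;\longmapsto\;\si\bigl(\psi(-\la)\bigr)
\]
commutes with $H_0$. Consequently $\tau(\psi_v^\mu)$ is an $H_0$-eigenfunction with eigenvalue $\pi\sqrt{-1}\,\mu^2/2=\pi\sqrt{-1}\,(-\mu)^2/2$, compatible with membership in $E[-\mu]$.

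The main obstacle is to show that $\tau(\psi_v^\mu)$ actually lies in $E[-\mu]$, that is, admits a representation of the form $e^{-\pi\sqrt{-1}\mu\la}\sum_{k\geq 0}\La^k\wt\psi^k$ with coefficients in $V[0]$ and poles only at $\La=1$. Naive substitution $\la\mapsto-\la$ in the defining series $\psi_v^\mu=e^{\pi\sqrt{-1}\mu\la}\sum_k\La^k\psi^k$ yields only a series in $\La^{-1}$, so the real content of this step is to exhibit, for $\psi_v^\mu$ itself, a second dual asymptotic expansion whose leading exponential is $e^{-\pi\sqrt{-1}\mu\la}$. The existence of such a dual expansion rests on the fact that the eigenvalue problem $H_0 u=\pi\sqrt{-1}\,\mu^2/2\,u$ is second order in $\der_\la$ with characteristic exponents $\pm\pi\sqrt{-1}\mu$, which are non-resonant precisely when $\mu\notin\Z$; one solves the recursion in $\La^{-1}$ term by term and matches the resulting formal solution, as a meromorphic function, to $\psi_v^\mu$. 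Substituting $\la\mapsto-\la$ then flips $\La\leftrightarrow\La^{-1}$ and converts the dual expansion into a bona fide element of $\A[-\mu]\ox V[0]$. This is the analytic content of the cited \cite[Lemma~6.2]{FV2}, and it is the hard step.

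Once $\tau(\psi_v^\mu)\in E[-\mu]$ is established, Theorem \ref{prop} forces $\tau(\psi_v^\mu)=\psi_w^{-\mu}$ for a unique $w\in V[0]$, namely the $\La^0$-coefficient of $e^{\pi\sqrt{-1}\mu\la}\tau(\psi_v^\mu)$. Equivalently, $w=\si(\chi^0)$, where $\chi^0\in V[0]$ is the leading coefficient of the $e^{-\pi\sqrt{-1}\mu\la}$-asymptote of $\psi_v^\mu(\la)$ produced in the preceding step. The linear map $v\mapsto\chi^0$ is the connection matrix relating the two asymptotic behaviors of the trigonometric Calogero--Moser eigenfunction $\psi_v^\mu$, and by \cite[Theorem~10]{TV} together with the definition \eqref{mc A} of $\mc A(\mu)$ this connection matrix is precisely $p(\mu-1)$. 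Hence $w=\si p(\mu-1)v=\mc A(\mu-1)v$, which is \eqref{vAw}.
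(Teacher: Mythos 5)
Your overall architecture is sound and in fact mirrors what actually happens behind the scenes: the paper's own ``proof'' of Theorem \ref{thm x=A} is a one-line citation to the example next to Lemma 6.2 of \cite{FV2}, so the real content is exactly the two facts you isolate --- (i) the eigenfunction $\psi_v^\mu$ admits a dual asymptotic expansion with leading exponential $e^{-\pi\sqrt{-1}\mu\la}$ when $\mu\notin\Z$, and (ii) the connection matrix relating the two expansions is $p(\mu-1)$ composed with nothing, so that after applying $\si$ one gets $\mc A(\mu-1)$. You correctly flag (i) as the hard analytic step and correctly defer it to \cite[Lemma 6.2]{FV2}, which is no worse than what the paper does.

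The gap is in step (ii). You assert that the connection matrix is ``precisely $p(\mu-1)$'' by \cite[Theorem 10]{TV} together with the definition \eqref{mc A}. That citation does not do this work: as used in this paper (see the proof of Lemma \ref{lem v iso}), \cite[Theorem 10]{TV} only establishes that the composition $\mc A(-\mu-\nu/2+1)\circ\mc A(\mu+\nu/2-1)$ is a scalar, i.e.\ it constrains what the connection matrix \emph{could} be up to the symmetry $\mu\mapsto-\mu$, but it does not compute it. Everything you prove before that point yields only $\si(\psi_v^\mu(-\la))=\psi_{S(\mu)v}^{-\mu}(\la)$ for \emph{some} linear isomorphism $S(\mu):V[0]\to V[0]$ depending meromorphically on $\mu$; the identity $S(\mu)=\mc A(\mu-1)=\si p(\mu-1)$ is the entire content of formula \eqref{vAw} and requires an actual computation --- either a term-by-term analysis of the recursion defining the coefficients $\psi^k$, or the integral/``three formulas'' representation of the eigenfunctions. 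That computation is what the example next to \cite[Lemma 6.2]{FV2} supplies, and it is the one step your argument neither performs nor correctly sources.
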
 

\begin{proof} 
Formula \eqref{vAw} is proved in the example next to Lemma 6.2 in \cite{FV2}.
\end{proof}

\section{Quantum trigonometric Gaudin model}
\label{sec 4}

\subsection{Universal differential operator}
Let $V= \ox_{s=1}^n V_{m_s}$. 
Introduce a $2\times 2$-matrix
\bean
\label{matM}
\mc M = \begin{bmatrix} \mc M_{11} \; & \; \mc M_{12} \\
                              \mc M_{21} \; & \; \mc M_{22} \\
        \end{bmatrix} = -2 \pi \sqrt{-1} \sum_{s=1}^n r^{(0,s)}(x/z_s),
\eean
where $r(x)$ is the trigonometric $r$-matrix defined in \eqref{r-matrix}.
More explicitly,
\bea
\M = \begin{bmatrix} 
2\pi \sqrt{-1} \sum_{s=1}^n \frac{1}{1-x/ z_s} e_{11}^{(s)} - \pi \sqrt{-1} e_{11} \; & 
\; 2 \pi \sqrt{-1} \sum_{s=1}^n \frac{1}{1-x/ z_s} e_{21}^{(s)} - 2 \pi \sqrt{-1} e_{21}  \\
2 \pi \sqrt{-1} \sum_{s=1}^n \frac{1}{1-x/ z_s} e_{12}^{(s)} \; & 
\; 2 \pi \sqrt{-1} \sum_{s=1}^n \frac{1}{1-x/ z_s} e_{22}^{(s)} - \pi \sqrt{-1} e_{22} \\
\end{bmatrix}.
\eea
 The {\it universal (trigonometric) differential operator} for $V$ with parameter $\mu\in\C$
is defined by the formula 
\bean
\label{Dcdet}
\D = \cdet \begin{bmatrix} 
\der_u  -\pi \sqrt{-1} \mu + \M_{11} \; & \; \M_{12} 
\\
      \M_{21} \; & \; \der_u + \pi \sqrt{-1} \mu + \M_{22} 
\\
                    \end{bmatrix}.
\eean
Write the operator $\D$ in the form
\bea
\D = \der_u^2 +D_1(x)\der_u + D_2(x),
\eea
where $D_1(x)$, $D_2(x)$ are $\End(V)$-valued functions of $x$. It is clear that
$\D$ commutes with the action on $V$ of the Cartan subalgebra of $\slt$. In particular,  it means that
$D_1(x)$, $D_2(x)$ preserve the weight decomposition of $V$.

\subsection{Coefficients and Gaudin operators}
\label{sec ctG}

\begin{thm} 
\label{Dthm}
We have  $D_1(x)=0$ and $(2\pi \sqrt{-1})^{-2} D_2(x)$ equals
\bean
\label{DD2}
\phantom{aaaqa}
-\frac{\mu^2+\mu (e_{11} - e_{22}) -  e_{11} e_{22}}4
 + \sum_{s=1}^n \Big[ \frac{m_s(m_s+2)/4 + \K_s ( Z,\mu)}{1-x/z_s} - \frac{m_s(m_s+2)/4}{(1-x/z_s)^2} \Big] .
\eean

\end{thm}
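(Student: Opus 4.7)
The strategy is to expand the column determinant \eqref{Dcdet} directly and match coefficients, grouping by order in $\der_u$ and by pole structure at each $z_s$. The only noncommutativity in the expansion comes from $[\der_u,\M_{jk}]=\M_{jk}'$ (with the prime denoting $\der_u$); since $\mu$ is a scalar commuting with $\der_u$, a straightforward expansion gives
\begin{equation*}
\D \,=\, \der_u^{2} + (\M_{11}+\M_{22})\,\der_u + \M_{22}' + \pi^2\mu^2 + \pi\sqrt{-1}\,\mu\,(\M_{11}-\M_{22}) + \M_{11}\M_{22} - \M_{21}\M_{12}.
\end{equation*}
Hence $D_1(x)=\M_{11}+\M_{22}$ and $D_2(x)$ is the remaining sum. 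The explicit formulas for $\M$ yield $\M_{11}+\M_{22} = 2\pi\sqrt{-1}\sum_s(e_{11}^{(s)}+e_{22}^{(s)})/(1-x/z_s)-\pi\sqrt{-1}(e_{11}+e_{22})$; since each $V_{m_s}$ is an $\slt$-module on which the central element $e_{11}+e_{22}$ of $\glt$ acts by zero, one has $D_1(x)=0$ on $V$.

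\textbf{Decomposition of $D_2$ by pole order.} Write $\alpha_s := 2\pi\sqrt{-1}/(1-x/z_s)$ and decompose $\M_{jj} = \sum_s\alpha_s e_{jj}^{(s)}-\pi\sqrt{-1}e_{jj}$, $\M_{12} = \sum_s\alpha_s e_{21}^{(s)}-2\pi\sqrt{-1}e_{21}$, $\M_{21} = \sum_s\alpha_s e_{12}^{(s)}$. The $x$-independent contributions to $D_2$ total $\pi^2\mu^2+\pi^2\mu(e_{11}-e_{22})-\pi^2 e_{11}e_{22}$; division by $(2\pi\sqrt{-1})^2=-4\pi^2$ reproduces the first bracket of \eqref{DD2}. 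The $(1-x/z_s)^{-2}$ coefficient receives inputs only from the diagonal $s=t$ part of $\M_{11}\M_{22}-\M_{21}\M_{12}$ and from $\M_{22}'$: using the $U(\glt)$ identity $e_{11}e_{22}-e_{12}e_{21}=c_2-e_{11}$ with eigenvalue $c_2|_{V_{m_s}}=-m_s(m_s+2)/4$, and computing $\M_{22}'$ via the chain rule $\der_u=-2\pi\sqrt{-1}\,x\,\der_x$, the $e_{11}^{(s)}$-proportional terms cancel using $e_{22}^{(s)}=-e_{11}^{(s)}$ on $V$, leaving $-m_s(m_s+2)/4$ as in \eqref{DD2}.

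\textbf{Simple-pole coefficient and $\K_s$.} At order $(1-x/z_s)^{-1}$ there are four sources: (a) the off-diagonal $s\ne t$ part of $\M_{11}\M_{22}-\M_{21}\M_{12}$, rewritten via
$\tfrac{1}{(1-x/z_s)(1-x/z_t)} = \tfrac{z_s/(z_s-z_t)}{1-x/z_t} - \tfrac{z_t/(z_s-z_t)}{1-x/z_s}$
and then symmetrized under $(s,t)\leftrightarrow(t,s)$; (b) the cross products of the $\alpha$-sums with the $x$-independent pieces $-\pi\sqrt{-1}e_{jj}$ and $-2\pi\sqrt{-1}e_{21}$; (c) the simple-pole portion of $\M_{22}'$; and (d) the $\mu$-linear term in $\pi\sqrt{-1}\mu(\M_{11}-\M_{22})$. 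Source (d) normalizes to $\tfrac{\mu}{2}(e_{11}-e_{22})^{(s)}$, the dynamical-shift part of $\K_s(z,\mu)$. Source (a), after reorganization and use of $e_{22}^{(s)}=-e_{11}^{(s)}$, equals $\sum_{t\ne s}r^{(s,t)}(z_s/z_t)$ up to an additive remainder $-\sum_{t\ne s}(e_{11}^{(s)}e_{11}^{(t)}+e_{12}^{(s)}e_{21}^{(t)})$; the off-position ($r\ne s$) part of (b) exactly cancels this remainder, while the on-position ($r=s$) part of (b) combines with (c) to deliver the constant $m_s(m_s+2)/4$. Summing (a)--(d) gives the residue $m_s(m_s+2)/4+\K_s(z,\mu)$ at $x=z_s$, matching \eqref{DD2}.

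\textbf{Main obstacle.} The expansion of the determinant, the vanishing of $D_1$, and the $x$-independent and double-pole matches are routine. The delicate part is the four-way cancellation at the simple pole: the trigonometric correction terms in the entries of $\M$ must conspire, via the partial-fraction rewrite, the symmetrization in $(s,t)$, and the relation $e_{11}^{(s)}+e_{22}^{(s)}=0$ on $V$, to absorb the non-$r$-matrix remainder in source (a) and to supply exactly the Casimir constant $m_s(m_s+2)/4$ inside each residue, leaving $\K_s(z,\mu)$ free of spurious terms.
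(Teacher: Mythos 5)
Your proposal is correct and follows essentially the same route as the paper's proof: a direct expansion of the column determinant (with the single commutator term $\M_{22}'$), the vanishing of $D_1$ via $e_{11}+e_{22}=0$, and a partial-fraction bookkeeping of $D_2$ by pole order, identifying the double-pole residue with $c_2^{(s)}$ and the simple-pole residue with $-c_2^{(s)}+\K_s(z,\mu)$. Your intermediate cancellations (the remainder $-\sum_{t\ne s}(e_{11}^{(s)}e_{11}^{(t)}+e_{12}^{(s)}e_{21}^{(t)})$ absorbed by the cross terms, and the on-position pieces supplying $m_s(m_s+2)/4$) check out and merely repackage the paper's displayed computation.
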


\begin{proof}
The proof is by straightforward calculation. We have
\bea
\D
&=&\Big(\der_u - \pi \sqrt{-1} \mu + 2\pi \sqrt{-1} \sum_{s=1}^n \frac{1}{1-x/ z_s} e_{11}^{(s)} - \pi \sqrt{-1} e_{11}\Big)
 \\
&
\times  
&
\Big(\der_u + \pi \sqrt{-1} \mu + 2 \pi \sqrt{-1} \sum_{s=1}^n \frac{1}{1-x/ z_s} e_{22}^{(s)} - \pi \sqrt{-1} e_{22}\Big) 
\\
&-&
\Big( 2 \pi \sqrt{-1} \sum_{s=1}^n \frac{1}{1-x/ z_s} e_{12}^{(s)}\Big)
 \Big( 2 \pi \sqrt{-1} \sum_{s=1}^n \frac{1}{1-x/ z_s} e_{21}^{(s)} - 2 \pi \sqrt{-1} e_{21} \Big) .
\eea
Then
\bea
D_1(x) =  2 \pi \sqrt{-1} \sum_{s=1}^n \frac{e_{11}^{(s)}+e_{22}^{(s)}}{1-x/ z_s} - \pi \sqrt{-1} (e_{11} + e_{22}) = 0.
\eea
Since  $x=e^{-2 \pi \sqrt{-1} u}$ and $\der_u = -2 \pi \sqrt{-1} \,x \der_x$, the coefficient 
of $(2\pi \sqrt{-1} )^{-2}D_2(x)$ equals
\bean
\label{D/}
&&
 -\frac{\mu^2}{4} - \sum_{s=1}^n \frac{e_{22}^{(s)}}{(1-x/z_s)^2} + \sum_{s=1}^n \frac{e_{22}^{(s)}}{1-x/z_s}
  + \frac{\mu}{2} \sum_{s=1}^n \frac{e_{11}^{(s)}-e_{22}^{(s)}}{1-x/z_s}
   - \frac{\mu}{4}(e_{11}-e_{22}) 
   \\
\notag
&&
+ \sum_{s=1}^n \frac{e_{11}^{(s)} e_{22}^{(s)}}{(1-x/z_s)^2}
 + \sum_{s=1}^n \Big( \sum_{t: \, t\ne s} \frac{e_{11}^{(s)} e_{22}^{(t)} + e_{11}^{(t)} e_{22}^{(s)}}{1- z_s/ z_t} \Big) 
 \frac{1}{1-x/z_s} - \sum_{s=1}^n \frac{e_{11}^{(s)} e_{22}^{(s)}}{1-x/z_s} 
 \\
\notag
 &&
- \sum_{s=1}^n \Big( \sum_{t: \, t\ne s} e_{22}^{(t)} \Big) \frac{e_{11}^{(s)}}{1-x/z_s}
 + \frac14 e_{11} e_{22} - \sum_{s=1}^n \frac{e_{12}^{(s)} e_{21}^{(s)}}{(1-x/z_s)^2} \phantom{aaaaaaaaaa} 
 \\
\notag
 &&
- \sum_{s=1}^n \Big( \sum_{t: \, t\ne s} \frac{e_{12}^{(s)}e_{21}^{(t)} + e_{12}^{(t)} e_{21}^{(s)}}{1- z_s/ z_t} \Big) 
\frac{1}{1-x/z_s} + 
\sum_{s=1}^n \frac{e_{12}^{(s)} e_{21}^{(s)}}{1-x/z_s} + \sum_{s=1}^n \Big( \sum_{t: \, t\ne s} e_{21}^{(t)} \Big) \frac{e_{12}^{(s)}}{1-x/z_s} .
\eean 
The constant term  in \eqref{D/} equals $-\frac{\mu^2+\mu (e_{11} - e_{22}) -  e_{11} e_{22}}4$.
For $s=1,\dots,n$, the coefficient of $\frac{1}{1-x/z_s}$ in \eqref{D/} equals
\bea
&&
-c_2^{(s)}\, +\, \frac{\mu}{2}\, (e_{11}-e_{22})^{(s)} \,+ 
\,\sum_{t: \, t \ne s} \frac{e_{12}^{(s)} e_{21}^{(t)}  z_s 
+ e_{12}^{(t)} e_{21}^{(s)}  z_t}{ z_s -  z_t} + e_{22}^{(s)} \big( e_{22} - e_{22}^{(s)} \big)
\\
&&
\phantom{aaa}
 = \ -c_2^{(s)} \,+ \,\K_s( z,\mu) \,= \,m_s(m_s+2)/4\, +\, K_s(z, \mu).
\eea
The coefficient of $\frac{1}{(1-x/z_s)^2}$ in \eqref{D/} equals 
\bea
-e_{22}^{(s)} + e_{11}^{(s)}e_{22}^{(s)} - e_{12}^{(s)}e_{21}^{(s)} = (e_{11} e_{22} - e_{12}e_{21} + e_{11})^{(s)} = c_2^{(s)}.
\eea
The theorem is proved.
\end{proof}

\begin{lem}
\label{D2 comm}
For any $a,b\in \C-\{z_1,\dots,z_n\}$ the operators
 $D_2(a), D_2(b)\in \End(V)$ commute.
They also commute with  the $\glt$ Cartan subalgebra.

\end{lem}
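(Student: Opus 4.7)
The plan is to read commutativity directly off the explicit formula for $D_2(x)$ obtained in Theorem \ref{Dthm}. Up to the scalar factor $(2\pi \sqrt{-1})^2$, the operator $D_2(x)$ is an $x$-dependent linear combination of three types of building blocks: (i) scalar multiples of the identity (the $\mu^2/4$ term and the $m_s(m_s+2)/4$ coefficients, which are scalars on the tensor factor $V_{m_s}$ and hence on $V$), (ii) Cartan elements $e_{11},e_{22}$ appearing only in the $x$-independent constant term, and (iii) the trigonometric Gaudin operators $\K_s(z,\mu)$ appearing in the coefficients of $1/(1-x/z_s)$.

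First I would verify that these three classes mutually commute. Class (i) is central. For class (ii) vs.\ class (iii): each $\K_s(z,\mu)$ preserves every weight subspace $V[\nu]$, and on $V[\nu]$ the elements $e_{11},e_{22}$ act as scalars, so $[\K_s(z,\mu),e_{jj}]=0$. Class (iii) commutes with itself by the standard fact that $[\K_s(z,\mu),\K_t(z,\mu)]=0$ for all $s,t$, which is cited from \cite{Ch,EFK} right after the definition \eqref{trig Gaudin}. The Cartan elements obviously commute among themselves.

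Given that all building blocks commute, $D_2(a)$ and $D_2(b)$ are polynomials (with scalar coefficients depending on $a,b,z_1,\dots,z_n$) in a single commuting family of operators, so $[D_2(a),D_2(b)]=0$. The second assertion, that $D_2(x)$ commutes with the Cartan subalgebra of $\slt$, follows by the same reasoning: $e_{11}-e_{22}$ commutes with each $\K_s(z,\mu)$ (it preserves weights) and with the constant Cartan term.

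The argument is essentially bookkeeping: the only substantive input is the already-quoted commutativity of the trigonometric Gaudin Hamiltonians, so there is no real obstacle. The one place to be mildly careful is that $\mu$ enters both in the constant term (multiplying $e_{11}-e_{22}$) and inside each $\K_s(z,\mu)$; but since $\mu$ is a complex parameter and not an operator, and since $e_{11}-e_{22}$ commutes with each $\K_s(z,\mu)$ as noted, this introduces no cross-terms.
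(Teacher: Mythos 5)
Your proof is correct and follows essentially the same route as the paper's: both reduce the claim to the explicit formula of Theorem \ref{Dthm} for $D_2(x)$ together with the commutativity of the trigonometric Gaudin operators and the fact that they commute with the Cartan subalgebra. Your version merely spells out the bookkeeping that the paper leaves implicit.
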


\begin{proof} 
It is clear that the trigonometric Gaudin operators $\K_s(z,\mu)$ commute with the 
$\glt$ Cartan subalgebra.  Now the lemma follows from the commutativity of
trigonometric Gaudin operators.
\end{proof}

\begin{cor}
\label{cor D2 mu}

Choose a weight subspace $V[\nu]$ of $V$. Then 
$(2\pi \sqrt{-1})^{-2} D_2(x)$ restricted to $V[\nu]$ equals 
\bean
\label{D2}
 -\, \frac{(\mu+\nu/2)^2}{4} + \sum_{s=1}^n \Big[ \frac{m_s(m_s+2)/4 + \K_s ( z,\mu)}{1-x/z_s} - \frac{m_s(m_s+2)/4}{(1-x/z_s)^2} \Big] .
\eean
\qed
\end{cor}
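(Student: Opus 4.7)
The plan is to derive the corollary as an immediate restriction of the formula in Theorem \ref{Dthm} to a fixed weight subspace $V[\nu]$, with the only nontrivial task being the simplification of the constant term.

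First I would recall that throughout Section \ref{sec 2} the tensor factors $V_{m_s}$ are $\slt$-modules, regarded as $\glt$-modules on which the central element $e_{11}+e_{22}$ acts by zero. Consequently on all of $V$ we have the identity $e_{22}=-e_{11}$, and on the weight subspace $V[\nu]=\{v\in V\mid (e_{11}-e_{22})v=\nu v\}$ this forces $e_{11}$ to act by the scalar $\nu/2$ and $e_{22}$ to act by $-\nu/2$. In particular the Cartan factor
\[
\mu^2 + \mu(e_{11}-e_{22}) - e_{11}e_{22}
\]
appearing in Theorem \ref{Dthm} acts on $V[\nu]$ by the scalar
\[
\mu^2 + \mu\nu + \tfrac{\nu^2}{4} \;=\; \Big(\mu + \tfrac{\nu}{2}\Big)^{\!2}.
\]

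Next I would observe that each summand of the $x$-dependent sum in Theorem \ref{Dthm} involves $m_s(m_s+2)/4$ (a scalar) and the trigonometric Gaudin operator $\K_s(z,\mu)$, and by Lemma \ref{D2 comm} (or directly from \eqref{trig Gaudin}) each $\K_s(z,\mu)$ preserves $V[\nu]$. Therefore the entire $x$-dependent sum restricts to $V[\nu]$ without alteration.

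Combining the two observations, the formula of Theorem \ref{Dthm}, restricted to $V[\nu]$, becomes exactly \eqref{D2}, which is the claim. There is no real obstacle here; the only point requiring care is remembering the $\slt$ convention $e_{11}+e_{22}=0$ on each tensor factor, without which the completion of the square $\mu^2+\mu\nu+\nu^2/4=(\mu+\nu/2)^2$ would not occur.
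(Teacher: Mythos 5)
Your proposal is correct and is exactly the computation the paper leaves implicit (the corollary is stated with \qed): on $V[\nu]$ the $\slt$ convention $e_{11}+e_{22}=0$ gives $e_{11}=\nu/2$, $e_{22}=-\nu/2$, so the Cartan term of Theorem \ref{Dthm} becomes $\mu^2+\mu\nu+\nu^2/4=(\mu+\nu/2)^2$, and the $x$-dependent sum restricts unchanged since the $\K_s(z,\mu)$ preserve weight subspaces. Nothing further is needed.
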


The commutative algebra of  operators on  $V[\nu]$  generated by the identity operator
 and the operators $\{ D_2(a)\ |\ a\in \C -\{z_1,\dots,z_n \}\}$
is called the {\it  Bethe algebra} of $V[\nu]$ with parameter $\mu$ and denoted by
$\B(z;\mu; V[\nu])$.  The  Bethe algebra  $\B(z;\mu; V[\nu])$ 
is generated by the identity operator and the trigonometric Gaudin operators 
$\K_1 ( z,\mu), \dots,\K_n ( z,\mu)$.

\smallskip

The pair  $(V[\nu],  \B(z;\mu; V[\nu]))$ is called the {\it trigonometric Gaudin model on $V[\nu]$}.

\begin{cor}
\label{iso B} 

If $\mu\notin \Z_{>0}$,  the isomorphism $V[0]\to E[\mu]$ in \eqref{is VE} induces an isomorphism
$\B(z;\mu;V[0])\to B(z;E[\mu])$ between the  Bethe algebra of $V[0]$ with parameter $\mu$ and the
dynamical Bethe algebra of the space $E[\mu]$.

\end{cor}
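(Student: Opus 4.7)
The plan is to check that the vector space isomorphism $\Psi: V[0] \to E[\mu]$, $v \mapsto \psi_v$, from \eqref{is VE} intertwines a generating family of the Bethe algebra $\B(z;\mu;V[0])$ with a generating family of the dynamical Bethe algebra $\B(z;E[\mu])$. Once the intertwining is in place, the desired algebra isomorphism will be the restriction to $\B(z;\mu;V[0])$ of conjugation by $\Psi$.

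The key step is the intertwining identity
$$C_2(x)\circ\Psi \,=\, \Psi\circ D_2(x)$$
for every $x\in\C-\{z_1,\dots,z_n\}$. This is immediate from the formulas already in hand. On the one hand, Theorem \ref{B20 thm} computes $C_2(x)\psi_v = \psi_w$, where $w$ is given by \eqref{B20} as an explicit rational function of $x$ with coefficients that are polynomials in $\mu$ and in the trigonometric Gaudin operators $\K_s(z,\mu)$ applied to $v$. On the other hand, Corollary \ref{cor D2 mu} specialized to $\nu = 0$ gives $D_2(x) v$ by precisely the same formula \eqref{D2}. Comparing, $D_2(x)v = w$, hence $\Psi(D_2(x)v) = \psi_w = C_2(x)\psi_v = C_2(x)\Psi(v)$.

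To finish, recall that $\B(z;\mu;V[0])$ is by definition generated by the identity operator together with $\{D_2(a)\}_{a}$, and that the dynamical Bethe algebra of $\Fun$ is generated by the identity together with $\{C_j(a)\}_{j,a}$; since $C_1(x)$ acts as zero on $\Fun$, only the operators $\{C_2(a)\}_a$ contribute, and therefore $\B(z;E[\mu])$ is generated by the identity together with the restrictions $\{C_2(a)\big|_{E[\mu]}\}_a$. The algebra map $T\mapsto\Psi T\Psi^{-1}$ from $\End(V[0])$ to $\End(E[\mu])$ is an isomorphism because $\Psi$ is a linear isomorphism, and by the intertwining identity it sends each generator $D_2(a)$ of $\B(z;\mu;V[0])$ to the generator $C_2(a)\big|_{E[\mu]}$ of $\B(z;E[\mu])$, while preserving the identity. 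It therefore restricts to the desired algebra isomorphism. There is no real obstacle in this argument: the substantive content was already established in Theorems \ref{B20 thm} and \ref{Dthm}, and only a bookkeeping comparison of the two resulting formulas is required here.
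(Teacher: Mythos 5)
Your proposal is correct and follows the paper's own argument: the paper proves this corollary precisely by comparing formulas \eqref{B20} and \eqref{D2} (at $\nu=0$), which is exactly the intertwining identity $C_2(x)\circ\Psi=\Psi\circ D_2(x)$ you verify. The additional bookkeeping you supply (that $C_1$ acts by zero, so the $C_2(a)$ generate, and that conjugation by $\Psi$ is the required algebra isomorphism) is consistent with what the paper leaves implicit.
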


\begin{proof}
The corollary is proved by comparing formulas \eqref{B20} and \eqref{D2}.
\end{proof}

\subsection{Gaudin operators and Weyl group}

\begin{lem}
[{\cite[Lemma 18]{TV}},  cf. {\cite[Lemma 5.5]{MV2}}]
\label{lem AK}
For any weight subspace $V[\nu]$, any $v\in V[\nu]$, $s=1,\dots,n$, we have
\bean
\label{AK}
\mc A\Big(\mu +\frac {\nu}2-1\Big) \K_s(z,\mu) v\,=\,
 \K_s(z, - \mu)\mc A\Big(\mu +\frac {\nu}2-1\Big) v.
 \eean

\end{lem}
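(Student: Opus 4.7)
The plan is to split
\[
\K_s(z,\mu) \,=\, \tfrac{\mu}{2}(e_{11}-e_{22})^{(s)} \,+\, R_s, \qquad R_s \,:=\, \sum_{t\neq s} r^{(s,t)}(z_s/z_t),
\]
so that $R_s$ is independent of $\mu$. Writing $\xi := \mu+\nu/2-1$, the identity \eqref{AK} on $V[\nu]$ becomes
\[
\mc A(\xi)R_s v \,-\, R_s\mc A(\xi)v \;=\; -\tfrac{\mu}{2}\bigl((e_{11}-e_{22})^{(s)}\mc A(\xi) \,+\, \mc A(\xi)(e_{11}-e_{22})^{(s)}\bigr)v .
\]
Factoring $\mc A(\xi) = \si\,p(\xi)$, one then moves $\si$ and $p(\xi)$ separately across $R_s$.

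For the Weyl action, the elementary rules $\si(e_{11}-e_{22})\si^{-1} = -(e_{11}-e_{22})$ and $(\si\otimes\si)\Om_{12}(\si\otimes\si)^{-1} = \Om_{21}$, $(\si\otimes\si)\Om_0(\si\otimes\si)^{-1} = \Om_0$ give $\si R_s\si^{-1} = \sum_{t\neq s}\hat r^{(s,t)}(z_s/z_t)$ with $\hat r(x) = (\Om_- x + \Om_+)/(x-1)$, i.e.\ $r$ with $\Om_+\leftrightarrow\Om_-$ swapped. The key dynamical content of the lemma is that conjugation by $p(\xi)$ restores $R_s$ from $\si R_s\si^{-1}$, up to precisely the Cartan anti-commutator on the right-hand side above. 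This uses the explicit formula \eqref{ppro} for the action of $p$ on each irreducible $\slt$-summand, together with the commutation relations of $p(\xi)$ with the generators $e_{12}, e_{21}$, each of which introduces a rational function of the Cartan element.

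The main obstacle is the bookkeeping of these rational Cartan factors: one must verify that summing the contributions from moving $p(\xi)$ past each pairwise term $r^{(s,t)}$ exactly reconstructs $R_s$ and simultaneously generates the two Cartan correction terms, with no residual $\mu$-dependent remainder. Because the $t$-dependence enters only through the scalar $z_s/z_t$, the calculation is uniform in $s,t$ and reduces to an identity on each pair of tensor factors $V_{m_s}\otimes V_{m_t}$. This is the content of \cite[Lemma~18]{TV}, which may be invoked directly; a self-contained derivation proceeds by induction on the total degree of the $e_{21},e_{12}$-factors in $p(\xi)$, using \eqref{ppro}.
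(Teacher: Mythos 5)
The paper gives no proof of this lemma at all --- it is stated with the bracketed citation to \cite[Lemma 18]{TV} serving as the proof --- so your final step of invoking that lemma directly is exactly what the paper does, and your preliminary reductions (splitting off the $\mu$-dependent Cartan part of $\K_s$, rewriting \eqref{AK} as a commutator identity for $R_s$, and computing $\si R_s\si^{-1}$ via $\si e_{12}\si^{-1}=-e_{21}$, $\si e_{21}\si^{-1}=-e_{12}$) are correct. One caveat should you try to complete the sketch rather than cite: the assertion that commuting $p(\xi)$ past $r^{(s,t)}$ ``reduces to an identity on each pair of tensor factors $V_{m_s}\ox V_{m_t}$'' is not right as stated, because $p(\xi)$ is built from the \emph{global} operators $e_{21}$, $e_{12}$ and the global Cartan element acting diagonally on all of $V$, so its commutator with $e_{12}^{(s)}e_{21}^{(t)}$ involves every tensor factor and does not localize to the $(s,t)$ pair; this is precisely the nontrivial bookkeeping that \cite{TV} carries out, and it is safest to leave it there.
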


\begin{thm}
\label{thm isom mu}

For  $V= \ox_{s=1}^n V_{m_s}$ as in \eqref{ten V}, denote $M=\sum_{s=1}^n m_s$. 
Assume that $\mu \notin \frac M2+\Z$. 
Then for any $\nu$ the isomorphism of vector spaces
\bean
\label{A mu n}
\mc A\Big(\mu +\frac {\nu}2-1\Big)\Big\vert_{V[\nu]}\,:\,V[\nu]\,\to\,V[-\nu]
\eean
induces  an isomorphism of  Bethe algebras
\bean
\label{B iso}
\phantom{aaa}
\B(z;\mu; V[\nu]) \to  \B(z;-\mu; V[-\nu]), \quad
T\mapsto\mc A\Big(\mu +\frac {\nu}2-1\Big) T
\mc A\Big(\mu +\frac {\nu}2-1\Big)^{-1}.
\eean

\end{thm}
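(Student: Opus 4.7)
The plan is to prove the theorem by checking the claim on a natural set of generators of each Bethe algebra. First I would use Corollary \ref{cor D2 mu} to produce such generators, and then I would apply Lemma \ref{lem AK} to see that the conjugation carries one set to the other.

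To identify the generators, I would observe that Corollary \ref{cor D2 mu} expresses $(2\pi\sqrt{-1})^{-2} D_2(x)\big|_{V[\nu]}$ as a scalar constant plus a rational function in $x$ whose coefficients at the simple poles $x=z_s$ involve exactly the combinations $m_s(m_s+2)/4 + \K_s(z,\mu)\big|_{V[\nu]}$. Consequently, by extracting residues and constant terms in $x$, the family $\{D_2(a)\big|_{V[\nu]} : a \in \C - \{z_1,\dots,z_n\}\}$ generates the same commutative subalgebra of $\End(V[\nu])$ as the identity together with $\K_1(z,\mu)\big|_{V[\nu]}, \dots, \K_n(z,\mu)\big|_{V[\nu]}$. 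Applying the same reasoning on the target side, $\B(z;-\mu;V[-\nu])$ is generated by the identity and the operators $\K_s(z,-\mu)\big|_{V[-\nu]}$ for $s=1,\dots,n$.

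Now set $\mc A := \mc A(\mu+\nu/2-1)\big|_{V[\nu]}$. By Lemma \ref{lem v iso}, the hypothesis $\mu \notin M/2 + \Z$ guarantees that $\mc A : V[\nu] \to V[-\nu]$ is a vector space isomorphism, so the conjugation map $\Phi(T) := \mc A\, T\, \mc A^{-1}$ is an algebra isomorphism $\End(V[\nu]) \to \End(V[-\nu])$. Lemma \ref{lem AK} can be rewritten as the operator identity $\mc A\, \K_s(z,\mu) = \K_s(z,-\mu)\, \mc A$ of maps $V[\nu] \to V[-\nu]$, which is precisely the statement $\Phi(\K_s(z,\mu)\big|_{V[\nu]}) = \K_s(z,-\mu)\big|_{V[-\nu]}$. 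Since $\Phi(\id)=\id$ as well, $\Phi$ sends the generating set of $\B(z;\mu;V[\nu])$ bijectively onto the generating set of $\B(z;-\mu;V[-\nu])$, and hence restricts to an algebra isomorphism of the two Bethe algebras.

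I do not expect any serious obstacle, since every essential input, namely the intertwining of the Gaudin operators with $\mc A$ from Lemma \ref{lem AK}, the invertibility of $\mc A$ under the assumption on $\mu$ from Lemma \ref{lem v iso}, and the description of each Bethe algebra in terms of Gaudin operators from Corollary \ref{cor D2 mu}, is already available in the excerpt. The only mildly delicate point is the generator-extraction step, which is a routine partial-fraction argument in $x$ using the fact that $z_1,\dots,z_n$ are pairwise distinct and that the pole orders in \eqref{D2} are explicit.
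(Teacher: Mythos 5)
Your proposal is correct and follows essentially the same route as the paper, whose proof is simply the observation that the theorem follows from Lemmas \ref{lem v iso} and \ref{lem AK}; you additionally spell out the generator-identification step via Corollary \ref{cor D2 mu}, which the paper states without proof in the sentence defining $\B(z;\mu;V[\nu])$. No gaps.
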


\begin{proof}
The theorem is a corollary of Lemmas    \ref{lem v iso} and \ref{lem AK}.
\end{proof}

\subsection{Commutative diagram}

Assume that $\mu\notin\Z$ and $M$ is even. Then $V[0]$ is a nonzero weight subspace.

\smallskip

Consider the $\B(z;\mu;V[0])$-module $V[0]$ and $\B(z;-\mu;V[0])$-module $V[0]$.
 Consider the $\B(z; E[\mu])$-module $E[\mu]$ and $\B(z; E[-\mu])$-module $E[-\mu]$.
Consider the diagram relating these modules
\bean
\label{comD}
\begin{tikzcd}
(\B(z;\mu;V[0]),\, V[0])   \arrow[r, ] \arrow[d, ] & (\B(z;-\mu;V[0]), \,V[0]) \arrow[d, ] 
\\
(\B(z; E[\mu]),\, E[\mu]) \arrow[r,   ]  & (\B(z; E[-\mu]),\, E[-\mu])
\end{tikzcd}\ \  .
\eean

\noindent
Here the map
$(\B(z;\mu;V[0]),\, V[0]) \to (\B(z;-\mu;V[0]), \,V[0])$ is the module isomorphism of Theorem \ref{thm isom mu}.
The map 
$(\B(z; E[\mu]),\, E[\mu]) \to (\B(z; E[-\mu]),\, E[-\mu])$ is the module isomorphism induced by the action
of the Weyl involution $\si$
and the fact that the $RST$-operator is Weyl group invariant, see Lemma \ref{weyl inv}.
The maps 
$(\B(z;\mu;V[0]),\, V[0])\to (\B(z; E[\mu]),\, E[\mu])$ and  
$(\B(z;-\mu;V[0]),\, V[0])\to (\B(z; E[-\mu]),\, E[-\mu])$
are the module isomorphisms of Corollary \ref{iso B}.

\begin{thm}
\label{thm tra}
Diagram \eqref{comD} is commutative.

\end{thm}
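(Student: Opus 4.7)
The plan is to verify commutativity of \eqref{comD} separately on the module level (i.e.\ on vectors) and on the algebra level (i.e.\ on operators). Since all four arrows are explicitly known as isomorphisms of pairs (Bethe algebra, module) and the module component of each is a vector-space isomorphism, it is enough to establish the two commutativities on a chosen set of vectors and on a chosen set of algebra generators.

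First I would unpack the four arrows on vectors: the top horizontal sends $v\in V[0]$ to $\mc A(\mu-1)v$ (Theorem \ref{thm isom mu} with $\nu=0$); the bottom horizontal sends $\psi(\la)$ to $\si(\psi(-\la))$; the two vertical arrows from Corollary \ref{iso B} send $v$ to $\psi_v^\mu$ and to $\psi_v^{-\mu}$ respectively. Starting from $v\in V[0]$, the right-then-down composition produces $\psi_{\mc A(\mu-1)v}^{-\mu}(\la)$, while the down-then-right composition produces $\si(\psi_v^\mu(-\la))$. Their equality is precisely the content of Theorem \ref{thm x=A}, so the module-level diagram commutes immediately.

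Next I would establish the algebra-level commutativity by checking it on generators. The two top algebras are generated by the trigonometric Gaudin operators $\K_s(z,\pm\mu)$, and the two bottom algebras by the KZB operators $H_s(z)$. Theorem \ref{H_s eigen} asserts that the vertical isomorphisms intertwine $-2\pi\sqrt{-1}\,\K_s(z,\pm\mu)$ on $V[0]$ with $H_s(z)$ on $E[\pm\mu]$. Lemma \ref{lem AK} shows that the top horizontal algebra isomorphism sends $\K_s(z,\mu)\mapsto \K_s(z,-\mu)$. Finally, Weyl-group invariance of the KZB operators (from \cite{FW}, as already used in Lemma \ref{weyl inv}) gives $\si H_s(z)\si^{-1}=H_s(z)$, so that the bottom horizontal algebra isomorphism fixes each $H_s(z)\big|_{E[\pm\mu]}$. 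Assembling these, the generator $\K_s(z,\mu)$ is routed by either path to the same operator $-(2\pi\sqrt{-1})^{-1}H_s(z)\big|_{E[-\mu]}$ on $E[-\mu]$, giving algebra commutativity. Combining with the module commutativity from Theorem \ref{thm x=A}, diagram \eqref{comD} is commutative.

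The main obstacle is not analytical but bookkeeping: one must carefully track the four maps together with the scalar factor $-2\pi\sqrt{-1}$ relating $\K_s$ to $H_s$, and recognize that each of the required identities (the scattering identity of Theorem \ref{thm x=A}, the dynamical Weyl intertwiner of Lemma \ref{lem AK}, and the Weyl invariance of the $H_s(z)$) is already available in the paper, so no new computation is needed.
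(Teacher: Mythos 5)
Your proposal is correct and follows essentially the same route as the paper, whose proof simply cites Theorems \ref{H_s eigen}, \ref{thm x=A}, and \ref{thm isom mu} — exactly the ingredients you assemble. Your extra check of the algebra level on generators is fine but not strictly needed: once the module-level square commutes via Theorem \ref{thm x=A} and each arrow intertwines algebra and module, the two algebra compositions agree as operators on $E[-\mu]$ automatically.
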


\begin{proof}
The theorem follows from Theorems \ref{H_s eigen}, \ref{thm x=A}, \ref{thm isom mu}.
\end{proof}

\section{Bethe ansatz}
\label{sec 5}

\subsection{Bethe ansatz equations for  triple $(z;\mu;V[\nu])$}
\label{sec BAE}

Let
$V= \ox_{s=1}^n V_{m_s}$, as in \eqref{ten V},
and $M=\sum_{s=1}^n m_s$.  Let $V[\nu]$ be a nonzero 
weight subspace of $V$.
Then $\nu = M-2m$ for some nonnegative integer $m$.
\smallskip

Let $z=\{z_1,\dots,z_n\}\subset \C^\times$ be a set of nonzero pairwise distinct numbers,
as in Section \ref{ssec not}. Let $\mu\in\C$.
\smallskip

Introduce the  {\emph{master function}} of the variables $t=(t_1,\dots,t_m), \mu, z$,
\bea
\Phi(t,z,\mu) \, 
&=&
  \, \Big(1-\mu + \frac{\nu}{2} \Big) \sum_{i=1}^m \,\ln t_i \,+ 
  \,\sum_{s=1}^n \,\frac{m_s}{4}\, (2\mu + m_s - \nu) \,\ln z_s     
 \\
& + &
2 \sum_{1 \leqslant i < j \leqslant m} \ln(t_i - t_j) 
- \sum_{i=1}^m  \sum_{s=1}^n m_s \ln(t_i - z_s) + 
\sum_{1 \le s < r \leqslant n} \frac{m_s m_r}{2} \ln(z_s - z_r).
\eea
The {\emph{Bethe ansatz equations}} are the critical point equations for
the master function $\Phi(t,z,\mu)$ with respect to the variables $t_1,\dots,t_m$,
\bean
\label{tr.BAE}
\frac{1-\mu+\nu/2}{t_i} \,+ \,\sum_{j \ne i} \frac{2}{t_i-t_j}\, -\, \sum_{s=1}^n \frac{m_s}{t_i-z_s}\, = \,0, 
\qquad i=1,\dots,m.
\eean
The master function $\Phi(t,z,\mu)$ is the trigonometric degeneration 
 of the elliptic master function considered in Section 5 of \cite{ThV}, see also \cite{FV1, MaV}.
 
\smallskip
The symmetric group $S_m$ acts on the critical set. If $(t_1^0,\dots, t^0_m; z;\mu)$
 is a solution of the Bethe ansatz equations, then for any $\rho\in S_m$, 
 the point $(t_{\rho(1)}^0,\dots, t^0_{\rho(m)};z;\mu)$ is also a solution.

\subsection{Bethe vectors}
Define
\bea
\mc{C} &=& \{\ell = (\ell_1,\dots,\ell_n)\in\Z_{\geq 0}\ |\  \ell_s\leq m_s, \, \ell_1+\dots+\ell_n=m\},
\\
\om_{\ell} (t,z) 
&=&
 \Sym \; \Big[
\prod_{s=1}^n  \prod_{i=\ell_1+\dots+\ell_{s-1}+1}^{\ell_1+\dots+\ell_s} \frac{1}{t_i-z_s}\Big],
\eea
where $\Sym f(t_1,\dots,t_m) = \sum_{\rho \in S_m} f(t_{\rho(1)}, \dots,t_{\rho(m)})$.
Introduce the {\it weight function} 
\bean
\label{wght_f}
\om(t,z) \,=\, \sum_{\ell \in \mc{C}} \,\om_{\ell} (t,z)\, 
v^{m_1}_{\ell_1}\otimes \dots\otimes v^{m_n}_{\ell_n}\,,
\eean
see Section \ref{sec TP}.  This weight function see in \cite{MV2}, also in \cite{JV, MaV, SV}.
\smallskip

Notice that the weight function is a symmetric function of the variables $t_1,\dots,t_m$.

\smallskip
 If $(t^0\!\,;z;\mu)$ is a solution of the Bethe ansatz equations \eqref{tr.BAE}, 
 then the vector $\om(t^0\!, z)$ is called the {\it Bethe vector}.

\begin{thm}
[\cite{MTV6,V}]
\label{thm Bnon}

Let $(t^0\!; z; \mu)$ be a solution of the Bethe ansatz equations \eqref{tr.BAE}.
Then the Bethe vector $\om(t^0\!, z)$ is nonzero.

\end{thm}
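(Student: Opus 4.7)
The plan is to show that the Shapovalov norm of $\om(t^0,z)$ is nonzero. Let $S$ denote the tensor product of the $\slt$ Shapovalov forms on the factors $V_{m_s}$, normalized so that $S(v^{m_s}_0,v^{m_s}_0)=1$. Then $S$ is diagonal in the basis $\{v^{m_1}_{\ell_1}\ox\cdots\ox v^{m_n}_{\ell_n}\}$ with explicit nonzero diagonal entries, hence non-degenerate on each weight subspace $V[\nu]$. It therefore suffices to verify that $S(\om(t^0,z),\om(t^0,z))\neq 0$, since non-degeneracy of $S$ then forces $\om(t^0,z)\neq 0$.

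The key input is the Gaudin--Korepin type norm formula for the trigonometric Bethe vectors (as developed in \cite{MTV6, V}): for any solution $t^0$ of the Bethe ansatz equations \eqref{tr.BAE},
\be
S\bigl(\om(t^0,z),\om(t^0,z)\bigr)\;=\;C(t^0,z,\mu)\;\cdot\;\det\bigl(\der_{t_i}\der_{t_j}\Phi(t^0,z,\mu)\bigr)_{i,j=1}^{m},
\ee
where $C(t^0,z,\mu)$ is an explicit nonzero rational factor in the Bethe roots and the parameters. I would derive this by computing $S(\om_\ell,\om_{\ell'})$ directly from definition \eqref{wght_f}, collecting the resulting rational functions, and using the critical point equations \eqref{tr.BAE} to reorganize the off-diagonal contributions into entries of the Hessian of $\Phi$.

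The hard part is to show that the Hessian of $\Phi$ at $t^0$ is nondegenerate, since for special $(z,\mu)$ there could a priori be degenerate critical points. My approach would be a deformation argument: move $(z,\mu)$ continuously in a connected neighborhood of the given data to generic values at which every critical point of $\Phi$ is isolated and Hessian-nondegenerate. Generic nondegeneracy can itself be established via the correspondence with pairs of quasi-polynomials developed in the later sections of the paper: an isolated critical point of $\Phi$ corresponds to a pair of quasi-polynomials whose Wronskian is given by the right-hand side of \eqref{f1}, and the Wronskian equation has only transverse solutions for generic $(z,\mu)$. Tracking the norm $S(\om(t^0,z),\om(t^0,z))$ as a regular algebraic function on the incidence variety of critical points and using that the generic norm is nonzero, one concludes non-vanishing at the original $t^0$.

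As an alternative route that sidesteps the Hessian entirely, one can try to exhibit a single nonzero coefficient $\om_\ell(t^0,z)$ directly. Extracting the contribution of a single permutation in the symmetrization defining $\om_\ell$ by iterated residues at suitable divisors $t_i=z_{j(i)}$ produces a rational function whose non-vanishing at $t^0$ would follow from the BAE \eqref{tr.BAE} together with a careful combinatorial choice of the composition $\ell$. This bypasses the question of degenerate critical points but is combinatorially delicate to organize uniformly in $(t^0,z,\mu)$.
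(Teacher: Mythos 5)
The paper does not prove this theorem; it is quoted from \cite{MTV6,V}, so your proposal has to stand on its own. It does not: there is a genuine gap exactly at the case that makes the theorem nontrivial, namely \emph{degenerate} critical points of the master function. Your main route is the norm formula $S(\om(t^0,z),\om(t^0,z))=C\cdot\det\bigl(\der_{t_i}\der_{t_j}\Phi(t^0)\bigr)$ together with nondegeneracy of the Shapovalov form. Over $\C$ nondegeneracy of $S$ only gives the implication $S(v,v)\ne 0\Rightarrow v\ne 0$, so the whole argument hinges on the Hessian being nonzero at $t^0$. At a degenerate critical point the Hessian determinant vanishes, your own formula then yields $S(\om,\om)=0$, and you can conclude nothing. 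The deformation argument cannot repair this: showing that the norm is nonzero at \emph{generic} $(z,\mu)$ and that it is a regular function on the incidence variety does not imply it is nonzero at the original point --- a regular function that is generically nonzero can perfectly well vanish on a proper subvariety, and by the norm formula it genuinely \emph{does} vanish wherever the Hessian degenerates. So the final step ``one concludes non-vanishing at the original $t^0$'' is a non sequitur. The statement of the theorem places no genericity hypothesis on $(t^0;z;\mu)$, and the reason it is attributed to \cite{MTV6,V} is precisely that those papers handle degenerate critical points; the nondegenerate case via the Gaudin--Korepin/Hessian formula was already classical.

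The proofs in \cite{MTV6,V} avoid the norm entirely: they work with the local algebra $A_{t^0}$ of the (possibly non-isolated or degenerate, but in the relevant setting isolated) critical point, which is a nonzero finite-dimensional Frobenius algebra via the Grothendieck residue form, and identify the Bethe vector with the image of a distinguished nonzero element of $A_{t^0}$ under a linear map whose non-vanishing on that element is established directly. If you want a self-contained argument you would need to reproduce something of that nature; your second, residue/coefficient-extraction alternative points vaguely in that direction but, as you concede, is not carried out and cannot be accepted as a proof.
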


\begin{thm} [\cite{FV1, JV}, cf. \cite{RV}] 
\label{eigenv}

Let $(t^0 ; z; \mu)$ be a solution of the Bethe ansatz equations \eqref{tr.BAE}.
 Then the Bethe vector $\om(t^0\!, z)$ is an eigenvector of the trigonometric Gaudin operators,
\bea
\K_s(z,\mu) \, \om(t^0\!,z) \,=\, z_s\,\frac{\der \Phi}{\der z_s}(t^0\!, z,\mu) \, \om(t^0\!, z)\,,
\qquad s=1,\dots,n.
\eea
\end{thm}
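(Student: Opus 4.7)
The approach is to compute the off-shell action of $\K_s(z,\mu)$ on $\om(t,z)$ for arbitrary variables $t=(t_1,\dots,t_m)$, and to show that this action decomposes as a ``wanted'' eigenvalue term plus a sum of terms proportional to the left-hand sides of the Bethe ansatz equations \eqref{tr.BAE}. Concretely, the goal is to establish an identity of the form
\be
\K_s(z,\mu)\,\om(t,z) \,=\, z_s\,\frac{\der\Phi}{\der z_s}(t,z,\mu)\,\om(t,z) \,+\, \sum_{i=1}^m R_i(t,z,\mu)\,X_{s,i}(t,z),
\ee
where $R_i(t,z,\mu)$ denotes the left-hand side of the $i$-th Bethe equation in \eqref{tr.BAE} and each $X_{s,i}(t,z)\in V[\nu]$ is an explicit auxiliary vector. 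Specializing $t=t^0$ annihilates every $R_i(t^0\!,z,\mu)$ by hypothesis, leaving the stated eigenvalue relation. Non-triviality of the eigenvector $\om(t^0\!,z)$ needed for the statement to be meaningful is provided by Theorem \ref{thm Bnon}.

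Rather than attacking each $\K_s$ separately, I would work with the universal transfer operator $D_2(x)$ of Corollary \ref{cor D2 mu}, which encodes all of the $\K_s(z,\mu)$ as simple-pole coefficients in its expansion at $x=z_s$. Treating $\om(t,z)$ as the result of applying a polynomial in ``creation'' off-diagonal monodromy entries to the tensor product of highest weight vectors, the algebraic Bethe ansatz in the trigonometric $\slt$ $r$-matrix formalism yields an off-shell identity
\be
D_2(x)\,\om(t,z) \,=\, E(x,t,z,\mu)\,\om(t,z) \,+\, \sum_{i=1}^m R_i(t,z,\mu)\,Y_i(x,t,z),
\ee
with a scalar rational function $E(x,t,z,\mu)$ of $x$. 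A direct calculation using the logarithmic derivatives of $\Phi$ identifies the simple-pole coefficient of $E$ at $x=z_s$ as $m_s(m_s+2)/4+z_s\,\der_{z_s}\Phi(t,z,\mu)$. Matching this with the explicit form of $D_2(x)$ in Corollary \ref{cor D2 mu} then extracts the eigenvalue of $\K_s(z,\mu)$ as exactly $z_s\,\der_{z_s}\Phi$.

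The main obstacle is establishing the off-shell identity and verifying that the coefficients of the ``unwanted'' terms are precisely the Bethe expressions $R_i$. I would prove it by induction on $m$, moving $D_2(x)$ past each creation operator via the trigonometric $R$-matrix exchange relations and carefully separating the ``diagonal'' contributions (which assemble into $E\,\om$) from those in which the auxiliary variable $x$ is exchanged with one of the $t_i$ (which collect into $R_i\,Y_i$). The bookkeeping is delicate but standard; the rational case is carried out in Reshetikhin--Varchenko \cite{RV}, and the trigonometric adaptation appears in \cite{FV1, JV}. Once the off-shell identity is in hand, the eigenvalue identification reduces to comparing residues.
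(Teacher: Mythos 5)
The paper does not prove Theorem \ref{eigenv}; it is imported from \cite{FV1, JV} (cf.\ \cite{RV}) without argument, so there is no internal proof to compare yours against. Your strategy --- establish an off-shell identity in which the action of the Gaudin operators (or of $D_2(x)$) on $\om(t,z)$ splits into the wanted term $z_s\,\der_{z_s}\Phi\cdot\om$ plus unwanted terms whose coefficients are the left-hand sides of \eqref{tr.BAE}, then specialize $t=t^0$ --- is exactly the argument of those references, and your identification of the eigenvalue with the residue of the scalar function at $x=z_s$ is consistent with \eqref{k_s} and Lemma \ref{hatB2 lem}.

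Two remarks. First, your proposed realization of $\om(t,z)$ via monodromy creation operators does go through here, but only after a small computation you should record: since $e_{21}=\sum_s e_{21}^{(s)}$, one has $\M_{21}(x)=-2\pi\sqrt{-1}\,x\sum_s e_{21}^{(s)}/(x-z_s)$, and then $\M_{21}(t_1)\cdots\M_{21}(t_m)\,v_+=(-2\pi\sqrt{-1})^m\,t_1\cdots t_m\,\om(t,z)$ by \eqref{V_m} and \eqref{wght_f}; the overall factor $\prod_i t_i$ is harmless because solutions of \eqref{tr.BAE} have $t_i^0\neq 0$. Second, the crux --- that the unwanted coefficients are \emph{precisely} the Bethe expressions $R_i$ and that the wanted scalar is \emph{precisely} $z_s\,\der_{z_s}\Phi$ --- is exactly the step you defer to \cite{FV1, JV}, i.e.\ to the sources cited for the theorem itself. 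As a reconstruction of how the cited proof goes, your outline is accurate; as a standalone proof it is circular at that step, and to close it you would need to carry out the induction on $m$ (or adapt the rational computation of \cite{RV} by tracking the extra $\mu$-dependent diagonal term and the global $e_{21}$-term of the trigonometric $r$-matrix).
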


Denote 
\bean
\label{k_s}
\phantom{aaa}
k_s(t^0\!,z,\mu)
&=&
 z_s\,\frac{\der \Phi}{\der z_s}(t^0\!,z,\mu)
 \\
 \notag
 &=&
  \frac{m_s}{2} \Big[ (\mu - \nu/2 + m_s/2)\, +\, 
 \sum_{p: \,p \ne s} m_p \,\frac{z_s}{z_s-z_p}
\, + \,
2 \sum_{i=1}^m \frac{z_s}{t_i^0-z_s} \Big]. 
 \eean

\subsection{Bethe vectors and coefficient $D_2(x)$}

\begin{lem}
\label{hatB2 lem}

If $(t^0\!\,; z; \mu)$ is a solution of the Bethe ansatz equations 
\eqref{tr.BAE}, then the Bethe vector $\om(t^0\!,z)$ is an eigenvector of
all operators of the Bethe algebra $\B(z;\mu; V[\nu])$. In particular, the 
operator $D_2(x)$ acts on $\om(t^0\!,z)$ by multiplication by the scalar
\bean
\label{sc D2}
\phantom{aaaa}
(2\pi \sqrt{-1})^2\Big[- \frac{(\mu+\nu/2)^2}{4} + \sum_{s=1}^n \Big[ \frac{m_s(m_s+2)/4 + k_s (t^0\!, z,\mu)}{1-x/z_s}
 - \frac{m_s(m_s+2)/4}{(1-x/z_s)^2} \Big]\Big] .
\eean

\end{lem}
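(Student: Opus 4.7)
The plan is to reduce the lemma to two ingredients already assembled: (i) the generation of $\B(z;\mu;V[\nu])$ by the trigonometric Gaudin operators, and (ii) the eigenvector property of $\om(t^0\!,z)$ for each individual Gaudin operator. First I would invoke the description of the Bethe algebra recorded after Corollary \ref{cor D2 mu}: the algebra $\B(z;\mu;V[\nu])$ is generated by the identity and the operators $\K_1(z,\mu),\dots,\K_n(z,\mu)$. Therefore, any simultaneous eigenvector of $\K_1(z,\mu),\dots,\K_n(z,\mu)$ is automatically an eigenvector of every element of $\B(z;\mu;V[\nu])$. That the Bethe vector $\om(t^0\!,z)$ is such a simultaneous eigenvector, with eigenvalues $k_s(t^0\!,z,\mu)= z_s\,\der\Phi/\der z_s(t^0\!,z,\mu)$, is precisely the content of Theorem \ref{eigenv}. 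This settles the first assertion of the lemma.

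For the explicit eigenvalue of $D_2(x)$, I would apply Corollary \ref{cor D2 mu}, which expresses the restriction of $(2\pi\sqrt{-1})^{-2}D_2(x)$ to $V[\nu]$ as a fixed scalar (depending rationally on $x$) plus a rational-in-$x$ linear combination of the operators $\K_s(z,\mu)$:
\be
(2\pi\sqrt{-1})^{-2}D_2(x)\big|_{V[\nu]}\,=\,
-\frac{(\mu+\nu/2)^2}{4}+\sum_{s=1}^n\Big[\frac{m_s(m_s+2)/4+\K_s(z,\mu)}{1-x/z_s}-\frac{m_s(m_s+2)/4}{(1-x/z_s)^2}\Big].
\ee
Since $\om(t^0\!,z)\in V[\nu]$ and each $\K_s(z,\mu)$ acts on $\om(t^0\!,z)$ by the scalar $k_s(t^0\!,z,\mu)$, I would simply substitute these scalars for the operators $\K_s(z,\mu)$ in the above formula. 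This yields exactly the scalar \eqref{sc D2}, finishing the lemma.

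There is essentially no obstacle here: both of the required inputs have been proved or quoted already in the paper. The only point to be slightly careful about is that the substitution is legitimate because the expression for $D_2(x)$ in Corollary \ref{cor D2 mu} is $x$-independent on the operator side up to scalar rational coefficients in $x$; the identity holds for each $x\in\C\setminus\{z_1,\dots,z_n\}$ pointwise, so no analytic issues arise when applying both sides to $\om(t^0\!,z)$. Thus the proof is a two-line combination of Theorem \ref{eigenv} with Corollary \ref{cor D2 mu}.
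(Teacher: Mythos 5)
Your proof is correct and follows exactly the paper's route: the paper also deduces the lemma directly from Theorem \ref{eigenv} together with Corollary \ref{cor D2 mu}, substituting the Gaudin eigenvalues $k_s(t^0\!,z,\mu)$ into formula \eqref{D2}. Nothing further is needed.
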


\begin{proof}
The lemma follows from Theorem \ref{eigenv} and Corollary \ref{cor D2 mu}.
\end{proof}

For  a  solution  $(t^0\!; z; \mu)$  of the Bethe ansatz equations 
\eqref{tr.BAE}, we introduce the {\it fundamental differential operator}
\bean
\label{hatD}
\mc E_{\,t^0\!,z,\mu} = \der_u^2 + E_2(x, t^0\!,z,\mu),
\eean
where the function $E_2(x, t^0\!,z,\mu)$ is given by formula \eqref{sc D2}.

\subsection{Basis of Bethe vectors}

The Bethe ansatz method is the method to construct eigenvectors of commuting operators,
see Lemma  \ref{hatB2 lem} as an example. The standard problem 
is to determine if  the Bethe ansatz method gives a basis of eigenvectors  of the vector
space, on which the commuting operators act.
In the case of Lemma  \ref{hatB2 lem} the answer is positive.

\begin{lem} 
\label{basis}

Let $\mu\notin \frac\nu 2+\Z_{>0}$.  Then for generic $z=\{z_1,\dots,z_n\}\subset \C^\times$,
the set of solutions $(t^0;z;\mu)$ 
 of   system \eqref{tr.BAE}
of the Bethe ansatz equations
 is such that
 the corresponding Bethe vectors $\om(t^0\!,z,\mu)$ form a basis of the space $V[\nu]$.

\end{lem}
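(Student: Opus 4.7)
\textbf{Proof plan for Lemma \ref{basis}.} My strategy has two ingredients: show that the Bethe vectors attached to distinct $S_m$-orbits of solutions are linearly independent, and count these orbits to match $\dim V[\nu]$.

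For linear independence, by Theorem \ref{eigenv} each $S_m$-orbit of a solution $(t^0;z;\mu)$ of \eqref{tr.BAE} produces a joint eigenvector $\om(t^0,z)$ of the commuting trigonometric Gaudin operators $\K_1(z,\mu),\ldots,\K_n(z,\mu)$ with explicit eigenvalues $k_s(t^0,z,\mu)$ given in \eqref{k_s}, and by Theorem \ref{thm Bnon} each such vector is nonzero. It therefore suffices to show that for generic $z$, distinct orbits produce distinct eigenvalue tuples $(k_1,\ldots,k_n)$; joint eigenvectors corresponding to distinct eigenvalue tuples are automatically linearly independent. I would establish this separation by a degeneration argument: choose a one-parameter family $z=z(\epsilon)$ (for instance $z_s=\epsilon^{s-1}$ as $\epsilon\to 0$) along which the BAE and the operators $\K_s(z,\mu)$ degenerate into explicitly solvable pieces; verify eigenvalue separation in the limit; and propagate it to generic $z$ using the fact that the collision locus of eigenvalues is a proper algebraic subvariety of the $z$-space.

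For the count, I would appeal to the Wronskian/quasi-polynomial reformulation developed in Sections \ref{sec 6}--\ref{sec 11}: each $S_m$-orbit of critical points of $\Phi$ corresponds to a pair of monic polynomials with a prescribed Wronskian (a constant multiple of $x^{-1}\prod_s(x-z_s)^{m_s}$ up to the appropriate quasi-polynomial factors). The hypothesis $\mu\notin \tfrac{\nu}{2}+\Z_{>0}$ excludes the resonance at which the leading coefficient $1-\mu+\nu/2$ in \eqref{tr.BAE} triggers extra degenerations that lower the count. For generic $z$, a Schubert-calculus style enumeration of such polynomial pairs, analogous to the count used in \cite{MTV3, MTV5} for the rational Gaudin case, yields exactly $\dim V[\nu]$.

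The hardest step is the eigenvalue-separation argument: although the degeneration limit is tractable, care is needed to ensure that no separation collapses at $\epsilon=0$, and that orbits do not merge in the limit in a way that hides distinctness of the tuples $(k_1,\ldots,k_n)$. An alternative and probably cleaner route is to invoke the completeness results of \cite{MTV6}, already cited for Theorem \ref{thm Bnon}, which were developed precisely for this class of completeness questions for trigonometric $\slt$ Gaudin models, using the hypothesis on $\mu$ to rule out the resonant values and the genericity of $z$ to guarantee simple spectrum.
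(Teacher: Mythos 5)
Your plan diverges from the paper's route, and as written it has a gap that matters inside this paper's logical architecture. The paper's proof is a citation to the standard degeneration argument of \cite[Theorem 8]{ScV} (cf.\ \cite[Section 4.4]{MV1}, \cite[Section 5.4]{MV2}, \cite[Section 10.6]{MTV1}): one sends $z$ to an asymptotic regime (e.g.\ $|z_1|\ll\dots\ll|z_n|$), shows that the $S_m$-orbits of critical points of $\Phi$ are in bijection with the basis vectors $v^{m_1}_{\ell_1}\ox\dots\ox v^{m_n}_{\ell_n}$ of $V[\nu]$ and that the corresponding Bethe vectors converge, up to nonzero scalars, to those basis vectors. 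This yields the count and the linear independence in a single stroke, and ``forming a basis'' is a Zariski-open condition in $z$. Your plan instead splits the statement into (a) eigenvalue separation and (b) an independent enumeration of solutions, and it is ingredient (b) that is problematic here.

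Concretely: the Schubert-calculus-style count you invoke is, in this paper, a \emph{consequence} of Lemma \ref{basis}, not an input to it. The degree of the Wronski map on pairs of quasi-polynomials is computed only in Corollary \ref{cor deg}, which rests on Theorems \ref{thm isoa} and \ref{thm ups}, whose proofs explicitly use Lemma \ref{basis}; the only independently established fact is Lemma \ref{lem pdeg} (positive degree). So appealing to ``exactly $\dim V[\nu]=\binom{n}{m}$ polynomial pairs with prescribed Wronskian'' is circular unless you supply a separate degree computation for the quasi-polynomial Wronski map (this can be done, e.g.\ by a flat degeneration as in \cite{MTV4, MTV2}, but it is essentially the same amount of work as the lemma itself). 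Two smaller points: your fallback reference \cite{MTV6} concerns the rational $\gln$ Gaudin model and is used in this paper only for the nonvanishing of Bethe vectors (Theorem \ref{thm Bnon}), not for completeness in the trigonometric setting with parameter $\mu$; and in your eigenvalue-separation step, propagating from the limit $\epsilon=0$ to generic $z$ requires controlling that solutions of \eqref{tr.BAE} neither collide nor escape to infinity as $\epsilon\to 0$ and that the limiting configuration still lies in (the closure of) the admissible parameter space --- precisely the bookkeeping that the cited standard argument carries out. I recommend replacing both ingredients by the single degeneration in which the Bethe vectors themselves limit to the monomial basis.
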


\begin{proof}
Here the word generic means 
that the subset of all acceptable  sets $\{z_1,\dots,z_n\}$ forms a Zariski open subset in the space
of all sets $\{z_1,\dots,z_n\}$. The proof of the lemma is standard. It is a modification of \cite[Theorem 8]{ScV},
cf. \cite[Section 4.4]{MV1}, \cite[Section 5.4]{MV2}, \cite[Section 10.6]{MTV1}.
\end{proof}

\section{Function $w(x)$ in the kernel of  $\mc E_{\,t^0;z;\mu}$}
\label{sec 6}

Let $(t^0;z;\mu)$  be a solution of system \eqref{tr.BAE} of  Bethe ansatz equations, where
$t^0=(t^0_1,\dots,t^0_m)$. Define 
\bean
\label{Y&u}
y(x) \,=\, \prod_{i=1}^m (x- t^0_i), \qquad  
w(x) \,=\,y(x)\, x^{\frac{\nu/2-\mu}{2}} \,\prod_{s=1}^n (x-z_s)^{-m_s/2}\,.
\eean

\begin{thm}
\label{Dthm}

We have
\bean
\label{Dform}
\mc E_{\,t^0;z;\mu}\, =\, \big( \der_u + (\ln w)' \big) \big( \der_u - (\ln w)' \big).
\eean
where  $' = \der/\der u$. In other words,  
\bean
\label{hatB2form}
E_2(x, t^0\!,z,\mu)  = - (\ln w)'' - ((\ln w)')^2.
\eean
\end{thm}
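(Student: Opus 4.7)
The plan is to reduce the operator identity \eqref{Dform} to a single scalar identity and verify it pole-by-pole as a rational function of $x$; the Bethe ansatz equations \eqref{tr.BAE} will enter precisely to cancel the would-be singularities at $x = t_i^0$.

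First I note the algebraic identity $(\der_u + f)(\der_u - f) = \der_u^2 - f' - f^2$ for any scalar function $f(u)$. Taking $f = (\ln w)'$, formula \eqref{Dform} is equivalent to the scalar identity \eqref{hatB2form}, which is in turn equivalent to $\mc E_{\,t^0\!;z;\mu}(w) = 0$. Passing to $x = e^{-2\pi\sqrt{-1}u}$ with $\der_u = -2\pi\sqrt{-1}\,x\der_x$, I set
$$g(x) \;=\; x\,\tfrac{d\ln w}{dx} \;=\; -\tfrac{\mu+\nu/2}{2} \,+\, \sum_{i=1}^m \tfrac{t_i^0}{x-t_i^0} \,-\, \sum_{s=1}^n \tfrac{m_s z_s/2}{x-z_s},$$
where the constant arises from the identity $\nu = M-2m$ after partial-fractioning $x/(x-t_i^0)$ and $m_s x/(2(x-z_s))$. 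Then $(\ln w)' = -2\pi\sqrt{-1}\,g$ and $(\ln w)'' = (2\pi\sqrt{-1})^2\,xg'(x)$, so the identity to prove becomes the rational equality $-xg'(x) - g(x)^2 = (2\pi\sqrt{-1})^{-2} E_2(x, t^0\!,z,\mu)$ with the explicit RHS from \eqref{sc D2}.

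Both sides are rational in $x$ with possible poles at $0$, $z_1,\dots,z_n$, $t_1^0,\dots,t_m^0$, and $\infty$. At $x = \infty$, $-xg'(x) \to 0$ and $-g(x)^2 \to -(\mu+\nu/2)^2/4$, matching the constant term of the RHS. At $x = z_s$, the double pole of $-xg'(x) - g(x)^2$ comes from the $-m_sz_s/2/(x-z_s)$ summand of $g(x)$; a short computation shows it equals $-m_s(m_s+2)/4 \cdot (1-x/z_s)^{-2}$ after combining the contributions of $-xg'$ and $-g^2$. The residue at $x = z_s$ collects cross terms of $g(x)^2$ involving $\sum_i t_i^0/(t_i^0-z_s)$; rewriting $\tfrac{t_i^0}{t_i^0-z_s} = 1 + \tfrac{z_s}{t_i^0-z_s}$ and $\tfrac{z_s}{z_s-z_p}$-type cross terms, one reproduces exactly $m_s(m_s+2)/4 + k_s(t^0\!,z,\mu)$ via formula \eqref{k_s}.

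The crucial step is regularity at $x = t_i^0$. Writing $g(x) = \tfrac{t_i^0}{x-t_i^0} + R_i(x)$ with $R_i$ regular at $t_i^0$, the double poles in $-xg'$ and $-g^2$ cancel identically (they equal $+(t_i^0)^2/(x-t_i^0)^2$ and $-(t_i^0)^2/(x-t_i^0)^2$ respectively). The single-pole residue equals $t_i^0(1 - 2R_i(t_i^0))$, so its vanishing is the condition $R_i(t_i^0) = 1/2$, i.e.,
$$\sum_{j \ne i}\tfrac{t_j^0}{t_i^0-t_j^0} \,-\, \sum_{s=1}^n \tfrac{m_sz_s/2}{t_i^0-z_s} \;=\; \tfrac{1+\mu+\nu/2}{2}\,.$$
Using $\tfrac{t_j^0}{t_i^0-t_j^0} = -1 + \tfrac{t_i^0}{t_i^0-t_j^0}$ and $\tfrac{m_sz_s/2}{t_i^0-z_s} = -\tfrac{m_s}{2} + \tfrac{m_s t_i^0/2}{t_i^0-z_s}$, this rearranges to $\tfrac{t_i^0}{2}$ times the Bethe ansatz equation \eqref{tr.BAE}. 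Thus the Bethe ansatz equations are exactly what is needed for the proposed scalar identity to hold at every $t_i^0$.

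The main obstacle is the bookkeeping of cross terms at $x = z_s$: the product $g(x)^2$ produces mixed poles $\tfrac{1}{(x-t_i^0)(x-z_s)}$ whose partial-fraction contributions to the residue at $z_s$ must be resummed over $i$ and reconciled with the explicit trigonometric Gaudin eigenvalue \eqref{k_s}. Tracking the distinction between the constants $1+\mu+\nu/2$ and $1-\mu+\nu/2$ as one converts between $\tfrac{t}{t-a}$ and $\tfrac{a}{t-a}$ forms is the most delicate part, but the underlying manipulation is elementary.
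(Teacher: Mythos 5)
Your proposal is correct and follows essentially the same route as the paper: both reduce \eqref{Dform} to the scalar identity \eqref{hatB2form}, expand $(\ln w)'$ and $(\ln w)''$ as rational functions of $x$, and match the coefficients of $(x-t_i^0)^{-2}$, $(x-t_i^0)^{-1}$, $(1-x/z_s)^{-2}$, $(1-x/z_s)^{-1}$ and the constant term against \eqref{sc D2}, with the Bethe ansatz equations \eqref{tr.BAE} used exactly where you use them, to kill the residues at $x=t_i^0$. Your framing via $g(x)=x\,d\ln w/dx$ and its local expansions is a slightly tidier bookkeeping of the same computation; all the key cancellations (including the constant $-\tfrac{\mu+\nu/2}{2}$ in $g$ and the conversion between $\tfrac{t}{t-a}$ and $\tfrac{a}{t-a}$ forms) check out.
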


\begin{rem}
For $\nu=0$ this statement is the trigonometric degeneration of its elliptic version \cite[Theorem 5.3]{ThV}.
\end{rem}

\begin{proof} Recall that $\partial_u\, =\, -2\pi \sqrt{-1} \,x \partial_x$. We have 
\bea
(\ln w)' 
&=&
 -2\pi \sqrt{-1} \,\Big[-  \frac{\nu/2+\mu }{2} + \sum_{i=1}^m \frac{t^0_i}{x-t^0_i}
  - \frac12 \sum_{s=1}^n \frac{z_s m_s}{x- z_s} \Big], 
\\
(\ln w)'' 
&=&
 (2 \pi \sqrt{-1} )^2 \Big[ - \sum_{i=1}^m \frac{t^0_i}{x-t^0_i} 
 - \sum_{i=1}^m \frac{(t^0_i)^2}{(x-t^0_i)^2} 
 + \frac12 \sum_{s=1}^n \frac{z_s m_s}{x-z_s} 
 + \frac12 \sum_{s=1}^n \frac{z_s^2 m_s}{(x-z_s)^2} \Big].
\eea
Hence, $(2\pi \sqrt{-1})^{-2}(- (\ln w)'' - ((\ln w)')^2)$  equals
\bea
&&  \sum_{i=1}^m \frac{t^0_i}{x-t^0_i} 
 + \sum_{i=1}^m \frac{(t^0_i)^2}{(x-t^0_i)^2} 
 - \frac12 \sum_{s=1}^n \frac{z_s m_s}{x-z_s} 
 - \frac12 \sum_{s=1}^n \frac{z_s^2 m_s}{(x-z_s)^2} 
 \\
 && 
 \phantom{aa}
- \frac14 (\mu+\nu/2)^2 - \sum_{i=1}^m \frac{(t_i^0)^2}{(x-t^0_i)^2} 
- 2 \sum_{i=1}^m  \sum_{j:\,j \ne i} \frac{t^0_it^0_j}{t^0_i - t^0_j}  \frac{1}{x-t^0_i} 
\\
&&
 \phantom{aa}
- \frac14 \sum_{s=1}^n \frac{z_s^2 m_s^2}{(x-z_s)^2}
- \frac 12\sum_{s=1}^n   \sum_{p:\,p \ne s} \frac{z_s z_pm_s m_p}{z_s-z_p} 
 \frac{1}{x-z_s} + (\mu+\nu/2) \sum_{i=1}^m \frac{t^0_i}{x-t^0_i} 
\\
&&
 \phantom{aa}
-\frac12\,(\mu+\nu/2)\sum_{s=1}^n \frac{z_s m_s}{x-z_s}
+ \sum_{i=1}^m \sum_{s=1}^n \frac{t^0_i z_s m_s}{t^0_i-z_s}  \frac{1}{x-t^0_i} -
\sum_{i=1}^m  \sum_{s=1}^n \frac{t^0_i z_s m_s}{t^0_i-z_s}  \frac{1}{x-z_s} .
\eea
In the expression above for each $i=1,\dots,m$ the coefficient of $\frac{1}{(x-t^0_i)^2}$ 
equals zero. The coefficient of $\frac{1}{x-t^0_i}$ equals
\bea
&&
(\mu+\nu/2+1)t^0_i - 
\sum_{j:\,j \ne i} \frac{2t^0_it^0_j}{t^0_i-t^0_j} 
+ \sum_{s=1}^n \frac{t^0_i z_sm_s}{t^0_i-z_s} 
\\
&&
 \phantom{aa}
= t^0_i 
\Big[ \mu+\nu/2+1  +  2 \sum_{j:\,j \ne i} \frac{t^0_j-t^0_i+t^0_i}{t^0_j-t^0_i}
 - \sum_{s=1}^n \frac{(z_s - t^0_i+t^0_i)m_s}{z_s-t^0_i} \Big] 
 \\
&&
 \phantom{aa}
= t^0_i \Big[ \mu+\nu/2+1  + 2(m-1) + \sum_{j:\,j \ne i} \frac{2t^0_i}
{t^0_j-t^0_i} - \sum_{s=1}^n m_s - \sum_{s=1}^n \frac{t^0_i m_s}{z_s-t^0_i} \Big]
\\
&&
 \phantom{aa}
= -(t^0_i)^2 \Big[ \frac{1-\mu+\nu/2}{t^0_i} + 
\sum_{j:\,j \ne i} \frac{2}{t^0_i-t^0_j} - 
\sum_{s=1}^n \frac{m_s}{t^0_i-z_s} \Big] = 0, 
\eea
where the last equality follows from
the Bethe ansatz equations \eqref{tr.BAE}. For each $s=1,\dots,n$
 the coefficient of $\frac{1}{(1-x/z_s)^2}$ equals
  $-m_s(m_s+2)/4$. The coefficient of $\frac{1}{1-x/z_s}$ equals
\bea
&&
\frac12 m_s + \frac12 m_s \sum_{p:\,p \ne s} \frac{z_p m_p}{z_s-z_p} 
+ \frac12 (\mu+\nu/2) m_s + \sum_{i=1}^m \frac{t^0_i m_s}{t^0_i-z_s} 
\\
&&
 \phantom{aa}
= \frac{m_s}{2} \Big[ 1+ \mu+ \nu/2 - \sum_{p:\,p \ne s} 
\frac{(z_p-z_s+z_s)m_p}{z_p-z_s} + 2 \sum_{i=1}^m \frac{t^0_i-z_s+z_s}{t^0_i-z_s} \Big] 
\\
&&
 \phantom{aa}
= \frac{m_s}{2} \Big[ 1+ \mu+ \nu/2 - \sum_{p \ne s} m_p - \sum_{p:\,p \ne s} 
\frac{z_s m_p}{z_p-z_s} +
 2\sum_{i=1}^m1 + 2 \sum_{i=1}^m \frac{z_s}{t^0_i-z_s} \Big] 
\\
&&
 \phantom{aa}
= \frac{m_s}{2} \Big[ (1+ \mu - \nu/2 + m_s) + \sum_{p:\,p \ne s}
 \frac{z_s m_p}{z_s-z_p} + 2 \sum_{i=1}^m \frac{z_s}{t^0_i-z_s} \Big] 
\\
&&
 \phantom{aa}
= m_s(m_s+2)/4 + \frac{m_s}{2} \Big[ (\mu - \nu/2 + m_s/2) + \sum_{p:\,p \ne s}
 m_p \frac{z_s}{z_s-z_p} + 2 \sum_{i=1}^m \frac{z_s}{t^0_i-z_s} \Big] 
 \\
&&
 \phantom{aa}
= m_s(m_s+2)/4 + k_s(t^0,z,\mu), 
\eea
where $k_s(t^0,z,\mu)$ are defined in \eqref{k_s}. Hence, $ E_2 = - (\ln w)'' - ((\ln w)')^2$.
\end{proof}

\begin{cor}
The function $w(x)$  lies in the kernel of $\mc E_{\,t^0;z;\mu}$.
\end{cor}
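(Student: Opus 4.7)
The plan is to deduce the corollary directly from the factorization established in Theorem \ref{Dthm}. By that theorem we may write
\begin{equation*}
\mc E_{\,t^0;z;\mu} \,=\, \bigl(\der_u + (\ln w)'\bigr)\bigl(\der_u - (\ln w)'\bigr),
\end{equation*}
so it suffices to check that $w(x)$ lies in the kernel of the right-hand factor $\der_u - (\ln w)'$. This is immediate: since $(\ln w)' = w'/w$, one has
\begin{equation*}
\bigl(\der_u - (\ln w)'\bigr) w \,=\, w' - \frac{w'}{w}\cdot w \,=\, 0,
\end{equation*}
and then applying the outer factor gives $\mc E_{\,t^0;z;\mu}\, w = 0$.

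The only step requiring any care is making sure $w(x)$ is a legitimate function on which these differential operators act, i.e.\ that one may indeed form $w'/w$ as a meromorphic function. This is fine because $y(x)$ is a nonzero polynomial (so $w$ is nonzero as a multivalued function away from $t^0_i$, $z_s$, and $0$), and all computations take place in the field of germs of such multivalued expressions where the formal identities $(\ln w)' = w'/w$ and the factorization of Theorem \ref{Dthm} hold. There is no substantive obstacle; the corollary is a one-line consequence once the factored form of $\mc E_{\,t^0;z;\mu}$ is available.
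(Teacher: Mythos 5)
Your proof is correct and is exactly the argument the paper intends: the corollary is an immediate consequence of the factorization $\mc E_{\,t^0;z;\mu}=\bigl(\der_u+(\ln w)'\bigr)\bigl(\der_u-(\ln w)'\bigr)$ from Theorem \ref{Dthm}, since the right-hand factor annihilates $w$. The paper gives no separate proof for this corollary precisely because it follows in one line, as you show.
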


\section{Function $\tilde w(x)$ in the kernel of $\mc E_{\,t^0;z;\mu}$}
\label{sec 7}

\subsection{Wronskian}

 \label{ssec Wr}
The {\emph{Wronskian}} of two  functions $f(a)$ and $g(a)$ is 
\bean
\label{Wr}
\Wr_a (f,g) = f \frac{d g}{d a} - \frac{d f}{d a} g.
\eean
We have 
\bean
\label{h^2}
\Wr_a(hf, hg) = h^2 \Wr(f,g)
\eean
for any function $h(a)$.

\subsection{Wronskian  and Bethe ansatz equations}

\begin{lem} 
\label{lem WBA}

The following two statements hold:

\begin{enumerate}

\item[(i)]

Let $\mu\notin \frac\nu2+\Z_{\geq 0}$. Let   $(t^0;z;\mu)$ be a solution
 of the Bethe ansatz equations \eqref{tr.BAE} and
$y(x)=  \prod_{i=1}^m(x-t^0_i)$.  Then there exists a unique 
monic polynomial $\tilde y(x)$ of degree $M-m$, such that 
\bean
\label{Wr.eqn22}
\Wr_x (y(x),x^{\mu-\nu/2} \tilde y(x)) \,=\,\const\, x^{\mu-\nu/2-1} \prod_{s=1}^n (x-z_s)^{m_s},
\eean
where $\const$ is a nonzero constant.

\item[(ii)]

Let  $\mu\ne \frac\nu2$. Assume that $y(x)=  \prod_{i=1}^m(x-t^0_i)$ is a polynomial
with distinct roots
 such that   $y(z_s)\ne 0$,  $s=1,\dots,n$. Assume that
 there exists a polynomial $\tilde y(x)$ such that
 equation \eqref{Wr.eqn22} holds. Then $(t^0_1,\dots,t^0_m;z;\mu)$ is a solution of the Bethe ansatz equations \eqref{tr.BAE}.
\end{enumerate}

\end{lem}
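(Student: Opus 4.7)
The plan is to prove \textup{(i)} by reduction of order in the operator $\mc E_{\,t^0;z;\mu} = \der_u^2 + E_2(x,t^0,z,\mu)$, starting from the corollary at the end of Section~\ref{sec 6} which identifies $w(x) = y(x)\,x^{(\nu/2-\mu)/2}\prod_{s=1}^n(x-z_s)^{-m_s/2}$ as one kernel element, and to prove \textup{(ii)} by a direct Laurent expansion at each simple root $t_i^0$ of $y$. The driving structural fact is that $\mc E_{\,t^0;z;\mu}$ has no first-order term, so the Wronskian $\Wr_u$ of any two solutions is constant; equivalently $x\,\Wr_x(f,g)$ is a nonzero constant whenever $f,g$ are linearly independent elements of $\ker\mc E_{\,t^0;z;\mu}$. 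The Wronskian identity~\eqref{Wr.eqn22} will then follow by applying $\Wr_x(fh,gh) = h^2\Wr_x(f,g)$ to $\Wr_x(w,\tilde w)$ with $h = \prod_s(x-z_s)^{-m_s/2}$ and a short power-of-$x$ bookkeeping.

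For existence in \textup{(i)} I set $\tilde w = w\int c\,du/w^2$ for a nonzero constant $c$; the substitution $du = -dx/(2\pi\sk\,x)$ turns the integrand into a scalar multiple of $x^{\mu-\nu/2-1}\prod_s(x-z_s)^{m_s}/y(x)^2\,dx$, a rational $1$-form with double poles only at the Bethe roots $t_i^0$. A direct Laurent expansion at $t_i^0$ (writing $y = (x-t_i^0)q_i(x)$ with $q_i(t_i^0) = \prod_{j\ne i}(t_i^0-t_j^0)$) shows its residue equals a nonzero scalar multiple of the left-hand side of~\eqref{tr.BAE} at $t_i^0$, hence vanishes by hypothesis. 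The antiderivative $\phi(x)$ is therefore single-valued with at worst simple poles at the $t_i^0$. Reading off the local form of $\tilde w = w\phi$ at each singular point of $\mc E_{\,t^0;z;\mu}$ yields: at $x=0$, $\phi \sim \const\cdot x^{\mu-\nu/2}$ so $\tilde w \sim x^{(\mu-\nu/2)/2}$; at $x=z_s$, the integrand vanishes to order $m_s$ so $\phi$ is holomorphic and $\tilde w$ retains the $(x-z_s)^{-m_s/2}$ singular factor; at each $t_i^0$ the simple pole of $\phi$ is absorbed by the simple zero of $w$; at $x=\infty$ a degree count gives $\tilde w \sim x^{(\mu+\nu/2)/2}$. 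These four local shapes force $\tilde w = x^{(\mu-\nu/2)/2}\tilde y(x)\prod_s(x-z_s)^{-m_s/2}$ with $\tilde y$ a polynomial on $\C$, and the comparison at infinity yields $\deg\tilde y = M-m$; rescaling $c$ makes $\tilde y$ monic. For uniqueness, the difference of two such monic solutions satisfies the homogeneous equation $\der_x[x^{\mu-\nu/2}(\tilde y_1-\tilde y_2)/y] = 0$, hence equals $Kx^{\nu/2-\mu}y$ for some constant $K$; the hypothesis $\mu\notin\frac{\nu}{2}+\Z_{\geq 0}$ together with the monic degree-$(M-m)$ constraint forces $K=0$.

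For \textup{(ii)}, the given identity can be rewritten — by cancelling the common factor $x^{\mu-\nu/2-1}$ and then dividing by $y^2$ (using $\Wr_x(y,\tilde y)/y^2 = \der_x(\tilde y/y)$) — as the equality $(\mu-\nu/2)(\tilde y/y) + x\,\der_x(\tilde y/y) = C\prod_s(x-z_s)^{m_s}/y^2$ of meromorphic functions on $\C$. Writing $\tilde y/y = R_{-1}/(x-t_i^0) + R_0 + \cdots$ at a simple zero $t_i^0$ of $y$ and matching coefficients of $1/(x-t_i^0)^2$ pins down $R_{-1}$ in terms of $C$, $t_i^0$ and $\prod_s(t_i^0-z_s)^{m_s}$; matching coefficients of $1/(x-t_i^0)$, after substituting $R_{-1}$ back and using the logarithmic-derivative identities $\bigl(\ln\prod_s(x-z_s)^{m_s}\bigr)'(t_i^0) = \sum_s m_s/(t_i^0-z_s)$ and $q_i'(t_i^0)/q_i(t_i^0) = \sum_{j\ne i}1/(t_i^0-t_j^0)$, reproduces exactly the Bethe ansatz equation~\eqref{tr.BAE} at $t_i^0$. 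The hypotheses $y(z_s)\ne 0$, $\mu\ne\nu/2$, and $C\ne 0$ ensure no division by zero during the rearrangement; in particular the double-pole matching forces $t_i^0\ne 0$. The main obstacle lies in part \textup{(i)}: coordinating the vanishing of residues at the Bethe roots (the only place where~\eqref{tr.BAE} enters) with the local analyses at the Fuchsian singular points $0,z_1,\ldots,z_n,\infty$ and the degree count at infinity so that the local descriptions assemble into the claimed factored form with $\tilde y$ polynomial of exact degree $M-m$.
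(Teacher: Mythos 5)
Your proposal takes a genuinely different route from the paper: the paper's entire proof is the citation ``this lemma is a reformulation of Theorem 3.2 and Corollary 3.3 in \cite{MV2}'' (where the argument is essentially algebraic, via the polynomial identity $(\mu-\tfrac\nu2)\,y\tilde y+x\Wr_x(y,\tilde y)=\const\prod_s(x-z_s)^{m_s}$ viewed as a triangular linear system for the coefficients of $\tilde y$), whereas you give a self-contained analytic proof by reduction of order. Your part (ii) is correct: the rewriting as $(\mu-\tfrac\nu2)(\tilde y/y)+x\,\der_x(\tilde y/y)=C\prod_s(x-z_s)^{m_s}/y^2$ and the matching of the $(x-t_i^0)^{-2}$ and $(x-t_i^0)^{-1}$ coefficients do reproduce \eqref{tr.BAE}, and the double-pole matching correctly forces $R_{-1}\ne 0$ and $t_i^0\ne 0$ so the division is legitimate. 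In part (i) the central computation is also right: the residue of $x^{\mu-\nu/2-1}\prod_s(x-z_s)^{m_s}/y(x)^2$ at a simple root $t_i^0$ of $y$ equals $\frac{F(t_i^0)}{q_i(t_i^0)^2}\bigl[\frac{\mu-\nu/2-1}{t_i^0}+\sum_s\frac{m_s}{t_i^0-z_s}-2\sum_{j\ne i}\frac1{t_i^0-t_j^0}\bigr]$, which vanishes precisely by \eqref{tr.BAE}.

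There is, however, a gap in part (i) at the edge of the stated hypothesis. Your local analysis at $x=0$ and $x=\infty$, and your uniqueness step, all silently assume $\mu-\frac\nu2\notin\Z$, while the lemma only assumes $\mu-\frac\nu2\notin\Z_{\geq 0}$, so $\mu-\frac\nu2$ may be a negative integer $-N$. In that case the integrand $x^{-N-1}\prod_s(x-z_s)^{m_s}y^{-2}\,dx$ can have a nonzero residue at $x=0$, so the antiderivative $\phi$ acquires a logarithm and the claim ``$\phi\sim\const\cdot x^{\mu-\nu/2}$ at $0$'' fails; similarly $\mu+\frac\nu2=\mu-\frac\nu2+\nu$ can land in $\Z_{\geq 0}$ and spoil the expansion at infinity (indeed $\mu+\frac\nu2=0$ already kills the leading coefficient, i.e.\ the ``nonzero $\const$''). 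The uniqueness step has the same problem: from $x^{\mu-\nu/2}(\tilde y_1-\tilde y_2)=Ky$ you conclude $K=0$, but when $\mu-\frac\nu2=-N$ with $1\le N\le\nu-1$ the homogeneous solution $x^{N}y$ is a polynomial of degree $N+m\le M-m-1$, so the monic degree-$(M-m)$ constraint does \emph{not} force $K=0$. You should either strengthen the hypothesis to $\mu-\frac\nu2\notin\Z$ (which is exactly what holds in every application in the paper, where $\mu-\frac\nu2=2\zeta\notin\Z$) and say so, or treat the integer case separately by the algebraic route of \cite{MV2}; as written, the assembly of the four local shapes into ``$\tilde y$ is a polynomial of exact degree $M-m$, unique'' does not follow from the hypothesis actually stated.
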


\begin{proof}
This lemma is a reformulation of Theorem 3.2 and Corollary 3.3 in \cite{MV2}.
\end{proof}

\subsection{Function $\tilde w(x)$} \label{ssec utild}

Recall that we have  a solution $(t^0; z; \mu)$ of the Bethe ansatz equations, the differential
operator $\mc E_{\,t^0\!,z,\mu}$  and the function
$w(x) \,=\,y(x)\, x^{\frac{\nu/2-\mu}{2}} \,\prod_{s=1}^n (x-z_s)^{-m_s/2}$, where
$y(x) =\prod_{i=1}^m (x- t^0_i)$.

\begin{thm}
\label{quasipthm}

Let $\mu\notin \frac\nu2+\Z_{\geq 0}$. Then there exists a unique 
monic polynomial $\tilde y(x)$ of degree $M-m$, such that the function
\bean
\label{utild}
\tilde w(x) \,= \,\tilde y(x)\, x^{\frac{\mu-\nu/2}{2} }\, \prod_{s=1}^n (x-z_s)^{-m_s/2}
\eean 
lies in the kernel of $\mc E_{\,t^0; z;\mu}$. The functions $w(x), \tilde w(x)$ span the kernel of
 $\mc E_{\,t^0 ; z; \mu}$.

\end{thm}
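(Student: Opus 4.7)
The plan is to invoke Lemma~\ref{lem WBA}(i) to produce the required polynomial $\tilde y$, then verify that the resulting $\tilde w$ is annihilated by $\mc E_{t^0;z;\mu}$ via an Abel-type Wronskian argument. Since the hypothesis $\mu\notin\frac\nu 2+\Z_{\geq 0}$ is exactly the assumption of Lemma~\ref{lem WBA}(i), we immediately obtain a unique monic polynomial $\tilde y(x)$ of degree $M-m$ for which \eqref{Wr.eqn22} holds; the uniqueness assertion of the theorem will be inherited from this step.

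With $\tilde w$ defined by \eqref{utild}, the central calculation is to show that $\Wr_u(w,\tilde w)$ is a nonzero constant. Since $\partial_u=-2\pi\sqrt{-1}\,x\,\partial_x$, one has $\Wr_u(w,\tilde w)=-2\pi\sqrt{-1}\,x\,\Wr_x(w,\tilde w)$. Both $w$ and $\tilde w$ share the common factor $h(x):=\prod_{s=1}^n(x-z_s)^{-m_s/2}$, so by \eqref{h^2},
\[
\Wr_x(w,\tilde w)=h(x)^2\,\Wr_x\bigl(y(x)\,x^{-(\mu-\nu/2)/2},\,\tilde y(x)\,x^{(\mu-\nu/2)/2}\bigr).
\]
A second application of \eqref{h^2} pulling out $x^{-(\mu-\nu/2)/2}$ rewrites the remaining Wronskian as $x^{-(\mu-\nu/2)}\Wr_x(y,\,x^{\mu-\nu/2}\tilde y)$, which by \eqref{Wr.eqn22} equals $\const\cdot x^{\mu-\nu/2-1}\prod_{s=1}^n(x-z_s)^{m_s}$. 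The factor $h(x)^2$ cancels the radical product and the $x$-powers collapse to $x^{-1}$, so $\Wr_x(w,\tilde w)=\const\cdot x^{-1}$, and therefore $\Wr_u(w,\tilde w)=-2\pi\sqrt{-1}\cdot\const$ is a nonzero constant.

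To close, use the factorization \eqref{Dform}: for any function $\varphi$,
\[
(\partial_u-(\ln w)')\varphi=\frac{w\,\partial_u\varphi-(\partial_u w)\,\varphi}{w}=\frac{\Wr_u(w,\varphi)}{w},
\]
and the kernel of $\partial_u+(\ln w)'$ is spanned by $1/w$. Applied to $\varphi=\tilde w$ this gives $(\partial_u-(\ln w)')\tilde w=C/w$ with $C$ the constant computed above, and hence $\mc E_{t^0;z;\mu}\tilde w=0$. Because $\Wr_u(w,\tilde w)\ne 0$, the functions $w$ and $\tilde w$ are linearly independent; since $\mc E_{t^0;z;\mu}$ is second order its kernel is two-dimensional, so $\{w,\tilde w\}$ spans it. The main obstacle in this proof is purely the Wronskian bookkeeping --- making the fractional $x$-powers and the square-root radicals in $w,\tilde w$ cancel cleanly against the right-hand side of \eqref{Wr.eqn22} so that $\Wr_x(w,\tilde w)$ ends up a constant multiple of $x^{-1}$; once this is done, Abel-type reasoning applied to the factored form \eqref{Dform} does the rest.
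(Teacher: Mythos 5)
Your proposal is correct and follows essentially the same route as the paper: invoke Lemma \ref{lem WBA}(i) to get $\tilde y$, divide the Wronskian identity \eqref{Wr.eqn22} by $x^{\mu-\nu/2}\prod_s(x-z_s)^{m_s}$ via \eqref{h^2} to see that $\Wr_u(w,\tilde w)$ is a nonzero constant, and conclude by the reduction-of-order/Abel argument for an operator with no first-order term. The only cosmetic difference is that you make the last step explicit through the factorization \eqref{Dform}, where the paper simply observes that the kernel of $\der_u^2+E_2$ consists of the functions whose $u$-Wronskian with $w$ is constant.
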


\begin{proof}

The differential operator $\mc E_{\,t^0; z; \mu}$ introduced in \eqref{hatD} has no first order term. 
Hence the kernel of $\mc E_{\,t^0; z; \mu}$ consists of  the functions $\tilde w(x)$ satisfying  the equation
\bean
\label{WSo}
\Wr_u (w(x), \tilde w(x)) = \const\,.
\eean
 By Lemma \ref{lem WBA},  there exists a unique monic polynomial $\tilde y(x)$ of degree $M-m$, such that 
equation \eqref{Wr.eqn22} holds.
Dividing both sides of \eqref{Wr.eqn22} by $x^{\mu-\nu/2} \prod_{s=1}^n (x-z_s)^{m_s}$ we obtain
\bean
\label{Wr.eqn2}
\phantom{aaa}
\Wr_x\! \Big( y(x)\, x^{\frac{\nu/2-\mu}{2}} \prod_{s=1}^n (x-z_s)^{-m_s/2},\,
\tilde y(x) \,x^{\frac{\mu-\nu/2}{2} } \prod_{s=1}^n (x-z_s)^{-m_s/2} \Big) = \,\const\,x^{-1}.
\eean
Recall that $x=e^{-2\pi \sqrt{-1} u}$, hence $\der_u  = -2\pi \sqrt{-1}\, x \der_x$. This implies equation \eqref{WSo}.
The theorem is proved.
\end{proof}

\subsection{Bethe ansatz equations for triples $(z;\mu; V[\nu])$ and $(z; -\mu; V[-\nu])$}

\begin{lem}
\label{lem yty}
Let $\mu\notin \frac\nu2+\Z_{\geq 0}$.
Let $(t^0; z; \mu)$ be a solution of the Bethe ansatz equations \eqref{tr.BAE} assigned to
the triple $(z; \mu; V[\nu])$ in Section \ref{sec BAE}.  Let 
\bean
\label{roty}
\tilde y(x) \,=\, \prod_{i=1}^{M-m}(x-\tilde t^{\,0}_i)
\eean
be the polynomials assigned to $(t^0; z; \mu)$ in Theorem \ref{quasipthm}. If
 $\tilde y(x)$ has distinct roots and $\tilde y(z_s)\ne 0$ for $s=1,\dots,n$,
 then
$(\tilde t^{\,0}_1,\dots,\tilde t^{\,0}_{M-m}; z; -\mu)$ is a solution of the Bethe ansatz equations 
\eqref{tr.BAE} assigned to the triple $(z; -\mu;  V[-\nu])$.

\end{lem}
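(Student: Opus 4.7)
The plan is to apply Lemma~\ref{lem WBA}(ii) (the converse direction) to the new triple $(z;-\mu;V[-\nu])$, with the roles of the two polynomials interchanged relative to the application used to produce $\tilde y$. I first check the degree bookkeeping: since $\nu=M-2m$, we have $-\nu=M-2(M-m)$, so the Bethe ansatz system \eqref{tr.BAE} associated with $(z;-\mu;V[-\nu])$ is a system in $M-m$ unknowns, matching the number of roots of $\tilde y(x)$.

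Next, I transform the Wronskian identity provided by Theorem~\ref{quasipthm} (equivalently Lemma~\ref{lem WBA}(i)),
\begin{equation*}
\Wr_x\bigl(y(x),\, x^{\mu-\nu/2}\,\tilde y(x)\bigr)\,=\,C\, x^{\mu-\nu/2-1}\prod_{s=1}^n(x-z_s)^{m_s},\qquad C\in\C^\times,
\end{equation*}
by multiplying both arguments by $h(x)=x^{-\mu/2+\nu/4}$ and invoking the identity $\Wr_x(hf,hg)=h^2\Wr_x(f,g)$ from \eqref{h^2}, and then swapping the arguments of the Wronskian (which introduces a sign). The result is
\begin{equation*}
\Wr_x\bigl(\tilde y(x),\, x^{-\mu+\nu/2}\,y(x)\bigr)\,=\,-C\, x^{-\mu+\nu/2-1}\prod_{s=1}^n(x-z_s)^{m_s}.
\end{equation*}

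This is precisely equation \eqref{Wr.eqn22} written for the triple $(z;-\mu;V[-\nu])$, with $\tilde y$ playing the role of the ``$y$'' polynomial (of degree $M-m$) and $y$ playing the role of the ``$\tilde y$'' polynomial (of degree $m$). The hypotheses of Lemma~\ref{lem WBA}(ii) for this new instance are: $\tilde y$ has distinct roots and $\tilde y(z_s)\neq 0$ for every $s$ (both given by assumption), and $-\mu\neq -\nu/2$, i.e.\ $\mu\neq \nu/2$, which follows from $\mu\notin\frac{\nu}{2}+\Z_{\geq 0}$. Applying Lemma~\ref{lem WBA}(ii) then yields that $(\tilde t^{\,0}_1,\dots,\tilde t^{\,0}_{M-m};z;-\mu)$ is a solution of \eqref{tr.BAE} for the triple $(z;-\mu;V[-\nu])$. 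The only subtlety is the bookkeeping of signs, exponents, and degrees; the substantive content is the observation that the Wronskian equation is symmetric under $(\mu,y)\leftrightarrow(-\mu,\tilde y)$ up to a multiplicative factor of the form $h^2$, which is exactly what allows Lemma~\ref{lem WBA}(i) and (ii) to be used to toggle between the two triples.
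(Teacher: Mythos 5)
Your proposal is correct and follows essentially the same route as the paper: rewrite the Wronskian identity \eqref{Wr.eqn22} via \eqref{h^2} into the form \eqref{Wr.eqn22} for the triple $(z;-\mu;V[-\nu])$ with the roles of $y$ and $\tilde y$ exchanged (using $-\nu=M-2(M-m)$), and then invoke Lemma~\ref{lem WBA}(ii). The only blemish is the multiplier: to land directly on your displayed equation $\Wr_x(\tilde y,\,x^{-\mu+\nu/2}y)=-C\,x^{-\mu+\nu/2-1}\prod_s(x-z_s)^{m_s}$ you should take $h(x)=x^{-\mu+\nu/2}$ (so $h^2=x^{-2\mu+\nu}$), not $h(x)=x^{-\mu/2+\nu/4}$, which would instead produce the symmetric form with exponents $\pm(\mu/2-\nu/4)$; the displayed equation and the rest of the argument are nevertheless correct.
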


\begin{proof}
Equation \eqref{Wr.eqn22} can be rewritten as 
\bean
\label{Wr2}
\Wr_x (x^{-\mu+\nu/2}y(x), \tilde y(x)) \,=\,\const\, x^{-\mu+\nu/2-1} \prod_{s=1}^n (x-z_s)^{m_s}.
\eean
Now the lemma follows from the equalities $-\nu = -M+2m=M-2(M-m)$
and Lemma \ref{lem WBA}.
\end{proof}

\begin{thm}
[{\cite[Theorem 5.7]{MV2}}]
\label{thm WB}

 Under assumptions of Lemma \ref{lem yty} consider 
the Bethe vectors  $\om(t^0\,,z, \mu) \in V[\nu]$ and
 $\om(\tilde t^{\,0}\!,z,-\mu) \in V[\nu]$.  Then
 \bean
 \label{BVA}
\mc A\Big(\mu +\frac {\nu}2-1\Big)
\om(t^0\,,z,\mu) =\, \const\,
\om(\tilde t^{\,0}\!,z,-\mu),
\eean
where $\const$ is a nonzero constant.

\end{thm}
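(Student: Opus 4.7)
The plan is to identify the fundamental differential operators attached to the two Bethe data, deduce that the corresponding Gaudin eigenvalues agree, and then invoke the intertwining property of $\mc A$ together with a simplicity statement for the joint eigenspace.

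First, I would compare the fundamental differential operators $\mc E_{\,t^0,z,\mu}$ and $\mc E_{\,\tilde t^{\,0}\!,z,-\mu}$. By Theorem \ref{quasipthm} applied to the triple $(z;\mu;V[\nu])$, the kernel of $\mc E_{\,t^0,z,\mu}$ is spanned by
\be
w(x)=y(x)\,x^{(\nu/2-\mu)/2}\prod_{s=1}^n(x-z_s)^{-m_s/2},\qquad
\tilde w(x)=\tilde y(x)\,x^{(\mu-\nu/2)/2}\prod_{s=1}^n(x-z_s)^{-m_s/2}.
\ee
By Lemma \ref{lem yty}, $\tilde t^{\,0}$ is a solution of the Bethe ansatz equations for $(z;-\mu;V[-\nu])$, and the Wronskian relation \eqref{Wr2} is symmetric under the simultaneous exchange $(y,\mu,\nu)\leftrightarrow(\tilde y,-\mu,-\nu)$ (up to an overall sign). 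Hence the monic polynomial produced by Theorem \ref{quasipthm} applied to the triple $(z;-\mu;V[-\nu])$ and the solution $\tilde t^{\,0}$ is $y(x)$ itself, and the corresponding pair of kernel functions is again $\{\tilde w(x),w(x)\}$. Since both $\mc E_{\,t^0,z,\mu}$ and $\mc E_{\,\tilde t^{\,0}\!,z,-\mu}$ are monic second order scalar differential operators with vanishing first order coefficient and a common two-dimensional kernel, they coincide.

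Next, comparing the coefficients of $(1-x/z_s)^{-1}$ in \eqref{sc D2} applied to the two triples yields
\be
k_s(t^0\!,z,\mu)=k_s(\tilde t^{\,0}\!,z,-\mu),\qquad s=1,\dots,n.
\ee
By Lemma \ref{lem AK}, the operator $\mc A(\mu+\nu/2-1)$ intertwines $\K_s(z,\mu)$ with $\K_s(z,-\mu)$, so Theorem \ref{eigenv} gives
\be
\K_s(z,-\mu)\,\mc A\bigl(\mu+\tfrac{\nu}{2}-1\bigr)\om(t^0\!,z,\mu)
=k_s(t^0\!,z,\mu)\,\mc A\bigl(\mu+\tfrac{\nu}{2}-1\bigr)\om(t^0\!,z,\mu),
\ee
and by the equality of eigenvalues above, this is the same tuple of scalars by which $\{\K_s(z,-\mu)\}$ act on the Bethe vector $\om(\tilde t^{\,0}\!,z,-\mu)\in V[-\nu]$. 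Both vectors are nonzero: the first by Theorem \ref{thm Bnon} and the isomorphism property in Lemma \ref{lem v iso}, the second by Theorem \ref{thm Bnon} directly.

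What remains is to show that the joint eigenspace in $V[-\nu]$ of the commuting Gaudin operators $\K_1(z,-\mu),\dots,\K_n(z,-\mu)$ with this eigenvalue tuple is one-dimensional; then the two eigenvectors must be proportional. For generic $z$ this follows from Lemma \ref{basis}: the Bethe vectors form a basis of $V[-\nu]$, and by Lemma \ref{hatB2 lem} distinct Bethe vectors give distinct fundamental differential operators, hence distinct eigenvalue tuples. The main obstacle is handling the non-generic situation, which I would address by arguing that both sides of \eqref{BVA} depend rationally on $z$ on a common domain where the hypotheses of Lemma \ref{lem yty} remain in force, so the identity extends by continuity from the Zariski dense locus of generic $z$ to the full locus where both Bethe vectors are defined.
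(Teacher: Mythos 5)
The paper does not prove this theorem: it is quoted verbatim from \cite[Theorem 5.7]{MV2}, so there is no internal proof to compare against. Judged on its own terms, your strategy (identify the two fundamental operators, deduce equality of the Gaudin eigenvalues, intertwine with $\mc A$, conclude proportionality from simplicity of the joint eigenspace for generic $z$, then degenerate) is a sensible reconstruction, and its first half is essentially Lemma \ref{lem e=e} and Corollary \ref{cor =eig} of the paper. One remark on that half: to conclude that $w(x)$ lies in the kernel of $\mc E_{\,\tilde t^{\,0};z;-\mu}$ you should not invoke Theorem \ref{quasipthm} for the triple $(z;-\mu;V[-\nu])$ — its hypothesis there reads $-\mu\notin-\frac\nu2+\Z_{\geq 0}$, which does not follow from $\mu\notin\frac\nu2+\Z_{\geq 0}$ — but rather argue directly that $\Wr_u(\tilde w,w)=\const\neq 0$ by \eqref{Wr.eqn2}, which already places $w$ in the kernel of the first-order-term-free operator $\mc E_{\,\tilde t^{\,0};z;-\mu}$.

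The genuine gaps are in the second half. First, the assertion that distinct Bethe vectors have distinct eigenvalue tuples is exactly the point that needs proof, and Lemma \ref{hatB2 lem} does not give it: you must show that the tuple $(k_1,\dots,k_n)$ determines $E_2$, hence the operator, hence its kernel, hence the monic factor $y(x)$ and so the unordered tuple $t^0$. Recovering $y$ from the kernel uses that the exponents $\frac{\nu/2-\mu}2$ and $\frac{\mu-\nu/2}2$ at $x=0$ differ by the non-integer $\mu-\nu/2$; this, the invertibility of $\mc A(\mu+\frac\nu2-1)$ needed for ``$\const\neq 0$'' (Lemma \ref{lem v iso} requires $\mu\notin\frac M2+\Z$), and the applicability of Theorem \ref{quasipthm} to the reflected triple all require $\mu-\frac\nu2\notin\Z$, which is strictly stronger than the hypotheses of Lemma \ref{lem yty} that you cite. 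Second, the passage from generic $z$ to the stated hypotheses is not a matter of rational dependence on $z$: the roots $t^0(z)$, $\tilde t^{\,0}(z)$ are branches of an algebraic correspondence over the $z$-space, so you would have to set up the vanishing of the $2\times2$ minors of the pair $\bigl(\mc A(\mu+\frac\nu2-1)\om(t^0\!,z,\mu),\,\om(\tilde t^{\,0}\!,z,-\mu)\bigr)$ as an identity on an irreducible component of that correspondence containing the given solution, check that the generic locus of Lemma \ref{basis} meets that component, and separately verify that the proportionality constant does not vanish in the limit. None of this is routine enough to leave to ``continuity,'' which is presumably why the paper defers to \cite{MV2} rather than arguing along these lines.
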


\begin{cor}
\label{cor =eig}
 Under assumptions of Lemma \ref{lem yty}, for $s=1,\dots,n$,
 the eigenvalue of $\K_s(z,\mu)$ on $\om(t^0\!,z,\mu)$ equals 
 the eigenvalue of $\K_s(z,-\mu)$ on $\om(\tilde t^{\,0}\!,z,\mu)$.
\end{cor}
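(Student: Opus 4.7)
The plan is to combine Theorem \ref{thm WB} with the intertwining relation for $\mc A$ stated in Lemma \ref{lem AK}. Concretely, the vector isomorphism $\mc A(\mu+\nu/2-1):V[\nu]\to V[-\nu]$ intertwines the trigonometric Gaudin operators $\K_s(z,\mu)$ and $\K_s(z,-\mu)$, and sends the Bethe eigenvector $\om(t^0\!,z,\mu)$ to a nonzero multiple of the Bethe vector $\om(\tilde t^{\,0}\!,z,-\mu)$. Together, these two facts translate the $\K_s(z,\mu)$-eigenvalue on $\om(t^0\!,z,\mu)$ into the $\K_s(z,-\mu)$-eigenvalue on $\om(\tilde t^{\,0}\!,z,-\mu)$.

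In more detail, first I would recall that by Theorem \ref{eigenv} the Bethe vector $\om(t^0\!,z,\mu)$ is an eigenvector of $\K_s(z,\mu)$ with eigenvalue $k_s(t^0\!,z,\mu)$. Applying $\mc A(\mu+\nu/2-1)$ to both sides of the eigenvalue equation, and using Lemma \ref{lem AK} to move $\mc A(\mu+\nu/2-1)$ past $\K_s(z,\mu)$ (converting it into $\K_s(z,-\mu)$), one obtains
\begin{equation*}
\K_s(z,-\mu)\,\mc A\!\Big(\mu+\tfrac{\nu}{2}-1\Big)\om(t^0\!,z,\mu) \,=\,k_s(t^0\!,z,\mu)\,\mc A\!\Big(\mu+\tfrac{\nu}{2}-1\Big)\om(t^0\!,z,\mu).
\end{equation*}
By Theorem \ref{thm WB}, the right‑hand vector equals a nonzero constant times $\om(\tilde t^{\,0}\!,z,-\mu)$, and this vector is nonzero by Theorem \ref{thm Bnon}. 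Dividing out the nonzero constant yields
\begin{equation*}
\K_s(z,-\mu)\,\om(\tilde t^{\,0}\!,z,-\mu)\,=\,k_s(t^0\!,z,\mu)\,\om(\tilde t^{\,0}\!,z,-\mu),
\end{equation*}
which is exactly the claim.

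There is essentially no obstacle: every ingredient is already established earlier in the paper. The only thing to verify is that the hypotheses of Lemma \ref{lem yty} ensure that Theorem \ref{thm WB} applies (so that the constant in \eqref{BVA} is indeed nonzero and $\om(\tilde t^{\,0}\!,z,-\mu)$ makes sense as a Bethe vector for the triple $(z;-\mu;V[-\nu])$), and that $\mu\notin\tfrac{\nu}{2}+\Z_{\ge 0}$ together with the condition $\mu\notin\tfrac{M}{2}+\Z$ used in Lemma \ref{lem v iso} is in force so that $\mc A(\mu+\nu/2-1)$ is invertible and Lemma \ref{lem AK} can be applied. Once these conditions are noted, the argument is a one‑line chase through the intertwining identity.
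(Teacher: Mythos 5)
Your proof is correct and follows essentially the same route as the paper: the paper's one-line proof invokes the intertwining relation of Lemma \ref{lem AK} (via Theorem \ref{thm isom mu}) together with Theorem \ref{thm WB}, which immediately precedes the corollary, exactly as you do. Your version is simply a more explicit write-up of that chain, and your remark that invertibility of $\mc A$ is not actually needed (only the nonvanishing of the constant in \eqref{BVA} and of the Bethe vector) is accurate.
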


\begin{proof}
The corollary follows from Lemma \ref{lem AK} and Theorem \ref{thm isom mu}.
\end{proof}








\section{Conjugates of $\D$ and  $\mc E_{\,t^0; z; \mu}$}
\label{sec 8}

\subsection{Conjugate of $\D$}

Recall the universal differential operator $\D=\der_u^2+D_2(x)$  introduced in \eqref{Dcdet}, where
the coefficient $D_2(x)$ is determined by formula \eqref{DD2}. 
We introduce the operator 
\bean
\label{Dconj}
\D^{c} = \frac{1}{(2\pi \sqrt{-1}\,x)^2}\,\prod_{s=1}^n (x-z_s)^{m_s/2} \cdot \D \cdot \prod_{s=1}^n (x-z_s)^{-m_s/2},
\eean
where the superscript $^c$ stays for the word ``conjugated''.

\begin{thm}
\label{Dconjthm}
We have 
\bean
\label{Dconjform}
\phantom{aaaaaa}
\D^{c} 
&=&
 \der_x^2 + \Big[ \frac{1}{x} - \sum_{s=1}^n \frac{m_s}{x- z_s} \Big] \der_x     
-  \frac{1}{x} \sum_{s=1}^n \frac{m_s/2}{x- z_s} + \sum_{s=1}^n \frac{m_s(m_s+2)/4}{(x- z_s)^2}
\\ 
 &+&
    \sum_{s \ne p} \frac{m_s m_p/4}{(x- z_s)(x- z_p)}  
-\frac{\mu^2+\mu (e_{11} - e_{22}) -  e_{11} e_{22}}{4x^2}    
 \notag
 \\  
&-&
\frac{1}{x^2} \,\sum_{s=1}^n \left[  z_s \frac{m_s(m_s+2)/4 + 
\K_s(z,\mu)}{x- z_s} +  z_s^2 \frac{m_s(m_s+2)/4}{(x- z_s)^2} 
\right].
 \nonumber
\eean

\end{thm}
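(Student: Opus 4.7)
The proof is a direct computation. The plan has three ingredients: (i) rewriting $\der_u$ in the variable $x$, (ii) conjugating a second order scalar operator by the multiplicative function $h(x)=\prod_{s=1}^n(x-z_s)^{m_s/2}$, and (iii) matching partial fractions.

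First I would use $x=e^{-2\pi\sqrt{-1}\,u}$, so that $\der_u=-2\pi\sqrt{-1}\,x\der_x$ and hence
\[
\der_u^2 \;=\; (2\pi\sqrt{-1})^2\bigl(x^2\der_x^2+x\der_x\bigr).
\]
Dividing $\mathcal D=\der_u^2+D_2(x)$ by $(2\pi\sqrt{-1}\,x)^2$ turns it into the scalar differential operator
$\der_x^2+\tfrac{1}{x}\der_x+\tfrac{D_2(x)}{(2\pi\sqrt{-1}\,x)^2}$, with operator-valued coefficients acting on $V$. All the $\End(V)$-valued coefficients commute with multiplication by the scalar function $h(x)$, so conjugation by $h$ affects only the differential part.

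Next, I would apply the standard identity $h\der_x h^{-1}=\der_x-(\ln h)'$. Squaring yields
\[
h\,\der_x^2\,h^{-1} \;=\; \der_x^2-2(\ln h)'\der_x-(\ln h)''+((\ln h)')^2,
\qquad
h\cdot\tfrac{1}{x}\der_x\cdot h^{-1}=\tfrac{1}{x}\der_x-\tfrac{(\ln h)'}{x}.
\]
Plugging $(\ln h)'=\sum_s\tfrac{m_s/2}{x-z_s}$ into these identities, the coefficient of $\der_x$ becomes $\tfrac{1}{x}-\sum_s\tfrac{m_s}{x-z_s}$, which is the first asserted term. Expanding $((\ln h)')^2$ and using $-(\ln h)''=\sum_s\tfrac{m_s/2}{(x-z_s)^2}$ yields
\[
((\ln h)')^2-(\ln h)'' \;=\;\sum_{s=1}^n\frac{m_s(m_s+2)/4}{(x-z_s)^2}+\sum_{s\ne p}\frac{m_s m_p/4}{(x-z_s)(x-z_p)},
\]
which together with $-\tfrac{(\ln h)'}{x}$ gives the three purely scalar terms of the claimed formula.

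Finally, I would substitute $D_2(x)$ from Theorem \ref{Dthm} and rewrite $\tfrac{1}{1-x/z_s}=-\tfrac{z_s}{x-z_s}$ and $\tfrac{1}{(1-x/z_s)^2}=\tfrac{z_s^2}{(x-z_s)^2}$. Dividing by $x^2$ produces exactly the Cartan and Gaudin contributions in the last two lines of \eqref{Dconjform}, with the correct sign. There is no real obstacle here: the only possible pitfall is sign and factor bookkeeping in the two substitutions $\der_u\to x\der_x$ and $(1-x/z_s)^{-1}\to -z_s/(x-z_s)$, and in the cross terms of $((\ln h)')^2$, but each of these is mechanical once the order of conjugation is fixed.
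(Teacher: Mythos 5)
Your proposal is correct and follows essentially the same route as the paper: change variables via $\der_u=-2\pi\sqrt{-1}\,x\der_x$, conjugate by $\prod_s(x-z_s)^{\pm m_s/2}$, and substitute the partial-fraction form of $D_2(x)$ from \eqref{DD2}. The only cosmetic difference is that you organize the conjugation through $(\ln h)'$ and the identity $h\der_x h^{-1}=\der_x-(\ln h)'$, whereas the paper expands $f\der_x^2+2f'\der_x+f''$ with $f=h^{-1}$ by the Leibniz rule; the two computations are identical term by term.
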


\begin{proof} 

 Recall that $x = e^{-2 \pi \sqrt{-1} u}$, 
 $\partial_u = -2 \pi \sqrt{-1} \,x \partial_x, \; \partial_u^2 = (2\pi \sqrt{-1})^2 ( x \partial_x + x^2 \partial_x^2 )$. 
 Denote \\
 $f = \prod_{s=1}^n (x-z_s)^{-m_s/2}$.  We have
$f' =   -\sum_{s=1}^n \frac{m_s/2}{x-z_s}  \,f\,$,
\bea
f'' =  \Big( \sum_{s=1}^n \frac{m_s^2/4}{(x-z_s)^2} +
 \sum_{s \ne p} \frac{m_s m_p / 4}{(x-z_s)(x-z_p)} + \sum_{s=1}^n \frac{m_s/2}{(x-z_s)^2} \Big) f,
\eea
where $' = \der/\der x$. Therefore,
\bea
\D^{c}
& =& \frac{1}{x^2} f^{-1} \Big[x^2 \der_x^2 + x\der_x + \frac{1}{(2\pi \sqrt{-1})^2} D_2(x) \Big] f 
\\
&=&
 \frac{1}{x^2} f^{-1} \Big[ x^2 \big( f \der_x^2 + 2 f' \der_x + f'' \big) 
 + x \big(f \der_x + f' \big) + \frac{1}{(2\pi\sqrt{-1} )^2} D_2(x) f \Big]
\\
&=&
 \der_x^2 + \Big[ 2 f^{-1} f' + \frac{1}{x} \Big] \der_x + \Big[ f^{-1} f'' + \frac{1}{x} f^{-1} f' + 
 \frac{1}{x^2} \frac{1}{(2\pi \sqrt{-1})^2} D_2(x) \Big],
\eea
which gives the right-hand side of formula \eqref{Dconjform}.
\end{proof}

\subsection{Conjugate of $\mc E_{\,t^0;z;\mu}$\,} Similarly to the conjugation of $\D$ we conjugate 
$\mc E_{\,t^0\!,z,\mu}$ and consider the differential operator
\bean
\label{conj Dt}
\mc E_{\,t^0; z; \mu}^c = \frac{1}{(2\pi \sqrt{-1}\,x)^2}\,\prod_{s=1}^n (x-z_s)^{m_s/2} \cdot 
\mc E_{\,t^0; z; \mu} \cdot \prod_{s=1}^n (x-z_s)^{-m_s/2}.
\eean

\begin{lem}
\label{kerDconj}

The kernel of $\mc E_{\,t^0; z; \mu}^c$ is spanned by quasi-polynomials 
\bean
\label{qp PQ}
 x^{\frac{\nu/2-\mu}{2}} y(x),
 \qquad
  x^{\frac{\mu-\nu/2}{2} }\tilde y(x),
  \eean
where $y(x)$ is the monic polynomial of degree $m$, defined in \eqref{Y&u}, and $\tilde y(x)$ is the monic polynomial of degree
$M-m$, defined in Theorem \ref{quasipthm}.
\qed  
\end{lem}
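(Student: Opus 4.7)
The plan is to deduce the lemma directly from Theorem \ref{quasipthm} by tracking how the kernel transforms under the conjugation \eqref{conj Dt}. Explicitly, if $M_h$ denotes multiplication by a nowhere-zero function $h$ and $T$ is any scalar-coefficient differential operator, then $\ker(M_h^{-1}\, T\, M_h) = M_h^{-1}\ker T$. Since the prefactor $(2\pi\sqrt{-1}\,x)^{-2}$ in \eqref{conj Dt} is invertible (we are working on $\C-\{0,z_1,\dots,z_n\}$), it does not affect the kernel, so
\begin{equation*}
\ker \mc E_{\,t^0; z; \mu}^c \,=\, \prod_{s=1}^n (x-z_s)^{m_s/2}\cdot \ker \mc E_{\,t^0; z; \mu}.
\end{equation*}

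Next I would invoke Theorem \ref{quasipthm}, which supplies the two-dimensional basis $\{w(x),\tilde w(x)\}$ of $\ker \mc E_{\,t^0; z; \mu}$, where
\begin{equation*}
w(x) = y(x)\,x^{(\nu/2-\mu)/2}\prod_{s=1}^n(x-z_s)^{-m_s/2}, \qquad \tilde w(x) = \tilde y(x)\,x^{(\mu-\nu/2)/2}\prod_{s=1}^n(x-z_s)^{-m_s/2}.
\end{equation*}
Multiplying each basis function by $\prod_{s=1}^n(x-z_s)^{m_s/2}$ cancels the factor $\prod_{s=1}^n(x-z_s)^{-m_s/2}$ in $w$ and $\tilde w$, leaving precisely $x^{(\nu/2-\mu)/2}y(x)$ and $x^{(\mu-\nu/2)/2}\tilde y(x)$, which gives the asserted spanning set. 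Since conjugation by an invertible multiplication operator is a linear isomorphism of function spaces, the dimension of the kernel is preserved, so these two quasi-polynomials form a basis.

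There is no real obstacle here beyond the bookkeeping of exponents; the content of the lemma has already been absorbed into Theorem \ref{quasipthm}, and the conjugation identity is a one-line computation. The only point worth flagging is that one should verify the two quasi-polynomials are linearly independent, which is immediate whenever $\mu\neq \nu/2$ (guaranteed by the hypothesis $\mu\notin \tfrac{\nu}{2}+\Z_{\geq 0}$ of Theorem \ref{quasipthm}), because then $x^{(\nu/2-\mu)/2}$ and $x^{(\mu-\nu/2)/2}$ differ by a non-integer power of $x$.
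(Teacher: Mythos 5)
Your proposal is correct and is exactly the argument the paper intends: the lemma is stated with no proof (just \qed) precisely because it follows immediately from the conjugation formula \eqref{conj Dt} and Theorem \ref{quasipthm} by multiplying the basis $w,\tilde w$ of $\ker \mc E_{\,t^0;z;\mu}$ by $\prod_{s=1}^n(x-z_s)^{m_s/2}$. Your bookkeeping of the kernel under conjugation and the linear-independence remark are both accurate.
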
 

\begin{lem}
\label{lem e=e}
Under assumptions of Lemma \ref{lem yty}, let
$(t^0; z; \mu)$ be a solution of the Bethe ansatz equations \eqref{tr.BAE} assigned to
the triple $(z; \mu; V[\nu])$. Assume that 
the numbers $\tilde t^{\,0}_1,\dots,\tilde t^{\,0}_{M-m}$ defined in Lemma \ref{lem yty}
are such that 
$(\tilde t^{\,0}_1,\dots,\tilde t^{\,0}_{M-m}; z; -\mu)$ is a solution of the Bethe ansatz equations 
\eqref{tr.BAE} assigned to the triple $(z; -\mu;  V[-\nu])$. Then
\bean
\label{e=e}
\mc E_{\,\tilde t^{\,0}; z; -\mu}^c
=
\mc E_{\,t^0; z; \mu}^c\,.
\eean
\qed
\end{lem}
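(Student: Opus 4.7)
The plan is to reduce the equality of the two conjugated operators to a matching of their kernels, exploiting the fact that after conjugation they share the same first-order term. First I would verify, by a calculation parallel to that of Theorem~\ref{Dconjthm}, that
\begin{equation*}
\mc E^c_{\,t^0;z;\mu} \,=\, \der_x^2 \,+\, \Big[\,\frac{1}{x} - \sum_{s=1}^n \frac{m_s}{x-z_s}\,\Big]\der_x \,+\, g_1(x),
\end{equation*}
and similarly $\mc E^c_{\,\tilde t^{\,0};z;-\mu} = \der_x^2 + \big[\frac{1}{x} - \sum_s \frac{m_s}{x-z_s}\big]\der_x + g_2(x)$ for some $g_2(x)$. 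The crucial observation is that the first-order coefficient arises only from the change of variables $\der_u = -2\pi\sqrt{-1}\,x\der_x$ and from the conjugation factor $\prod_s (x-z_s)^{-m_s/2}$; it depends neither on the Bethe solution nor on $\mu$, $\nu$. Hence the difference $\mc E^c_{\,t^0;z;\mu} - \mc E^c_{\,\tilde t^{\,0};z;-\mu}$ is the operator of multiplication by the meromorphic function $\Delta(x) := g_1(x) - g_2(x)$.

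Next I would exhibit a common nonzero element of the two kernels. By Theorem~\ref{quasipthm} applied to $(t^0;z;\mu)$ in $V[\nu]$, the function $\tilde w(x) = \tilde y(x)\,x^{(\mu-\nu/2)/2}\prod_{s=1}^n (x-z_s)^{-m_s/2}$ lies in the kernel of $\mc E_{\,t^0;z;\mu}$, so after undoing the conjugation factor $\prod_s (x-z_s)^{-m_s/2}$ the quasi-polynomial $\tilde y(x)\,x^{(\mu-\nu/2)/2}$ lies in the kernel of $\mc E^c_{\,t^0;z;\mu}$. On the other hand, by Lemma~\ref{lem yty} the tuple $(\tilde t^{\,0};z;-\mu)$ is a Bethe solution for $V[-\nu]$ with associated polynomial $\tilde y(x) = \prod_i(x-\tilde t^{\,0}_i)$, and by the corollary at the end of Section~\ref{sec 6} applied to this new solution, the function
\begin{equation*}
\tilde y(x)\,x^{((-\nu/2)-(-\mu))/2}\prod_{s=1}^n (x-z_s)^{-m_s/2} \,=\, \tilde y(x)\,x^{(\mu-\nu/2)/2}\prod_{s=1}^n (x-z_s)^{-m_s/2}
\end{equation*}
lies in the kernel of $\mc E_{\,\tilde t^{\,0};z;-\mu}$, whence $\tilde y(x)\,x^{(\mu-\nu/2)/2}$ also lies in the kernel of $\mc E^c_{\,\tilde t^{\,0};z;-\mu}$.

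Finally, since the multiplication operator $\Delta(x)$ annihilates the nonzero function $\tilde y(x)\,x^{(\mu-\nu/2)/2}$, we conclude $\Delta(x)\equiv 0$, which proves \eqref{e=e}. There is no real obstacle here: the essential content is supplied by Lemma~\ref{lem yty}, which ensures that the companion polynomial $\tilde y(x)$ for $(t^0;z;\mu)$ simultaneously plays the role of the Bethe polynomial for the dual solution $(\tilde t^{\,0};z;-\mu)$, so that the two operators have a transparent common kernel element.
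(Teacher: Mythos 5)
Your proof is correct. The paper itself offers no written argument for this lemma (it is stated with a \qed), the implicit justification being Lemma \ref{kerDconj}: both $\mc E^c_{\,t^0;z;\mu}$ and $\mc E^c_{\,\tilde t^{\,0};z;-\mu}$ are monic second-order operators whose kernels are spanned by the same pair of quasi-polynomials $x^{\frac{\nu/2-\mu}{2}}y(x)$ and $x^{\frac{\mu-\nu/2}{2}}\tilde y(x)$ --- the point being that, by the symmetry of the Wronskian equation \eqref{Wr.eqn22} under transposing the two quasi-polynomials, the companion polynomial attached to the dual solution $(\tilde t^{\,0};z;-\mu)$ is $y(x)$ itself --- and two monic order-two operators with the same two-dimensional kernel coincide. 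Your route is genuinely different and in fact slightly more economical: you first check that the conjugation kills any discrepancy in the first-order term (correct, since the coefficient $\frac1x-\sum_s\frac{m_s}{x-z_s}$ comes only from $\der_u=-2\pi\sqrt{-1}\,x\der_x$ and the factor $\prod_s(x-z_s)^{-m_s/2}$, neither of which depends on $\mu$, $\nu$, or the Bethe roots), so the difference of the two operators is multiplication by a single meromorphic function, and then you need only \emph{one} common nonzero kernel element, supplied by Theorem \ref{quasipthm} on one side and by the Corollary to Theorem \ref{Dthm} of Section \ref{sec 6} (applied to the dual solution) on the other. This buys you something concrete: you never have to invoke the uniqueness statement of Theorem \ref{quasipthm} for the dual data $(-\mu,-\nu)$, whose hypothesis $-\mu\notin-\frac\nu2+\Z_{\geq0}$ is not literally among the assumptions of Lemma \ref{lem yty}; you use only the hypothesis, already built into the lemma, that $(\tilde t^{\,0};z;-\mu)$ solves the Bethe equations for $(z;-\mu;V[-\nu])$.
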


\section{Space of $V$-valued functions of $z_1,\dots,z_n$}
\label{sec 9}

\subsection{Space $V_1^{\ox n}[\nu]$}
\label{sec V1}

Recall the two-dimensional irreducible $\slt$-module $V_1$ 
with basis 
$v^1_0\,,\,v^1_1$\,,  see \eqref{V_m}.
In the rest of the paper we assume that $V$ is the tensor power of $V_1$, 
\bean
\label{V1n}
V\,=\, V_1^{\ox n}\,, \qquad\on{where} \ \ n>1.
\eean
The space $V$ has a basis of vectors
\bea
v_I=v_{i_1}^1\ox\dots\ox v_{i_n}^1\,,
\eea
labeled by partitions $I=(I_1,I_2)$ of $\{1,\dots, n\}$, where
 $\,i_j =0$ if $j\in I_1$, and $\,i_j =1$ if $j\in I_2$.
 We have the weight decomposition $V=\oplus_{m=0}^n V[n-2m]$, where
$V[n-2m]$
is of dimension $\binom{n}{m}$ and
has the basis  $\{v_I\ |\ I=(I_1,I_2), \ |I_1|=m, \ |I_2|=n-m\}$.

\smallskip
We use notations \
$\nu=n-2m$, \ $\ell = n-m$,\ and hence\ $ m+\ell=n$.

\subsection{Space $\V^{S}$} 
Let $z=(z_1,\dots,z_n)$ be variables. The symmetric group 
$S_n$  acts on the algebra $\C[z_1,\dots,z_n]$ by permuting the variables. Let $\si_s(z)$,
$s=1,\dots,n$, be the $s$-th elementary symmetric polynomial in $z_1,\dots, z_n$.
 The algebra of
symmetric polynomials $\C[z_1,\dots, z_n]^S$ is a free polynomial algebra with generators 
$\si_1(z),\dots,\si_n(z)$.

\smallskip

Let $\V$ be the space of polynomials in $z_1,\dots,z_n$ with coefficients
in $V_1^{\ox n}$,
\bea
\V = V_1^{\ox n}\ox \C[z_1,\dots,z_n].
\eea
The symmetric group $S_n$ acts on $\V$ by permuting the factors of $V_1^{\ox n}$
and the variables $z_1,\dots,z_n$ simultaneously,
\bea
\rho(v_1\ox\dots\ox v_n\ox p(z_1,\dots,z_n))=
v_{(\rho^{-1})(1)}\ox\dots\ox v_{(\rho^{-1})(n)}\ox p(z_{\rho(1)}, \dots, z_{\rho(n)}),\quad \rho\in S_n.
\eea
We denote by $\V^S$ the subspace of $S_n$-invariants in $\V$.

\begin{lem}
[\cite{MTV3}]
 The space $\V^S$ is a free $\C[z_1,\dots, z_n]^S$-module of rank $2^n$.
\end{lem}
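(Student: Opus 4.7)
The plan is to reduce the freeness statement to the Chevalley--Shephard--Todd theorem for the reflection group $S_n$ acting on $\C[z]:=\C[z_1,\dots,z_n]$. By that theorem, $\C[z]$ is a free $\C[z]^S$-module, and moreover there exists an $S_n$-stable graded subspace $\mc H\subset\C[z]$ (the space of $S_n$-harmonic polynomials, i.e.\ a graded complement to the ideal $\C[z]^S_+\cdot\C[z]$) such that the multiplication map
\beq
\C[z]^S\ox_{\C}\mc H\;\longrightarrow\;\C[z]
\eeq
is simultaneously an isomorphism of $\C[z]^S$-modules and of $S_n$-modules, where $S_n$ acts trivially on the first factor of the left-hand side. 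A key refinement of Chevalley's theorem is that, as an $S_n$-representation, $\mc H$ is isomorphic to the regular representation $\C[S_n]$; in particular $\dim\mc H=n!$.

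Granting this, combine it with $V_1^{\ox n}$ to obtain the chain of $\C[z]^S$- and $S_n$-equivariant isomorphisms
\beq
\V\;=\;V_1^{\ox n}\ox\C[z]\;\cong\;V_1^{\ox n}\ox\C[z]^S\ox\mc H\;\cong\;\C[z]^S\ox_{\C}\bigl(V_1^{\ox n}\ox\mc H\bigr),
\eeq
with the diagonal $S_n$-action on both sides (trivial on the $\C[z]^S$-factor on the right). Since taking $S_n$-invariants commutes with tensoring by a module carrying the trivial $S_n$-action, passing to $S_n$-invariants yields
\beq
\V^S\;\cong\;\C[z]^S\ox_{\C}\bigl(V_1^{\ox n}\ox\mc H\bigr)^{S_n}.
\eeq
This already displays $\V^S$ as a free $\C[z]^S$-module of rank $\dim(V_1^{\ox n}\ox\mc H)^{S_n}$.

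To compute the rank, use the general fact that for any finite group $G$ and any finite-dimensional $G$-module $W$, the diagonal invariants $(W\ox\C[G])^G$ are isomorphic (as a vector space) to $W$; an explicit isomorphism sends $w\in W$ to $\sum_{g\in G}g^{-1}w\ox g$. Applied to $G=S_n$, $W=V_1^{\ox n}$, and the identification $\mc H\cong\C[S_n]$, this gives
\beq
\dim\bigl(V_1^{\ox n}\ox\mc H\bigr)^{S_n}\;=\;\dim V_1^{\ox n}\;=\;2^n,
\eeq
completing the argument.

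\textbf{Main obstacle.} The only nontrivial input is the $S_n$-equivariant form of Chevalley's theorem — that the multiplication map $\C[z]^S\ox\mc H\to\C[z]$ is $S_n$-equivariant and that $\mc H$ realizes the regular representation. These are classical facts about the coinvariant algebra of $S_n$ which one invokes rather than proves; once they are accepted, everything else is a straightforward manipulation of diagonal invariants. An alternative, more hands-on route would be to exhibit an explicit $\C[z]^S$-basis of $\V^S$ by antisymmetrizing (or more precisely, $S_n$-averaging with appropriate signs and tensor factors) a basis of $V_1^{\ox n}\ox\mc H$, e.g.\ Schubert-polynomial or descent-monomial representatives of $\mc H$; this avoids naming the harmonics abstractly but requires a bit more bookkeeping.
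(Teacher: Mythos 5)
Your argument is correct. Note that the paper gives no proof of its own for this lemma --- it simply cites \cite{MTV3} --- so the comparison is with the standard argument behind that reference. That argument (visible in the refinement recorded as Lemma~\ref{lem frV} of this paper) goes weight space by weight space: $V[n-2m]$ is the permutation module $\on{Ind}_{S_m\times S_{n-m}}^{S_n}\C$, whence $(V[n-2m]\ox\C[z])^{S_n}\cong \C[z]^{S_m\times S_{n-m}}$, which is a free $\C[z]^{S_n}$-module of rank $\binom{n}{m}$ by two applications of Chevalley's freeness theorem (to $\C[z]$ over $\C[z]^{S_m\times S_{n-m}}$ and over $\C[z]^{S_n}$); summing $\binom{n}{m}$ over $m$ gives $2^n$. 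Your route instead treats all of $V_1^{\ox n}$ at once, tensoring against the coinvariant algebra and using that the harmonics $\mc H$ realize the regular representation together with the identity $\dim(W\ox\C[G])^G=\dim W$. This is more uniform --- it works verbatim with any $S_n$-module in place of $V_1^{\ox n}$ and yields freeness and the rank in one stroke --- at the price of invoking the finer fact that $\mc H\cong\C[S_n]$ as an $S_n$-module, which the weight-by-weight argument does not need. One small convention point: with the \emph{left} regular action on $\C[G]$ the invariant attached to $w$ is $\sum_{g}gw\ox g$ rather than $\sum_{g}g^{-1}w\ox g$ (your formula is the one adapted to the right regular action); this does not affect the dimension count, which is all you actually use.
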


Consider the grading on $\C[z_1,\dots,z_n]$
 such that $\deg z_s = 1$ for all $s = 1,\dots, n$. 
 We define a
grading on $\V$ by setting $\deg(v \ox p) = \deg p$ for any $v \in V_1^{\ox n}$ and 
$p\in\C[z_1,\dots,z_n]$. The
grading on $\V$ induces a  grading on $\End(\V)$.

\smallskip
The Lie algebras $\slt\subset\glt$ naturally act on $\V^S$. We have the weight decomposition
\bea
\V^S=\oplus_{m=0}^n \V^S[n-2m],
\qquad
\V^S[n-2m] = (V[n-2m]\ox\C[z_1,\dots,z_n])^{S}\,.
\eea

\smallskip
Let $M$ be a $\Z_{>0}$-graded space with finite-dimensional homogeneous components. Let
$M_j\subset M$ be the homogeneous component of degree $j$. The formal power series in a
variable $\al$,
$\ch_M(\al) =\sum_{j=0}^\infty (\dim M_j)\, \al^j,$\,
is called the graded character of $M$.

\begin{lem}
[\cite{MTV2}]
\label{lem frV}
The space $\V^S[n-2m]$  is a free $\C[z_1,\dots, z_n]^S$-module of rank $\binom{n}{m}$
and
\bean
\label{ch V}
\ch_{\V^S[n-2m]}(\al) \,=\, \prod_{i=1}^m \frac 1{1-\al^i} \cdot \prod_{i=1}^{n-m} \frac 1{1-\al^i}\,.
\eean
\end{lem}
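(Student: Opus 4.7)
The plan is to identify the graded $\C[z_1,\dots,z_n]^{S}$-module $\V^S[n-2m]$ with the ring $\C[z_1,\dots,z_n]^{S_m\times S_{n-m}}$ of partial invariants, and then invoke Chevalley's theorem for both the freeness and the Hilbert-series calculations.

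First I would describe $V[n-2m]$ as a $\C S_n$-module. Its basis $\{v_I : I=(I_1,I_2),\ |I_1|=m\}$ is permuted transitively by $S_n$, and the stabilizer of the reference vector $v_{I_0}$ with $I_0=(\{1,\dots,m\},\{m+1,\dots,n\})$ is the Young subgroup $S_m\times S_{n-m}$. Thus $V[n-2m]\cong \C[S_n/(S_m\times S_{n-m})]$ as $\C S_n$-modules. A standard Frobenius-reciprocity/Shapiro argument then shows that the symmetrization map
\begin{equation*}
\Phi\colon \C[z_1,\dots,z_n]^{S_m\times S_{n-m}}\;\longrightarrow\;\V^S[n-2m],\qquad
f\mapsto \sum_{\rho}\rho\cdot\bigl(v_{I_0}\ox f\bigr),
\end{equation*}
where $\rho$ runs over a set of coset representatives for $S_n/(S_m\times S_{n-m})$, is a degree-preserving isomorphism of $\C[z_1,\dots,z_n]^{S}$-modules. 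Its inverse extracts the coefficient of $v_{I_0}$ from any $S_n$-invariant element.

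Next I would compute the Hilbert series from this description. Since $S_m$ and $S_{n-m}$ act on disjoint sets of variables,
\begin{equation*}
\C[z_1,\dots,z_n]^{S_m\times S_{n-m}}\;\cong\;\C[z_1,\dots,z_m]^{S_m}\ox_\C\C[z_{m+1},\dots,z_n]^{S_{n-m}},
\end{equation*}
and by Chevalley each factor is a polynomial ring in elementary symmetric functions of degrees $1,2,\dots,m$ and $1,2,\dots,n-m$ respectively. Multiplying their Hilbert series immediately yields \eqref{ch V}.

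For freeness over $\C[z_1,\dots,z_n]^{S}$, I would argue as follows. By Chevalley's theorem, $\C[z_1,\dots,z_n]$ is a free $\C[z_1,\dots,z_n]^{S}$-module (of rank $n!$). The Reynolds operator $\frac{1}{m!(n-m)!}\sum_{\sigma\in S_m\times S_{n-m}}\sigma$ projects $\C[z_1,\dots,z_n]$ onto $\C[z_1,\dots,z_n]^{S_m\times S_{n-m}}$ as graded $\C[z_1,\dots,z_n]^{S}$-modules, so the latter is a graded direct summand of a free module, hence graded projective, hence graded free over the polynomial ring $\C[z_1,\dots,z_n]^{S}$. The rank is then read off from the ratio of Hilbert series,
\begin{equation*}
\lim_{\al\to 1}\frac{\prod_{i=1}^m(1-\al^i)^{-1}\prod_{i=1}^{n-m}(1-\al^i)^{-1}}{\prod_{i=1}^n(1-\al^i)^{-1}}\;=\;\frac{n!}{m!\,(n-m)!}\;=\;\binom{n}{m}.
\end{equation*}
The only delicate point is the graded-freeness step, but it follows at once from the Reynolds-summand argument together with Chevalley; the remainder of the proof is bookkeeping.
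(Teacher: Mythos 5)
Your proof is correct. The paper itself gives no argument for this lemma --- it is quoted from \cite{MTV2} --- and your route (identifying $V[n-2m]$ with the permutation module $\C[S_n/(S_m\times S_{n-m})]$, hence $\V^S[n-2m]$ with the partial invariants $\C[z_1,\dots,z_n]^{S_m\times S_{n-m}}$, then applying Chevalley for the Hilbert series and the Reynolds-summand argument for graded freeness) is exactly the standard argument used in the Mukhin--Tarasov--Varchenko papers for statements of this type, so you have in effect supplied the omitted proof rather than an alternative one.
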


\subsection{Bethe algebra of $\V^S[\nu]$}

Recall the differential operator $\D^c$ introduced in \eqref{Dconjform}
for $V=\oplus_{s=1}^nV_{m_s}$ and depending on  parameter $\mu\in \C$.
 For $V=V_1^{\ox n}$ the operator $\D^c$ takes the form
\bean
\label{Dz}
\mc F= \der_x^2 + F_1(x)\der_x + F_2(x),
\eean
where
\bean
\label{B}
\phantom{aaa}
F_1(x) 
&=&
  \frac{1}{x} - \sum_{s=1}^n \frac{1}{x- z_s}\, ,
\\
\notag
F_2(x) 
&=&
-  \frac{1}{x} \, \sum_{s=1}^n \frac{1/2}{x- z_s} + \sum_{s=1}^n \frac{3/4}{(x- z_s)^2}
+ \sum_{s \ne p} \frac{1/4}{(x- z_s)(x- z_p)}
\\ 
 &-&
    \frac{\mu^2+\mu (e_{11} - e_{22}) -  e_{11} e_{22}}{4x^2}    
-
\frac{1}{x^2} \,\sum_{s=1}^n \left[  z_s \frac{3/4 + 
\K_s(z,\mu)}{x- z_s} +  z_s^2 \frac{3/4}{(x- z_s)^2} 
\right].
\notag
\eean
In formula \eqref{Dconjform} we had $\{z_1,\dots,z_n\}$ being
 a subset of $ \C^\times$. From now on we consider
$z_1,\dots,z_n$ as independent variables.

\smallskip
The operator $\mc F$ in formula \eqref{Dz} with variables $z_1,\dots,z_n$ 
is called the {\it universal differential operator} for $\V^S$ with parameter $\mu\in\C$.

\begin{lem}
[{cf. {\cite[Section 2.7]{MTV3}}}]
\label{lem LG}
The Laurent expansions of $F_1(x)$ and $F_2(x)$ at infinity have the form
\bean
\label{Fij}
F_1(x) = \sum_{j=1}^\infty F_{1j} x^{-j}\,, 
\qquad  
F_2 (x) = \sum_{j=2}^\infty F_{2j} x^{-j}\,,
\eean
where 
\bea
F_{11} 
&=&
1 - n, 
\qquad 
F_{1j} = - \sum_{s=1}^n z_s^{j-1} \quad \on{for} \; j\geq 2 .  
\eea
For any $j\geq 2$, the element $F_{2j}$ is a homogeneous polynomial
in $z_1,\dots,z_n$ of degree $j-2$ with coefficients in
$\End(V)$. The element $F_{2j}$ preserves the weight decomposition of
$\V$. Each of the elements  $F_{1j}$, $j\geq 1$, $F_{2j}$, $j\geq 2$, defines an endomorphism 
of the $\C[z_1,\dots,z_n]^{S}$-module $\V^S$.
\end{lem}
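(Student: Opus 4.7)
I would compute the Laurent expansions at $x = \infty$ directly from the explicit formulas in \eqref{B}, using the elementary identities
\begin{equation*}
\frac{1}{x-z_s} = \sum_{k=0}^\infty z_s^k\, x^{-k-1}, \qquad \frac{1}{(x-z_s)^2} = \sum_{k=0}^\infty (k+1) z_s^k\, x^{-k-2}.
\end{equation*}
For $F_1(x) = x^{-1} - \sum_s 1/(x-z_s)$ this immediately gives $F_{11} = 1 - n$ and $F_{1j} = -\sum_s z_s^{j-1}$ for $j \geq 2$. Each of these is an element of $\C[z_1,\dots,z_n]^S$, hence acts on $\V^S$ as a $\C[z_1,\dots,z_n]^S$-linear multiplication operator.

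For $F_2(x)$, I would expand each of the six summands in \eqref{B} separately. Every summand carries an explicit prefactor of $x^{-1}$ or $x^{-2}$ and its remaining $x$-dependent factor is $O(1)$ or smaller at infinity, so the expansion of $F_2(x)$ starts at $j = 2$. Collecting the coefficient of $x^{-j}$ produces a sum of expressions that are homogeneous of total degree $j-2$ in $z_1,\dots,z_n$ (the cross term $\sum_{s\ne p}1/((x-z_s)(x-z_p))$ is handled via the partial fraction identity $(z_s^k - z_p^k)/(z_s - z_p) = \sum_{a+b=k-1} z_s^a z_p^b$, which also shows manifest polynomiality). The only summand whose polynomiality is not obvious is the Gaudin contribution.

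The main obstacle is to show that the combination $-\sum_s z_s^{j-2}\K_s(z,\mu)$, which feeds into $F_{2j}$ for $j \geq 3$, is a polynomial in $z$ of total degree $j-2$, despite the denominators $z_s - z_t$ appearing in $\K_s$ via the $r$-matrix. Writing $r(z_s/z_t) = (\Omega_+^{(s,t)} z_s + \Omega_-^{(s,t)} z_t)/(z_s - z_t)$ and using the symmetries $\Omega_+^{(t,s)} = \Omega_-^{(s,t)}$ and $\Omega_-^{(t,s)} = \Omega_+^{(s,t)}$, one pairs the $(s,t)$ and $(t,s)$ summands and, after simplification of the combined numerator, obtains
\begin{equation*}
-\sum_{\{s,t\}}\bigl(\Omega_+^{(s,t)} z_s + \Omega_-^{(s,t)} z_t\bigr)\cdot\frac{z_s^{j-2} - z_t^{j-2}}{z_s - z_t}.
\end{equation*}
The ratio is the complete homogeneous symmetric polynomial of degree $j-3$ (and vanishes for $j=2$), so the entire expression is polynomial of total degree $j-2$. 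The diagonal piece $\tfrac{\mu}{2}(e_{11}-e_{22})^{(s)}$ of $\K_s$ contributes the manifestly polynomial term $-\tfrac{\mu}{2}\sum_s z_s^{j-2}(e_{11}-e_{22})^{(s)}$.

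Finally, weight preservation for each $F_{2j}$ is immediate since every elementary constituent ($e_{11}$, $e_{22}$, $e_{11}e_{22}$, and each $\K_s$) preserves the weight decomposition of $V$. To see that $F_{2j}$ defines an endomorphism of the $\C[z_1,\dots,z_n]^S$-module $\V^S$, I would observe that the defining formula is assembled from $S_n$-symmetric templates (sums over a single index $s$ of quantities of the form $z_s^a g^{(s)}$, sums over unordered pairs, etc.), so it commutes with the simultaneous permutation action of $S_n$ on the tensor factors of $V_1^{\ox n}$ and the variables $z_1,\dots,z_n$. Together with polynomiality this yields that $F_{2j}$ acts on $\V$ and restricts to a $\C[z_1,\dots,z_n]^S$-linear endomorphism of $\V^S$.
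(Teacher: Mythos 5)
Your computation is correct and is precisely the ``straightforward calculation'' that the paper's proof alludes to without carrying out: direct Laurent expansion of \eqref{B}, the partial-fraction/complete-homogeneous-symmetric-polynomial identity for the cross terms, the pairing of $(s,t)$ with $(t,s)$ via $\Omega_+^{(t,s)}=\Omega_-^{(s,t)}$ to cancel the $z_s-z_t$ denominators in the Gaudin contribution, and $S_n$-equivariance of the resulting templates. No gaps; this matches the paper's (omitted) argument.
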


\begin{proof}
The proof follows from  straightforward calculations.
\end{proof}

\begin{lem}
The elements $F_{1j}$, $j\geq 1$, $F_{2j}$, $j\geq 2$, considered as endomorphisms of
 the $\C[z_1,\dots,z_n]^{S}$-module  $\V^S$, commute.
\end{lem}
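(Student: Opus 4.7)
The plan is to reduce the claim to Lemma \ref{D2 comm} together with the observation that $F_1(x)$ is scalar-valued.

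First I would handle all brackets involving some $F_{1j}$. Inspection of formula \eqref{B} shows that $F_1(x)$ contains no Lie-algebra operator, so each Laurent coefficient $F_{1j}$ acts on $\V^S$ as multiplication by a symmetric polynomial in $z_1,\dots,z_n$. Such operators commute with one another, and by Lemma \ref{lem LG} each $F_{2k}$ lies in $\End(V)\ox\C[z_1,\dots,z_n]$, so it commutes with multiplication by any polynomial in $z_1,\dots,z_n$. This settles the pairs $(F_{1j},F_{1k})$ and $(F_{1j},F_{2k})$.

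The remaining case $[F_{2j},F_{2k}]=0$ is the main content, and my plan is first to establish the stronger identity $[F_2(a),F_2(b)]=0$ as $\End(V)$-valued rational functions of the two spectral parameters, and then pass to Laurent coefficients. Unwinding the conjugation \eqref{Dconj} with $m_s=1$, one sees that the only Lie-algebra-valued contribution to $F_2(x)$ comes from $D_2(x)$; more precisely,
\begin{equation*}
F_2(x)\,=\,\phi(x,z)\,\id\,+\,\frac{1}{(2\pi\sqrt{-1}\,x)^2}\,D_2(x),
\end{equation*}
where $\phi(x,z)$ is a scalar rational function of $x$ (with polynomial $z$-dependence) produced by conjugating $\partial_x^2$ by $\prod_s(x-z_s)^{-1/2}$. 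Scalars commute with every endomorphism, so $[F_2(a),F_2(b)]=(2\pi\sqrt{-1})^{-4}a^{-2}b^{-2}[D_2(a),D_2(b)]$, which vanishes by Lemma \ref{D2 comm}.

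Finally, using the Laurent expansion $F_2(x)=\sum_{j\geq 2}F_{2j}\,x^{-j}$ at infinity from Lemma \ref{lem LG}, the identity
\begin{equation*}
\sum_{j,k\geq 2}[F_{2j},F_{2k}]\,a^{-j}b^{-k}\,=\,0,
\end{equation*}
which is valid for all sufficiently large $|a|,|b|$, forces $[F_{2j},F_{2k}]=0$ in $\End(V)\ox\C[z_1,\dots,z_n]$ by matching coefficients; this commutativity then descends to $\V^S$. I do not expect a serious obstacle. The one step requiring care is verifying the scalar-plus-$D_2$ decomposition of $F_2(x)$ displayed above, which is a direct unwinding of the definition of $\D^c$ in \eqref{Dconj}.
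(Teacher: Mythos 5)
Your proposal is correct and takes essentially the same route as the paper: the paper's one-line proof likewise reduces the claim to the commutativity of the trigonometric Gaudin operators visible in \eqref{Dconjform}, leaving implicit the facts that $F_1(x)$ and the remaining terms of $F_2(x)$ are scalar and that the identity passes to Laurent coefficients. Your decomposition $F_2(x)=\phi(x,z)\,\id+(2\pi\sqrt{-1}\,x)^{-2}D_2(x)$ is a correct unwinding of \eqref{Dconj}, consistent with the computation in Theorem \ref{Dconjthm}, and Lemma \ref{D2 comm} (which also records commutation with the Cartan subalgebra) supplies exactly what is needed.
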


\begin{proof}
The commutativity follows from the commutativity of the trigonometric Gaudin operators
in formula \eqref{Dconjform}.
\end{proof}

For a  weight subspace $\V^S[\nu]$, $\nu = n-2m$, $\ell = n-m$,
consider the commutative subalgebra
$\B(\mu; m;\ell)$ 
of the algebra of endomorphisms of the  $\C[z_1,\dots,z_n]^{S}$-module  $\V^S[\nu]$,
generated by the elements $F_{1j}$, $j\geq 1$, $F_{2j}$, $j\geq 2$.
The subalgebra $\B(\mu; m;\ell)$ 
 is called the {\it Bethe algebra} of $\V^S[\nu]$ with parameter $\mu\in \C$.

\begin{lem}
\label{lem z in B}
The Bethe algebra $\B(\mu; m;\ell)$ contains the subalgebra of
operators of  multiplication by
elements of $\C[z_1,\dots,z_n]^S$.
\end{lem}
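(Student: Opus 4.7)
The plan is to read off multiplication operators directly from the generators of $\B(\mu;m;\ell)$. By Lemma \ref{lem LG}, for every $j\ge 2$ we have
\be
F_{1j} \,=\, -\sum_{s=1}^n z_s^{j-1},
\ee
an element of $\C[z_1,\dots,z_n]^{S}$ acting on $\V^S$ purely as a scalar multiplication operator (it involves no factor in $\End(V)$). Thus $F_{1j}$, viewed as an endomorphism of the $\C[z_1,\dots,z_n]^{S}$-module $\V^S[\nu]$, is multiplication by the power sum $p_{j-1} = \sum_{s=1}^n z_s^{j-1}$, up to sign.

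Next I would invoke Newton's identities, which express each elementary symmetric polynomial $\sigma_k(z)$ as a polynomial in the power sums $p_1,\ldots,p_k$. Since $\C[z_1,\dots,z_n]^{S}$ is freely generated by $\sigma_1(z),\ldots,\sigma_n(z)$, and these lie in $\C[p_1,\ldots,p_n]$, the multiplication operators by $\sigma_1(z),\ldots,\sigma_n(z)$ all lie in the subalgebra generated by the $F_{1j}$ with $2\le j\le n+1$. Therefore the full subalgebra of multiplication operators by $\C[z_1,\dots,z_n]^{S}$ is contained in $\B(\mu;m;\ell)$.

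There is no real obstacle here: the generators $F_{1j}$ for $j\ge 2$ already provide the power sums, and the passage from power sums to arbitrary symmetric polynomials is classical. The only point to verify is that the $F_{1j}$ indeed act as honest multiplication operators on $\V^S$, which is immediate from their explicit scalar form in Lemma \ref{lem LG}.
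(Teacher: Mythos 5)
Your proof is correct and follows essentially the same route as the paper's: the paper simply notes that the subalgebra of multiplication operators by $\C[z_1,\dots,z_n]^S$ is generated by the elements $F_{1j}$, $j\geq 1$, and you have merely spelled out the (standard, characteristic-zero) step from power sums to elementary symmetric polynomials via Newton's identities. Nothing further is needed.
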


\begin{proof}
The subalgebra of  operators of multiplication by elements of 
$\C[z_1,\dots,z_n]^S$ is generated by
the elements $F_{1j}$, $j\geq 1$, see Lemma \ref{lem LG}.
\end{proof}

Lemma \ref{lem z in B} makes the Bethe algebra $\B(\mu; m;\ell)$
a $\C[z_1,\dots,z_n]^S$-module.

\subsection{Weyl group invariance}

For a weight subspace $V[\nu]=V_1^{\ox n}[\nu]$ recall   the operator 
$\mc A\big(\mu+ \nu/2-1\big) : V[\nu] \to V[-\nu]$,  defined in \eqref{mc A}.
It is an isomorphism of vector spaces,  if $\mu \notin \frac n2 + \Z$.
That operator induces an isomorphism of  $\C[z_1,\dots,z_n]^S$-modules,

\bean
\label{mc Amu}
\mc A\big(\mu+ \nu/2-1\big) : \V^S[\nu] \to \V^S[-\nu].
\eean
\vsk.2>

\begin{lem}
\label{lem BB iso}
Let $\mu \notin \frac n2 + \Z$. Let $F_{ij}(\mu, m,\ell)$ be the generators
of $\B(\mu;m;\ell)$, defined in \eqref{Fij},  and
$F_{ij}(-\mu,\ell, m)$  the generators
of $\B(-\mu;\ell;m)$.  Then
\bean
\label{Fij iso}
F_{ij}(-\mu,\ell, m) \,=\,
\mc A\Big(\mu +\frac {\nu}2-1\Big) F_{ij}(\mu, m,\ell)\,
\mc A\Big(\mu +\frac {\nu}2-1\Big)^{-1}
\eean
for all $i,j$. 
The map
 
\bean
\label{mu-mu}
\B(\mu;m;\ell) \to \B(\mu;m;\ell), \quad F_{ij}(\mu; m; \ell)
\mapsto
F_{ij}(-\mu,\ell, m),
\eean

\smallskip
\noindent
is an isomorphism of algebras and of $\C[z_1,\dots,z_n]^S$-modules.
The maps  in \eqref{mc Amu} and \eqref{mu-mu} define an isomorphism
between the $\B(\mu;m;\ell)$-module $\V^S[\nu]$
and the $\B(-\mu;\ell;m)$-module $\V^S[-\nu]$.

\end{lem}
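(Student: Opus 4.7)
The plan is to reduce the polynomial identity \eqref{Fij iso} to the pointwise statement proved in Theorem \ref{thm isom mu} via a Zariski density argument, and then to extract the algebra and module isomorphism assertions formally. The substantive representation-theoretic content has already been packaged in Lemma \ref{lem AK} and Theorem \ref{thm isom mu}; what remains is essentially bookkeeping.

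First I would observe that $\mc A(\mu+\nu/2-1) \in \End(V)$ is determined by the $\slt$-action and by $\mu$ alone, with no $z$-dependence. Consequently, conjugation by $\mc A(\mu+\nu/2-1)$ commutes with multiplication by any element of $\C[z_1,\dots,z_n]^S$, and, viewed in $\End(V)\otimes \C[z_1,\dots,z_n]$, the identity \eqref{Fij iso} is a polynomial identity in $z$. By Lemma \ref{lem LG} each generator $F_{ij}(\mu,m,\ell)$ is polynomial in $z$. Now for every specialization of $z$ to nonzero pairwise distinct complex numbers, the $F_{ij}(\mu,m,\ell)$ become the Laurent expansion coefficients at $x=\infty$ of the specialized $F_2(x)$, and hence lie in the specialized Bethe algebra $\B(z;\mu;V[\nu])$ of Section \ref{sec ctG}; similarly for $F_{ij}(-\mu,\ell,m) \in \B(z;-\mu;V[-\nu])$. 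Since $\mu\notin \frac n2+\Z$, Theorem \ref{thm isom mu} applies and gives precisely \eqref{Fij iso} at that specialization. Because the admissible specializations form a Zariski dense subset of $\C^n$, the identity holds as an identity of $\End(V)$-valued polynomials.

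Granting \eqref{Fij iso}, the map in \eqref{mu-mu} (read as landing in $\B(-\mu;\ell;m)$, which I take to be the intended codomain) is conjugation by the invertible operator $\mc A(\mu+\nu/2-1)$ restricted to the subalgebra generated by the $F_{ij}(\mu,m,\ell)$. Such a conjugation is automatically an algebra isomorphism onto its image; by \eqref{Fij iso} the image contains all generators of $\B(-\mu;\ell;m)$ and hence equals it. The $\C[z_1,\dots,z_n]^S$-linearity was noted above. Finally, the assertion that \eqref{mc Amu} together with \eqref{mu-mu} is an isomorphism between the $\B(\mu;m;\ell)$-module $\V^S[\nu]$ and the $\B(-\mu;\ell;m)$-module $\V^S[-\nu]$ is just the reformulation of \eqref{Fij iso}: the map \eqref{mc Amu} is a $\C[z_1,\dots,z_n]^S$-linear bijection by Lemma \ref{lem v iso} extended trivially over $\C[z_1,\dots,z_n]$, and \eqref{Fij iso} is exactly the intertwining relation on generators.

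The only possible obstacle is the density step, and this is essentially routine: what must be checked is that the generators $F_{ij}$ specialize correctly (clear from \eqref{Fij} and the polynomial structure in Lemma \ref{lem LG}) and that the locus of admissible $z$ is Zariski dense (trivial). No further calculation with the explicit formula for $F_2(x)$ in \eqref{B} is needed, since the $\mu$-dependent terms $-\bigl(\mu^2+\mu(e_{11}-e_{22})-e_{11}e_{22}\bigr)/(4x^2)$ reduce on $V[\nu]$ to $-(\mu+\nu/2)^2/(4x^2)$, which is manifestly invariant under the substitution $(\mu,\nu)\mapsto(-\mu,-\nu)$, and the remaining $\mu$-dependence sits inside the operators $\K_s(z,\mu)$ already handled by Lemma \ref{lem AK}.
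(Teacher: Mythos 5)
Your proof is correct and follows essentially the same route as the paper, which simply observes that the lemma follows from Lemma \ref{lem AK} (your Theorem \ref{thm isom mu} is itself just the packaging of Lemmas \ref{lem v iso} and \ref{lem AK}). The only addition is your explicit specialization-plus-Zariski-density step and the check of the $\mu$-dependent constant term, both of which the paper leaves implicit; they are harmless and, if anything, make the one-line argument more transparent.
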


\begin{proof}
The lemma follows from Lemma \ref{lem AK}.
\end{proof}

\subsection{Generic fibers of $\V^S[\nu]$}

Given $a=(a_1,\dots,a_n)\in \C^n$, denote by $I_a\subset \C[z_1,\dots, z_n]$ the ideal
generated by the polynomials  $\si_s(z)-a_s$, $s=1,\dots,n$. Define
\bean
\label{ide}
 I_a\V^S[\nu] \,:=\, \V^S \cap (V[\nu]\ox I_a).
\eean
Assume that $a$ is such that the polynomial $x^n + \sum_{s=1}^n(-1)^s a_sx^{n-s}$ has distinct
nonzero roots  $b_1,\dots,b_n$.

\begin{lem}
[{\cite[Lemma 2.13]{MTV3}}]
\label{lem fib}

The quotient $\V^S[\nu]/ I_a\V^S[\nu]$ is a finite-dimensional complex vector space canonically isomorphic
to $V[\nu]$. Under this isomorphism
the  Bethe algebra $\B(\mu; m;\ell)$ induces a commutative algebra of  operators on 
$V[\nu]$. That commutative algebra of operators is canonically isomorphic to the Bethe algebra
$\B(b_1,\dots,b_n; \mu; V[\nu])$ introduced in Section \ref{sec ctG}.

\end{lem}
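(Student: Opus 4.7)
The plan is to replace the quotient $\V^S[\nu]/I_a\V^S[\nu]$ by the $S_n$-invariants of a finite-dimensional fiber and then read off an isomorphism with $V[\nu]$ by evaluation at $(b_1,\dots,b_n)$. First I would observe that, since $|S_n|$ is invertible in $\C$, taking $S_n$-invariants is exact; tensoring $V[\nu]$ with the short exact sequence $0\to I_a\to\C[z_1,\dots,z_n]\to\C[z_1,\dots,z_n]/I_a\to 0$ and applying invariants, together with the defining formula $I_a\V^S[\nu]=\V^S\cap(V[\nu]\ox I_a)$, yields a canonical isomorphism $\V^S[\nu]/I_a\V^S[\nu]\cong\bigl(V[\nu]\ox\C[z_1,\dots,z_n]/I_a\bigr)^{S_n}$.

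Next I would use the hypothesis that $b_1,\dots,b_n$ are distinct and nonzero to conclude that the affine variety cut out by $\si_s(z)-a_s=0$ is precisely the $S_n$-orbit of $(b_1,\dots,b_n)$, consisting of $n!$ points, and that $I_a$ is radical. Hence $\C[z_1,\dots,z_n]/I_a\cong\C^{n!}$, with components indexed by $\rho\in S_n$ via evaluation at $(b_{\rho(1)},\dots,b_{\rho(n)})$. Then $V[\nu]\ox\C[z_1,\dots,z_n]/I_a$ becomes a direct sum of $n!$ copies of $V[\nu]$, on which $S_n$ acts simultaneously by its diagonal action on $V_1^{\ox n}$ and by permuting the summands. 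A short argument then shows that any $S_n$-invariant element is uniquely determined by its $\rho=\id$ component and that every $v\in V[\nu]$ extends uniquely to an invariant, so the canonical vector-space isomorphism is realised as evaluation at $(b_1,\dots,b_n)$.

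For the algebra statement I would trace how the generators $F_{1j},F_{2j}$ of $\B(\mu;m;\ell)$ act in the quotient. Their coefficients are rational in $z_1,\dots,z_n$ with possible poles only along the diagonals $z_s=z_t$ and along $z_s=0$; distinctness and nonvanishing of the $b_j$ make them regular on the orbit of $b$, so they reduce in the quotient to their specialisations at $z=(b_1,\dots,b_n)$. Since by construction $\mc F=\D^c$ is the conjugate \eqref{Dconj} of the universal differential operator $\D$ of Section \ref{sec ctG}, the specialisations $\{F_{ij}|_{z=b}\}$ and the coefficients of $D_2(x)|_{z=b}$ generate the same commutative subalgebra of $\End(V[\nu])$, namely $\B(b_1,\dots,b_n;\mu;V[\nu])$. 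The main obstacle is the invariants computation in the second step, where the two commuting $S_n$-actions --- the diagonal one on $V_1^{\ox n}$ and the one permuting the evaluation points --- must be disentangled carefully; once that is done, everything else follows formally from exactness of invariants, radicality of $I_a$, and the manifestly universal construction of $\mc F$ in $z$.
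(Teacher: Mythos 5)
Your proposal is correct. Note that the paper itself gives no argument for this lemma: it is quoted verbatim from \cite[Lemma 2.13]{MTV3}, and your proof is essentially a faithful reconstruction of the standard argument used there --- exactness of $S_n$-invariants in characteristic zero to identify $\V^S[\nu]/I_a\V^S[\nu]$ with $\bigl(V[\nu]\ox\C[z_1,\dots,z_n]/I_a\bigr)^{S_n}$, reducedness of the fiber of the elementary-symmetric-function map over a point with distinct roots, the free-orbit computation of invariants giving evaluation at $(b_1,\dots,b_n)$, and specialization of the generators $F_{ij}$. The only cosmetic inaccuracy is your remark that the $F_{ij}$ are rational in $z$ with poles along diagonals and $z_s=0$; by Lemma \ref{lem LG} they are in fact polynomial in $z_1,\dots,z_n$, which only makes the specialization step easier. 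The identification of the specialized algebra with $\B(b_1,\dots,b_n;\mu;V[\nu])$ is correct as you argue, since both algebras are generated by the identity and the Gaudin operators $\K_s(b,\mu)$, the conjugation \eqref{Dconj} being by scalar functions of $x$.
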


\section{Functions on pairs of quasi-polynomials}
\label{sec 10}

\subsection{Space of pairs of quasi-polynomials}

Let $m,\ell,n$ be positive integers, $m+\ell=n$. Denote
$\nu=n-2m$, cf. Section \ref{sec V1}.
Let 
\bea
\zeta\,\in \,\C - \frac 12\,\Z\,.
\eea
 Let $\Om(\zeta,m,\ell)$ be the affine $n$-dimensional space
with coordinates $p_i$, $i=1,\dots,m$,  $q_j$, $j=1,\dots,\ell$.
Introduce the generating functions
\bean
\label{pq}
p(x) &=& x^{-\zeta}\,(x^m + p_1x^{m-1} + \dots + p_m),
\\
\notag
q(x) &=& x^{\zeta}\,(x^\ell + q_1x^{\ell-1} + \dots + q_\ell).
\eean
We identify points $U$ of $ \Om(\zeta, m,\ell)$ with  two-dimensional complex vector spaces 
generated by quasi-polynomials
\bean
\label{pqU}
p(x,U) &=& x^{-\zeta}\,(x^m + p_1(U)x^{m-1} + \dots + p_m(U)),
\\
\notag
q(x,U) &=& x^{\zeta}\,(x^\ell + q_1(U)x^{\ell-1} + \dots + q_\ell(U)).
\eean
Denote by $\O(\zeta,m,\ell)$
the algebra of regular functions on $\Om(\zeta,m,\ell)$,
\bea
\O(\zeta,m,\ell) = \C[p_1,\dots,p_m, q_1,\dots,q_\ell].
\eea
 Define the  grading on 
$\O(\zeta,m,\ell)$ by $\deg p_i=\deg q_i=i$ for
all $i$.

\begin{lem}
\label{lem grO}
The graded character of the algebra $\O(\zeta, m,\ell)$ equals
\bean
\ch_{\O(\zeta, m,\ell)} (\al) = \prod_{i=1}^m \frac{1}{1-\al^i} \cdot \prod_{j=1}^\ell \frac{1}{1-\al^j}.
\eean
\qed
\end{lem}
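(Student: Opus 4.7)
The plan is to observe that $\O(\zeta,m,\ell) = \C[p_1,\dots,p_m,q_1,\dots,q_\ell]$ is by definition a free polynomial algebra on $m+\ell$ generators of prescribed positive degrees, and so its Hilbert series factors as a product of contributions from the individual generators.

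First I would decompose, as a graded $\C$-algebra,
\[
\O(\zeta,m,\ell) \,\cong\, \C[p_1]\ox\dots\ox\C[p_m]\ox\C[q_1]\ox\dots\ox\C[q_\ell],
\]
where $\deg p_i = \deg q_i = i$. For a one-variable polynomial algebra $\C[t]$ with $\deg t = d$, the monomial basis $\{\>t^k\}_{k\ge 0}$ gives homogeneous components $\C\cdot t^k$ of degree $kd$, so
\[
\ch_{\C[t]}(\al) \,=\, \sum_{k\ge 0} \al^{kd} \,=\, \frac{1}{1-\al^d}\,.
\]

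Second, I would invoke the multiplicativity of the graded character under tensor products of graded algebras with finite-dimensional homogeneous components. Applying this to the decomposition above yields
\[
\ch_{\O(\zeta,m,\ell)}(\al) \,=\, \prod_{i=1}^m \frac{1}{1-\al^i}\cdot \prod_{j=1}^\ell \frac{1}{1-\al^j},
\]
which is the claimed formula. Note that $\zeta$ does not appear on the right: it enters only through the quasi-polynomial interpretation \eqref{pq} of the generators, not through the algebraic structure or grading of $\O(\zeta,m,\ell)$.

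There is no real obstacle in the proof itself; the content of the lemma is purely a Hilbert-series bookkeeping fact about a free polynomial ring with weighted generators. The genuine significance of the identity will surface only when it is compared with Lemma \ref{lem frV}: the equality of $\ch_{\O(\zeta,m,\ell)}(\al)$ and $\ch_{\V^S[n-2m]}(\al)$ is precisely the numerical match required for the subsequent geometrization, in which $\V^S[\nu]$ is to be identified with the regular $\O(\zeta,m,\ell)$-module.
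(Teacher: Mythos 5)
Your proof is correct and matches what the paper intends: the paper states this lemma with no proof (just \qed), treating it as the standard Hilbert-series computation for the free polynomial algebra $\C[p_1,\dots,p_m,q_1,\dots,q_\ell]$ with $\deg p_i=\deg q_i=i$, which is exactly the argument you spell out. Your closing remarks about the role of $\zeta$ and the match with Lemma \ref{lem frV} are accurate but not part of the proof itself.
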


\subsection{Wronski map}

Let $p(x), q(x)$  be the generating functions in \eqref{pq}.  We have
\bean
\label{Wpq}
\Wr_x(p,q) = \frac {2\zeta + \ell-m}x\,
\Big(x^n + \sum_{s=1}^n\,(-1)^s\,\Si_s \,x^{n-s}\Big),
\eean
where $\Si_1,\dots,\Si_n$ are elements of $\O(\zeta, m,\ell)$. 
 Notice that 
$2\zeta + \ell-m = 2\zeta +\nu\,\notin\Z$ according to our assumptions. 
The elements  $\Si_1,\dots,\Si_n$ are homogeneous with
$\deg \Si_s = s$.

\smallskip
Define the {\it Wronski map}
\bea
\Wr\, :\, \Om(\zeta, m,\ell) \to \C^n, \quad
U \mapsto 
(\Si_1(U), \dots, \Si_n(U)).
\eea

\begin{lem}
\label{lem pdeg}
For  $\zeta\in \C-\frac 12\Z$, \,the Wronski map is a map of positive degree. 

\end{lem}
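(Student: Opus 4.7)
The plan is to prove the Wronski map is finite; since both source and target are affine $n$-spaces, this will automatically force the degree to be positive. Both $\Om(\zeta,m,\ell)$ and the target $\C^n$ carry compatible gradings: by Lemma \ref{lem grO}, $\O(\zeta,m,\ell)$ is graded with $\deg p_i = i$, $\deg q_j = j$, while on the target we declare $\deg \Si_s = s$. A direct inspection of \eqref{Wpq} shows each $\Si_s$ is a weighted-homogeneous polynomial of degree $s$ in the $p_i$ and $q_j$, so $\Wr$ is a graded morphism. For a graded morphism between graded polynomial rings, the graded Nullstellensatz implies that finiteness is equivalent to the set-theoretic fiber $\Wr^{-1}(0)$ consisting of the origin alone: if $\Wr^{-1}(0) = \{0\}$, then the irrelevant ideal $\m \subset \O(\zeta,m,\ell)$ satisfies $\m^N \subseteq (\Si_1,\dots,\Si_n)$ for some $N$, so $\O(\zeta,m,\ell)/(\Si_1,\dots,\Si_n)$ is a finite-dimensional graded $\C$-vector space and $\O(\zeta,m,\ell)$ becomes a free $\C[\Si_1,\dots,\Si_n]$-module of positive graded rank.

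The core computation is then to show $\Wr^{-1}(0) = \{0\}$. A point of this fiber corresponds to monic polynomials $P,Q$ of degrees $m,\ell$ satisfying
\[ \Wr_x(x^{-\zeta} P,\, x^\zeta Q) \,=\, (2\zeta + \nu)\, x^{n-1}. \]
Factor out the largest power of $x$ from each polynomial: write $P(x) = x^a \hat P(x)$ with $\hat P(0) \ne 0$ and $0 \le a \le m$, and similarly $Q(x) = x^b \hat Q(x)$ with $0 \le b \le \ell$. A short direct calculation yields
\[ \Wr_x(x^{-\zeta + a}\hat P,\, x^{\zeta + b}\hat Q) \,=\, (2\zeta + b - a)\, x^{a + b - 1}\, \hat P \hat Q \,+\, x^{a+b}\, \Wr_x(\hat P, \hat Q). \]
The first term vanishes to order exactly $a+b-1$ at $x = 0$ with nonzero leading coefficient, since $2\zeta + b - a \ne 0$ (using $2\zeta \notin \Z$) and $\hat P(0)\hat Q(0) \ne 0$; the second term vanishes to order at least $a+b$. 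Matching against the right-hand side, whose order at $0$ is exactly $n-1$ (with nonzero coefficient because $2\zeta + \nu \ne 0$), forces $a + b = n$. Combined with $a \le m$ and $b \le \ell$ and $m + \ell = n$, this yields $a = m$ and $b = \ell$; hence $\hat P = \hat Q = 1$, and $(P,Q) = (x^m, x^\ell)$, the origin of $\Om(\zeta,m,\ell)$.

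With the fiber computed, the criterion of the first paragraph makes $\Wr$ a finite surjective morphism between irreducible $n$-dimensional affine varieties; its generic fiber is then nonempty and finite, so the degree is positive. The key place the hypothesis $\zeta \notin \frac12 \Z$ enters is in the local-at-$0$ analysis: it guarantees both $2\zeta + \nu \ne 0$ and $2\zeta + b - a \ne 0$ for every integer shift $b - a$, which are exactly what force the exact matching of vanishing orders and thereby prevent any nontrivial pair $(P,Q)$ with $a+b < n$ from lying in the fiber. The main subtlety to be careful about is the passage from set-theoretic to scheme-theoretic finiteness, which in this graded polynomial setting is automatic via Nakayama applied to the irrelevant ideal.
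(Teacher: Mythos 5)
Your proof is correct. The paper offers no argument of its own here — it simply refers to \cite[Proposition 3.1]{MTV4} as a ``slight modification'' — and your write-up is a sound, self-contained version of exactly that standard argument: the $\Si_s$ are homogeneous of positive degree, the set-theoretic fiber $\Wr^{-1}(0)$ reduces to the origin via the order-of-vanishing computation at $x=0$ (where $\zeta\notin\frac12\Z$ is used precisely to keep $2\zeta+b-a$ and $2\zeta+\nu$ nonzero), and graded Nakayama together with the fact that $\O(\zeta,m,\ell)$ is a polynomial ring upgrades this to finiteness, surjectivity, and positive degree.
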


\begin{proof} 
The proof is a slight modification of the proof of \cite[Proposition 3.1]{MTV4}. 
\end{proof}

Let $\O^S \subset \Oz$ be the subalgebra generated by $\Si_1,\dots,\Si_n$.
Let $\si_1,\dots,\si_n$ be coordinates on  $\C^n$, which is the image of the Wronski
map. Introduce the grading on $\C[\si_1,\dots,\si_n]$ by $\deg \si_s=s$ for all $s$.
The Wronski map induces the isomorphism
$\C[\si_1,\dots,\si_n]\to \O^S$, $\si_s \mapsto \Si_s$, of graded algebras, see Lemma
\ref{lem pdeg}.
This isomorphism makes $\Oz$ a $\C[\si_1,\dots,\si_n]$-module.

\subsection{Another realization of $\O(\zeta, m,\ell)$}

Define the differential operator $\mc G$ by
\bean
\label{DO1}
\mc G = \frac{1}{\Wr_x(p,q)} \, 
\on{rdet}\begin{bmatrix} p & p' & p'' 
\\
 q & q' & q'' 
 \\ 
 1 & \der_x & \der^2_x \end{bmatrix},
\eean
where $\on{rdet}$ is the row determinant.
 We have
\bean
\label{DO2}
\mc G = \der_x^2 + G_1(x) \der_x + G_2(x),
\eean

\medskip
\noindent
cf. \cite{MTV3}.
It is a differential operator in variable 
$x$ and \  $ G_1(x)$, $G_2(x)$  are rational functions in $x$  with coefficients in $\O(\zeta, m,\ell)$.

\smallskip
Notice  that
\bean
\label{G1}
G_1 \,=\,-\,\frac{(\Wr_x(p,q))'}{\Wr_x(p,q)}\,.
\eean 
\vsk.2>

\begin{lem}
[{cf. {\cite[Section 2.7]{MTV3}}}]
\label{lem LGG}
The Laurent expansions of $G_1(x)$ and $G_2(x)$ at infinity have the form
\bean
\label{LaG} 
G_i (x) = \sum_{j=i}^\infty G_{ij} x^{-j}, \qquad  i=1,2, 
\eean
where for
any $i,j$, the element $G_{ij}$ is a homogeneous element of $\O(\zeta, m,\ell)$
 of degree $j-i$.
\end{lem}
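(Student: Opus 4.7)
The plan is to prove the two assertions of the lemma separately: first the degree at which the Laurent expansions begin, then the homogeneity of the coefficients.

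For the starting orders, I would use the two formulas already at hand. From \eqref{G1} together with \eqref{Wpq}, the Wronskian $\Wr_x(p,q)$ has a pole of order $1$ at $x=0$ times a monic polynomial of degree $n$, so its expansion at infinity begins $(2\zeta+\ell-m)x^{n-1}+O(x^{n-2})$. Since the leading coefficient is nonzero (by our standing assumption on $\zeta$), the logarithmic derivative $(\Wr_x(p,q))'/\Wr_x(p,q)$ is $(n-1)/x+O(x^{-2})$, so $G_1=O(x^{-1})$. For $G_2$, since $p$ lies in the kernel of $\mc G$, one has $G_2=-p''/p-G_1\,p'/p$. A direct expansion of $(\log p)'=-\zeta/x+P'/P$, with $P(x)=x^m+p_1x^{m-1}+\cdots+p_m$, gives $p'/p=(m-\zeta)/x+O(x^{-2})$ and hence $p''/p=(p'/p)'+(p'/p)^2=O(x^{-2})$; combined with $G_1=O(x^{-1})$ this forces $G_2=O(x^{-2})$.

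For the homogeneity, I would introduce the grading endomorphism $\Phi_t$ acting by $\Phi_t(x)=tx$, $\Phi_t(p_i)=t^ip_i$, $\Phi_t(q_j)=t^jq_j$. Since $P(x)$ is homogeneous of degree $m$ in the elementary-symmetric-function sense and $Q(x)$ of degree $\ell$, one checks $\Phi_t(p(x))=t^{m-\zeta}p(x)$ and $\Phi_t(p'(x))=t^{m-\zeta-1}p'(x)$, and analogously for $q$. Therefore $\Phi_t(\Wr_x(p,q))=t^{n-1}\Wr_x(p,q)$, $\Phi_t((\Wr_x(p,q))')=t^{n-2}(\Wr_x(p,q))'$, and hence $\Phi_t(G_1)=t^{-1}G_1$. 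Applying $\Phi_t$ to the relation $G_2=-p''/p-G_1p'/p$ yields $\Phi_t(G_2)=t^{-2}G_2$. Writing $G_i(x)=\sum_{j\ge i}G_{ij}x^{-j}$ with $G_{ij}$ homogeneous of some degree $d_{ij}$, the identity $\Phi_t(G_i)=t^{-i}G_i$ gives $\sum t^{d_{ij}-j}G_{ij}x^{-j}=\sum t^{-i}G_{ij}x^{-j}$, so $d_{ij}=j-i$, which is exactly the claim.

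The main obstacle I anticipate is bookkeeping the non-polynomial factor $x^{\pm\zeta}$ under the scaling $\Phi_t$: $\zeta$ is a complex parameter rather than a graded indeterminate, so the scaling weight $t^{m-\zeta}$ attached to $p$ is not an integer power of $t$. However the $t^{-\zeta}$ prefactors match between numerator and denominator in every quotient that appears (between $p$ and $q$, and in each expression for $G_1$ and $G_2$), so they cancel and the integer weights $t^{-1}$ and $t^{-2}$ for $G_1,G_2$ emerge as above. Once this cancellation is made explicit, the homogeneity statement follows by comparing coefficients of $x^{-j}$, exactly as in the analogous computation in \cite[Section 2.7]{MTV3}.
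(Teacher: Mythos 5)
Your argument is correct; the paper itself offers only ``the proof is by straightforward calculation,'' and your computation --- the leading-order expansion of $\Wr_x(p,q)$ and of $p'/p$ to get the starting orders $j=1$ and $j=2$, together with the scaling automorphism $\Phi_t$ (with the $t^{\pm\zeta}$ factors cancelling in every quotient) to get the homogeneity --- is exactly the standard calculation being alluded to, cf.\ the cited Section~2.7 of \cite{MTV3}. The only phrasing to tighten is at the very end: rather than positing that each $G_{ij}$ is homogeneous of some degree $d_{ij}$, decompose $G_{ij}$ into homogeneous components and note that the identity $\Phi_t(G_i)=t^{-i}G_i$, valid for all $t$, annihilates every component except the one of degree $j-i$.
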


\begin{proof}
The proof is by  straightforward calculation.
\end{proof}




\begin{lem} 
[{\cite[Lemma 3.4]{MTV3}}, {\cite[Lemma 4.3]{MTV2}}]

\label{O_thm}
Let  $\zeta\in \C-\frac 12\Z$.
Then  the elements $G_{ij}$,   $i=1,2$,  $j\geq i$,
 generate the algebra $\O(\zeta, m,\ell)$. 
 \qed
 \end{lem}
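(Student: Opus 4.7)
The plan is to extract each generator of $\O(\zeta,m,\ell)$ from the differential equations $\mc G\cdot p(x)=\mc G\cdot q(x)=0$, using that the Laurent coefficients $G_{ij}$ parametrize the operator $\mc G$.

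First I would dispatch the ``symmetric part''. Formula \eqref{G1} together with \eqref{Wpq} gives $G_1(x)=\tfrac1x-R'(x)/R(x)$ where $R(x)=x^n-\Si_1x^{n-1}+\dots+(-1)^n\Si_n$. Expanding $R'/R$ at infinity identifies $G_{11}=1-n$ and, for $j\ge 2$, $G_{1,j}$ (up to sign) with the $(j{-}1)$st power sum of the roots of $R$. Newton's identities then place $\Si_1,\dots,\Si_n$ in the subalgebra $\mc B\subset\O(\zeta,m,\ell)$ generated by the $G_{ij}$, so $\O^S\subset\mc B$.

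Next, substitute $p(x)=x^{-\zeta}\sum_{i=0}^m p_ix^{m-i}$ with $p_0=1$ into $\mc G\cdot p=0$ and collect the coefficient of $x^{m-\zeta-k}$ for each $k\ge 2$. The $k=2$ equation forces $G_{22}=(m-\zeta)(\ell+\zeta)$, a constant. Using this together with $G_{11}=1-n$, the coefficient of $p_{k-2}$ in the $k$th equation equals
\[
(m-k+2-\zeta)(m-k+1-\zeta)+(1-n)(m-k+2-\zeta)+(m-\zeta)(\ell+\zeta),
\]
which factors as $(k-2)(k-2+\nu+2\zeta)$. Hence for $k=3,\dots,m+2$ one obtains the recursion
\[
(k-2)(k-2+\nu+2\zeta)\,p_{k-2} \,=\, \Phi_k\bigl(G_{ij};\,p_0,\dots,p_{k-3}\bigr),
\]
with $\Phi_k$ a polynomial, which determines $p_{k-2}\in\mc B$ inductively provided the leading coefficient is nonzero. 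The values $k-2+\nu+2\zeta$ for $k=3,\dots,m+2$ are $m$ distinct integer translates of $2\zeta$, hence all nonzero precisely because $2\zeta\notin\Z$, i.e., $\zeta\notin\tfrac12\Z$. The parallel substitution of $q(x)=x^{\zeta}\sum_{j=0}^\ell q_jx^{\ell-j}$ into $\mc G\cdot q=0$ yields, by the symmetric computation with $m\leftrightarrow\ell$ and $\zeta\leftrightarrow-\zeta$, the recursion coefficient $(k-2)(k-2-\nu-2\zeta)$, again nonzero for $k=3,\dots,\ell+2$ under the same hypothesis, and places $q_1,\dots,q_\ell\in\mc B$. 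Since the $p_i$ and $q_j$ generate $\O(\zeta,m,\ell)$ as an algebra, we conclude $\mc B=\O(\zeta,m,\ell)$.

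The delicate step is the algebraic bookkeeping that yields the clean factorization $a^2-na+G_{22}=(a-(m-\zeta))(a-(\ell+\zeta))$ with $a=m-k+2-\zeta$, i.e., that the recursion coefficient is exactly $(k-2)(k-2+\nu+2\zeta)$ and not something messier. Once this factored form is in hand the role of the hypothesis $\zeta\notin\tfrac12\Z$ is transparent, and this is also the feature that distinguishes the present quasi-polynomial setting from the purely polynomial cases treated in \cite[Lemma 3.4]{MTV3} and \cite[Lemma 4.3]{MTV2}.
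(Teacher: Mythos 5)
Your argument is correct, and it is essentially the argument that the paper delegates to the cited sources \cite[Lemma 3.4]{MTV3} and \cite[Lemma 4.3]{MTV2} (the paper itself supplies no proof of this lemma, only the references): recover $\Si_1,\dots,\Si_n$ from $G_1$ via \eqref{G1}, \eqref{Wpq} and Newton's identities, then extract $p_1,\dots,p_m$ and $q_1,\dots,q_\ell$ recursively from the Laurent coefficients of $\mc G\cdot p=0$ and $\mc G\cdot q=0$. The step you flag as delicate checks out: with $a=m-k+2-\zeta$ the recursion coefficient is $a^2-na+G_{22}$, $G_{22}=(m-\zeta)(\ell+\zeta)$, whose roots are $m-\zeta$ and $\ell+\zeta$ (sum $n$, product $G_{22}$), so it equals $(a-(m-\zeta))(a-(\ell+\zeta))=(k-2)(k-2+\nu+2\zeta)$, which is nonzero for all relevant $k$ precisely when $2\zeta\notin\Z$.
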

 
 \subsection{Fibers of  Wronski map}

Given $a=(a_1,\dots,a_n)\in \C^n$, denote by $J_a\subset \O(\zeta,m,\ell)$ the ideal
generated by the elements  $\Si_s-a_s$, $s=1,\dots,n$. Define
\bean
\label{ide}
 \O_a(\zeta,m,\ell) \,:=\, \O(\zeta,m,\ell)\big/ J_a .
 \eean

\noindent
 The algebra $\O_a(\zeta,m,\ell)$ is the algebra of functions on the fiber $\Wr^{-1}(a)$ of the Wronski map.

\smallskip
Let 
\bean
\label{ab}
x^n+\sum_{s=1}^n\,(-1)^{n-s}\,a_s\,x^{n-s} =   \prod_{s=1}^n(x-b_s)
\eean
 for some $b_s\in\C$.  Let  $U=\langle p(x,U), q(x,U)\rangle$ be a point of
$\Om(\zeta,m,\ell)$ and
\bea
p(x,U)=x^{-\zeta}\prod_{i=1}^m(x-t^0_i),
\qquad
q(x,U)=x^{\zeta}\prod_{i=1}^\ell(x-\tilde t^{\,0}_i),
\eea
for some $t_i^0, \tilde t^{\,0}_i\in\C$.

\begin{lem}
\label{lem gen f}
Let $\zeta\in \C-\frac 12\Z$. Then there exists
 a Zariski open subset $X\subset \C^n$ such that for any  $a\in X$ 
 all the numbers $b_1,\dots,b_n$ are nonzero and distinct.
Moreover, for any point $U\in\Wr^{-1}(a)$ all the numbers
$b_1,\dots,b_n$, $t^0_1,\dots,t^0_m$, $\tilde t^{\,0}_1,\dots,\tilde t^{\,0}_\ell$ are distinct.
\qed

\end{lem}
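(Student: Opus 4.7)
The plan is to realize $X$ as the complement of two proper Zariski closed subsets of $\C^n$. The first controls the nonzero-distinctness of the $b_s$; the second will be the Zariski closure of the $\Wr$-image of an explicit ``bad'' subvariety $B\subset\Om(\zeta,m,\ell)$ capturing all remaining coincidences. Since $\Om(\zeta,m,\ell)$ is affine $n$-space and $\Wr\colon\Om(\zeta,m,\ell)\to\C^n$ is a morphism between varieties of the same dimension $n$, a pure dimension count will guarantee that $\overline{\Wr(B)}$ is a proper closed subset as soon as $B$ itself is.

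For the first condition, let $D\subset\C^n$ be the closed subset where $a_n=0$ or where the discriminant of the polynomial in \eqref{ab} vanishes; this is a proper closed subset, and on its complement the roots $b_1,\dots,b_n$ are distinct and nonzero. For the second, I would define $B\subset\Om(\zeta,m,\ell)$ to be the locus of those $U$ for which the $2n$ numbers $t^0_1,\dots,t^0_m,\tilde t^{\,0}_1,\dots,\tilde t^{\,0}_\ell,b_1,\dots,b_n$ fail to be pairwise distinct. By \eqref{Wpq} the $b_s$ at a point $U$ are exactly the roots of the polynomial factor $x\>\Wr_x(p(x,U),q(x,U))/(2\zeta+\nu)$, so $B$ is cut out by the vanishing of a single regular function on $\Om(\zeta,m,\ell)$, namely the product of the three discriminants and three pairwise resultants of the polynomial factors of $p(x,U)$, $q(x,U)$, and $\Wr_x(p,q)$. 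In particular $B$ is Zariski closed.

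To show $B$ is \emph{proper} I would exhibit one point $U^*\notin B$. Taking $p^*(x)=x^{-\zeta}\prod_{i=1}^m(x-\al_i)$ and $q^*(x)=x^{\zeta}\prod_{j=1}^\ell(x-\bt_j)$, each of the discriminant/resultant conditions defines a proper subvariety of the parameter space $\C^{m+\ell}$ of $(\al,\bt)$, so choosing $(\al,\bt)$ outside their union produces a $U^*$ with pairwise distinct roots among all three polynomials. Once $B$ is a proper closed subset, $\dim B\le n-1$, hence $\dim\Wr(B)\le n-1$ and $\overline{\Wr(B)}\subsetneq\C^n$. Setting $X:=\C^n\setminus(D\cup\overline{\Wr(B)})$ yields a nonempty Zariski open set on which both claims hold, since any $a\in X$ satisfies the first condition and moreover has $\Wr^{-1}(a)\cap B=\emptyset$. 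The one step I expect to require care is the nonvanishing of the Wronskian's discriminant (and its coprimality with the $p^*$- and $q^*$-factors) at $U^*$, but these are open conditions in $(\al,\bt)$ easily secured by a limiting argument (e.g.~taking $\al_i$ and $\bt_j$ to infinity at well-separated rates), so the overall argument reduces to the soft dimension count above.
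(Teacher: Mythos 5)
The paper itself supplies no argument for this lemma (it is stated with an immediate \qed as standard), so there is nothing to compare against; judged on its own, your proof is correct in structure and essentially complete. Writing $p(x,U)=x^{-\zeta}y(x)$, $q(x,U)=x^{\zeta}\tilde y(x)$ and letting $W(x)=x^{n}+\sum_{s}(-1)^{s}\Si_s x^{n-s}$ be the polynomial factor in \eqref{Wpq}, your set $B$ is indeed the zero locus of the product of the three discriminants and three pairwise resultants of $y,\tilde y,W$, hence Zariski closed, and the dimension count $\dim\overline{\Wr(B)}\le\dim B\le n-1$ is valid, so everything reduces to showing that each of the six factors is not identically zero on the irreducible variety $\Om(\zeta,m,\ell)$ (you do not even need a single common witness $U^{*}$: since $\O(\zeta,m,\ell)$ is a domain, a product of nonzero elements is nonzero, and separate witnesses for each factor suffice).

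Two remarks on the step you flag. First, the resultant conditions $\on{Res}(W,y)\ne 0$ and $\on{Res}(W,\tilde y)\ne 0$ require no limiting argument at all: from $\Wr_x(p,q)=y\tilde y'-y'\tilde y+\tfrac{2\zeta}{x}\,y\tilde y$ one gets $x\,\Wr_x(p,q)\big|_{x=\al_i}=-\al_i\,y'(\al_i)\,\tilde y(\al_i)$, which is nonzero as soon as the $\al_i$ are simple, nonzero, and disjoint from the $\bt_j$; so these two factors are automatically nonvanishing at any $(\al,\bt)$ already chosen for the first three conditions. Second, for the remaining factor $\on{disc}(W)$, the suggestion of sending $\al_i,\bt_j$ to infinity at separated rates is vague and is the one genuinely soft spot; the clean fix is to quote Lemma \ref{lem pdeg}: the Wronski map is dominant, so its image meets the dense open set of $a$ whose polynomial \eqref{ab} has distinct roots, producing a point of $\Om(\zeta,m,\ell)$ at which $\on{disc}(W)\ne 0$. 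With that substitution the proof is complete.
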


\begin{lem}
\label{cor gen f}
If $a\in X$ and $U\in \Wr^{-1}(a)$, then
$(t^0_1,\dots,t^0_m; b_1,\dots,b_n; 2\zeta + \nu/2)$ is a solution
of the Bethe ansatz equations \eqref{tr.BAE} assigned to the triple
$(b_1,\dots,b_n; 2\zeta + \nu/2; V[\nu])$, and
$(\tilde t^{\,0}_1,\dots,\tilde t^{\,0}_\ell; b_1,\dots,b_n; - 2\zeta - \nu/2)$ is a solution
of the Bethe ansatz equations \eqref{tr.BAE} assigned to the triple
$(b_1,\dots,b_n; -2\zeta - \nu/2; V[-\nu])$.

\end{lem}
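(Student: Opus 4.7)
The plan is to reduce this corollary to Lemma \ref{lem WBA}(ii) by rewriting the Wronskian identity \eqref{Wpq} in the form \eqref{Wr.eqn22}, with the parameters $(z,\mu,\nu)$ of the BAE matched appropriately in each of the two cases.

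First I would write $p(x,U)=x^{-\zeta}y(x)$ and $q(x,U)=x^{\zeta}\tilde y(x)$ with
$y(x)=\prod_{i=1}^m(x-t^0_i)$ and $\tilde y(x)=\prod_{i=1}^{\ell}(x-\tilde t^{\,0}_i)$, and compute
\begin{equation*}
\Wr_x(x^{-\zeta}y,\,x^{\zeta}\tilde y)\,=\,\frac{2\zeta}{x}\,y\tilde y\,+\,\Wr_x(y,\tilde y).
\end{equation*}
Combining this with \eqref{Wpq} (and using $\ell-m=\nu$) gives
\begin{equation*}
2\zeta\, y\tilde y + x\,\Wr_x(y,\tilde y)\,=\,(2\zeta+\nu)\prod_{s=1}^n(x-b_s).
\end{equation*}
A short direct calculation then shows
\begin{equation*}
\Wr_x\!\bigl(y(x),\,x^{2\zeta}\tilde y(x)\bigr)\,=\,x^{2\zeta-1}\bigl(2\zeta\, y\tilde y+x\,\Wr_x(y,\tilde y)\bigr)\,=\,(2\zeta+\nu)\,x^{2\zeta-1}\prod_{s=1}^n(x-b_s).
\end{equation*}
Since $V=V_1^{\otimes n}$ we have $m_s=1$ for all $s$ and $M=n$. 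Set $\mu:=2\zeta+\nu/2$, so $\mu-\nu/2=2\zeta\ne 0$ (as $\zeta\notin\tfrac12\Z$) and $\mu\ne\nu/2$. This is exactly the Wronskian identity \eqref{Wr.eqn22} for the triple $(b_1,\dots,b_n;\mu;V[\nu])$ with $\const=2\zeta+\nu\ne 0$. Because $a\in X$, Lemma \ref{lem gen f} ensures $b_1,\dots,b_n,t^0_1,\dots,t^0_m$ are distinct and nonzero, so in particular $y$ has distinct roots and $y(b_s)\ne 0$. Thus the hypotheses of Lemma \ref{lem WBA}(ii) are met, which yields the first conclusion.

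For the second assertion I would run an entirely symmetric argument with the roles of $y$ and $\tilde y$ swapped. A parallel computation gives
\begin{equation*}
\Wr_x\!\bigl(\tilde y(x),\,x^{-2\zeta}y(x)\bigr)\,=\,-x^{-2\zeta-1}\bigl(2\zeta\,y\tilde y+x\,\Wr_x(y,\tilde y)\bigr)\,=\,-(2\zeta+\nu)\,x^{-2\zeta-1}\prod_{s=1}^n(x-b_s).
\end{equation*}
Now set $\mu':=-(2\zeta+\nu/2)$ and $\nu':=-\nu$; then $\mu'-\nu'/2=-2\zeta\ne 0$ and $\mu'\ne\nu'/2$. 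Identifying $\tilde y$ with the ``$y$'' of Lemma \ref{lem WBA} (degree $\ell=n-m$) and $y$ with its ``$\tilde y$'' (degree $n-\ell=m$), the displayed equality is exactly \eqref{Wr.eqn22} for the triple $(b_1,\dots,b_n;\mu';V[\nu'])=(b_1,\dots,b_n;-2\zeta-\nu/2;V[-\nu])$. Lemma \ref{lem gen f} again supplies the distinctness hypotheses, and Lemma \ref{lem WBA}(ii) delivers the second conclusion.

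The only real work is the Wronskian bookkeeping in the first paragraph—checking that the $x^{-\zeta}$/$x^{\zeta}$ prefactors reassemble into the pure power $x^{\mu-\nu/2}$ demanded by \eqref{Wr.eqn22}, and verifying that the constant $2\zeta+\nu$ is nonzero. Everything else is a direct application of earlier lemmas; the only potential pitfall is keeping track of the index conventions ($\nu$ vs.\ $-\nu$, $m$ vs.\ $\ell$, $\mu$ vs.\ $-\mu$) when invoking the second case, which is handled cleanly by the symmetry $(y,\tilde y,\zeta)\leftrightarrow(\tilde y,y,-\zeta)$.
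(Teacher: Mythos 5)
Your proposal is correct and follows exactly the paper's route: the paper's proof simply displays the evaluated Wronskian identity $\Wr_x(p(x,U),q(x,U))=\frac{2\zeta+\ell-m}{x}\prod_{s=1}^n(x-b_s)$ and cites Lemmas \ref{lem gen f} and \ref{lem WBA}. Your write-up just makes explicit the bookkeeping (matching $\mu=2\zeta+\nu/2$, the $x^{\mu-\nu/2}$ prefactor, the nonvanishing of $2\zeta+\nu$, and the $(y,\tilde y,\zeta)\leftrightarrow(\tilde y,y,-\zeta)$ symmetry) that the paper leaves to the reader.
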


\begin{proof}
We have 
\bea
\Wr_x(p(x,U),q(x,U))= \frac {2\zeta + \ell-m}x\,
\Big(x^n + \sum_{s=1}^n\,(-1)^s\,a_s \,x^{n-s}\Big).
\eea
Now the lemma follows from Lemmas \ref{lem gen f}, \ref{lem WBA}.
\end{proof}

For $U\in \Om(\zeta, m,\ell)$ denote by $\mc G_U$ the monic differential operator
with kernel $U$, 
\bean
\label{GU}
\mc G_U = \der_x^2 + F_{1;U}(x)\der_x + F_{2,U}(x).
\eean
The operator $\mc G_U$ is 
obtained from the operator $\mc G$ by evaluating the generating functions $p,q$ 
at the point $U$.

\begin{lem}
\label{lem e=g}

Let $a\in X$ and $U\in \Wr^{-1}(a)$. Let $(t^0; b; 2\zeta + \nu/2)$ be the solution of the Bethe ansatz equations
described in Lemma \ref{cor gen f}. Let $\mc E_{t^0;\, b;\, 2\zeta + \nu/2}^c$ 
be the differential operator defined in \eqref{conj Dt}.
Then
\bea
\mc E_{t^0;\, b; \,2\zeta + \nu/2}^c = \mc G_U\,.
\eea
 \end{lem}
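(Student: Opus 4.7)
The plan is to identify $\mc E^c_{\,t^0;b;2\zeta+\nu/2}$ with $\mc G_U$ by showing they are both monic second-order linear differential operators in $x$ with the same two-dimensional kernel, namely $U$ itself.

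First I would observe that both operators have leading term $\der_x^2$. For $\mc G_U$ this is built into \eqref{DO2}. For $\mc E^c$, the substitution $\der_u^2 = (2\pi\sk)^2(x\der_x + x^2\der_x^2)$ combined with the normalizing prefactor $\frac{1}{(2\pi\sk\,x)^2}$ in \eqref{conj Dt} yields $\der_x^2$ as the top-degree term, exactly as in the proof of Theorem \ref{Dconjthm}.

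By Lemma \ref{kerDconj} applied with $z=b$ and $\mu=2\zeta+\nu/2$, so that $(\mu-\nu/2)/2 = \zeta$ and $(\nu/2-\mu)/2 = -\zeta$, the kernel of $\mc E^c_{\,t^0;b;2\zeta+\nu/2}$ is spanned by the quasi-polynomials $x^{-\zeta}y(x)$ and $x^{\zeta}\tilde y(x)$, where $y(x)=\prod_{i=1}^m(x-t^0_i)$ and $\tilde y(x)$ is the monic polynomial of degree $M-m=n-m=\ell$ supplied by Theorem \ref{quasipthm}. Because the coordinates on $\Om(\zeta,m,\ell)$ are defined so that $p(x,U)=x^{-\zeta}(x^m+p_1(U)x^{m-1}+\dots+p_m(U))$, Lemma \ref{cor gen f} immediately gives $p(x,U)=x^{-\zeta}y(x)$, matching one of the two kernel elements.

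The crucial step is to identify $\tilde y(x)$ with $x^{-\zeta}q(x,U) = \prod_{i=1}^\ell(x-\tilde t^{\,0}_i)$. Both are monic polynomials of degree $\ell$. Theorem \ref{quasipthm}, through the Wronskian reformulation \eqref{Wr.eqn22} established in its proof, characterizes $\tilde y$ as the unique monic polynomial of degree $\ell$ satisfying an equation of the form
\bea
\Wr_x(y(x),\,x^{2\zeta}\tilde y(x)) \,=\, \const\cdot x^{2\zeta-1}\prod_{s=1}^n(x-b_s).
\eea
On the other hand, specializing \eqref{Wpq} to $U\in\Wr^{-1}(a)$ and inserting the factorization \eqref{ab} produces the same Wronskian equation with $\tilde y$ replaced by $x^{-\zeta}q(x,U)$. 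Since $\zeta\notin\tfrac12\Z$ one has $\mu=2\zeta+\nu/2\notin\nu/2+\Z_{\geq 0}$, so the uniqueness clause of Lemma \ref{lem WBA}(i) applies and forces $\tilde y(x)=x^{-\zeta}q(x,U)$.

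Finally, because $2\zeta\notin\Z$ the quasi-polynomials $p(x,U)$ and $q(x,U)$ are linearly independent, so they span $U$, which by construction is the kernel of $\mc G_U$. A monic second-order linear ODE is uniquely recovered from its two-dimensional solution space (the coefficient of $\der_x$ equals $-(\Wr_x(f_1,f_2))'/\Wr_x(f_1,f_2)$ for any basis, and the zero-order coefficient then follows by substituting a basis element), so $\mc E^c_{\,t^0;b;2\zeta+\nu/2}=\mc G_U$. The main obstacle in this plan is the identification $\tilde y = x^{-\zeta}q(x,U)$; aligning the two characterizations of $\tilde y$ through the Wronskian equation is the one nontrivial ingredient, and everything else is bookkeeping.
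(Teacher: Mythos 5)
Your proof is correct and follows essentially the same route as the paper, which simply cites Lemma \ref{kerDconj} to identify the kernel of $\mc E^c_{\,t^0;b;2\zeta+\nu/2}$ with the span of the two quasi-polynomials defining $U$. You additionally spell out the step the paper leaves implicit, namely matching $\tilde y(x)$ with $x^{-\zeta}q(x,U)$ via the uniqueness clause of the Wronskian equation \eqref{Wr.eqn22}; this is a welcome elaboration rather than a deviation.
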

 
 \begin{proof}
 The lemma follows from Lemma \ref{kerDconj}.
  \end{proof}

\section{Isomorphisms}
\label{sec 11}

In Section \ref{sec 9} we introduced the $\B(\mu, m,\ell)$-module $\V^S[\nu]$, where
 $\mu\in \C$, \ $\nu = n-2m$, $m+\ell=n$.
In Section \ref{sec 10} we discussed the properties of
the algebra $\Oz$ under the assumption that 
$\zeta\in \C-\frac 12\Z$. 

\smallskip
 We  consider $\Oz$ as the $\Oz$-module with action defined by
 multiplication. 

\smallskip
In this section we construct an isomorphism between the $\B(\mu, m,\ell)$-module $\V^S[\nu]$
and the $\Oz$-module $\Oz$ under the assumption that
\bean
\label{ass}
\zeta = \frac\mu 2 - \frac \nu 4\quad \on{and} \quad \zeta\in \C-\frac 12\Z\,,
\eean
where the last inclusion can be reformulated as 
\bean
\label{assm}
\mu\notin\ \frac n2 + \Z\,,
\eean
cf. the
assumptions on $\mu$ and $\zeta$ in Theorems \ref{thm isom mu}, \ref{quasipthm}, 
 Lemmas \ref{lem WBA}, \ref{lem yty} and Section \ref{sec 10}.

\smallskip
The construction of the isomorphism is similar to the constructions in \cite{MTV3, MTV2}.

\subsection{Isomorphism of algebras}

Consider the map

\bea
\tau  :  \Oz \to \B(\mu, m,\ell), \quad G_{ij}\mapsto F_{ij}.
\eea
${}$

\begin{thm}
[cf. {\cite[Theorem 5.3]{MTV3}}, {\cite[Theorem 6.3]{MTV2}}]
\label{thm isoa}

Under the assumptions \eqref{ass}
the map $\tau$ is a well-defined 
isomorphism of graded algebras.

\end{thm}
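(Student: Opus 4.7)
The plan is to establish well-definedness (the main step), and then to deduce grade-preservation, surjectivity, and injectivity. Grade-preservation and surjectivity are essentially automatic: Lemma \ref{lem LG} and Lemma \ref{lem LGG} show that $F_{ij}$ and $G_{ij}$ are each homogeneous of degree $j-i$, so any well-defined $\tau$ is graded, and surjectivity is built into the definition of $\B(\mu, m, \ell)$, which is generated by the $F_{ij}$.

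For well-definedness, I would argue fiber-wise over the Wronski base $\C[\si_1, \dots, \si_n]$, which sits inside both algebras: in $\O(\zeta, m, \ell)$ through the Wronski map, and in $\B(\mu, m, \ell)$ through the coefficients $F_{1j}$ of Lemma \ref{lem LG}. Fix a generic $a \in X \subset \C^n$ as in Lemma \ref{cor gen f} and let $b_1, \dots, b_n$ be the roots of the polynomial \eqref{ab}. By Lemma \ref{lem fib}, the quotient $\V^S[\nu] / I_a \V^S[\nu]$ is canonically $V[\nu]$, on which $\B(\mu, m, \ell)$ acts as the trigonometric Bethe algebra $\B(b; \mu; V[\nu])$, and by Lemma \ref{basis} it is spanned by Bethe vectors $\om(t^0, b)$. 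By Lemma \ref{hatB2 lem}, the universal operator $\D$ acts on each $\om(t^0, b)$ as the fundamental scalar operator $\mc E_{t^0, b, \mu}$, so its conjugate $\mc F = \D^c$ acts as $\mc E^c_{t^0, b, \mu}$. Lemma \ref{cor gen f} attaches to each $U \in \Wr^{-1}(a)$ a solution $(t^0, \tilde t^{\,0})$ of the Bethe ansatz equations at $\mu = 2\zeta + \nu/2$, and Lemma \ref{lem e=g} provides the key identity $\mc G_U = \mc E^c_{t^0, b, \mu}$. Extracting the $x^{-j}$ Laurent coefficients on both sides yields $F_{ij} \cdot \om(t^0, b) = G_{ij}(U) \cdot \om(t^0, b)$. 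Hence any polynomial relation $P(G_{ij}) = 0$ in $\O(\zeta, m, \ell)$ forces $P(F_{ij})$ to annihilate a spanning set of $V[\nu]$ for every generic $a$; by Zariski density and torsion-freeness of $\V^S[\nu]$ over $\C[z_1, \dots, z_n]^S$ (Lemma \ref{lem frV}), one concludes $P(F_{ij}) = 0$ in $\B(\mu, m, \ell)$, and hence $\tau$ is well-defined.

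Injectivity follows from a Hilbert series count. Lemma \ref{lem frV} and Lemma \ref{lem grO} give $\ch_{\O(\zeta, m, \ell)}(\al) = \ch_{\V^S[\nu]}(\al)$. A surjective graded homomorphism satisfies $\ch_{\B(\mu, m, \ell)}(\al) \leq \ch_{\O(\zeta, m, \ell)}(\al)$ coefficient-wise, while the reverse inequality $\ch_{\B(\mu, m, \ell)}(\al) \geq \ch_{\V^S[\nu]}(\al)$ is obtained from cyclicity of $\V^S[\nu]$ as a $\B(\mu, m, \ell)$-module; a cyclic vector is produced by the standard argument that simplicity of the Bethe spectrum at a generic fiber (Lemma \ref{basis}) makes a generic vector cyclic. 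Sandwiching forces equality and injectivity of $\tau$.

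The hard step is undoubtedly the well-definedness. It hinges on Lemma \ref{lem e=g}, the single bridge translating algebraic relations on the Wronski side into spectral relations on the Bethe side; the rest is degree tracking and Hilbert series comparison, parallel to the arguments of \cite{MTV3, MTV2}.
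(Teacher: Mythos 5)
Your treatment of well-definedness, gradedness, and surjectivity follows the paper's proof almost exactly: the paper likewise evaluates a relation $R(G_{ij})=0$ at a generic fiber, uses Lemmas \ref{lem gen f}--\ref{lem e=g} and \ref{basis} to see that $R(F_{ij})$ annihilates a spanning set of Bethe vectors of $V[\nu]$, and concludes by Zariski density that $R(F_{ij})=0$ identically. Your identity $F_{ij}\,\om(t^0,b)=G_{ij}(U)\,\om(t^0,b)$ is precisely the bridge the paper uses.

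The divergence, and the gap, is in injectivity. The paper argues directly: if $R(G_{ij})$ is a nonzero element of $\Oz$, it is nonzero at a generic point $U\in\Om(\zeta,m,\ell)$, and by Lemma \ref{lem e=g} that nonzero value is the eigenvalue of $R(F_{ij})$ on the Bethe vector attached to $U$, so $R(F_{ij})\ne 0$. Your Hilbert-series sandwich instead needs the inequality $\ch_{\Bm}(\al)\ge \ch_{\V^S[\nu]}(\al)$ coefficient-wise, and for that the cyclic vector must be \emph{homogeneous of degree $0$}, i.e.\ (up to scalar) the vector $v_+$: a cyclic vector of degree $d>0$, or a non-homogeneous one, only gives $\ch_{\V^S[\nu]}(\al)\le \al^d\,\ch_{\Bm}(\al)$, which is useless (indeed contradictory in low degrees). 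The ``generic vector is cyclic'' argument from simplicity of the Bethe spectrum produces a cyclic vector with no control on its degree, and there is no a priori reason the specific vector $v_+$ has nonzero coordinates on every Bethe vector in a generic fiber. Cyclicity of $v_+$ is exactly the content of Theorem \ref{thm ups}, which the paper proves \emph{after} Theorem \ref{thm isoa} and using it (injectivity of $\phi$ there relies on $\tau$ being an isomorphism to transport ideals), so your route is either circular or leaves the key step unproved. Replacing your injectivity paragraph with the paper's one-line evaluation argument repairs the proof.
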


\begin{proof} 
Let a polynomial $R(G_{ij})$ in generators $G_{ij}$  be equal to zero in
$\Oz$. Let us prove that the
corresponding polynomial $R(F_{ij})$ is equal to zero in $\Bm$.
Indeed, $R(F_{ij})$ is a polynomial
in $z_1,\dots,z_n$ with values in $\End(V[\nu])$. By Lemmas \ref{lem gen f} - \ref{lem e=g},
 \ref{basis}, for generic $b_1,\dots,b_n$ the
 value of the polynomial $R(F_{ij})$ at
$z_1 = b_1,\dots, z_n=b_n$ equals zero. Hence, the polynomial $R(F_{ij})$ equals zero identically and
the map $\tau$ is a well-defined defined homomorphism of algebras.

The elements $G_{ij}$, $F_{ij}$ are of the same degree. Hence  $\tau$ is a graded homomorphism.

Let a polynomial $R(G_{ij})$ in generators  $G_{ij}$ be a nonzero element of
$\Oz$. Then the value of
$R(G_{ij})$ at a generic point $U \in \Om(\zeta, m,\ell)$ is not equal to zero by Lemma \ref{lem e=g}.
Then the polynomial $R(F_{ij})$ is not identically equal to zero. Therefore, the map $\tau$ is injective.
Since the elements $F_{ij}$ generate the algebra $\Bm$, the map $\tau$ is surjective.
\end{proof}

The algebra $\C[z_1,\dots,z_n]^S$
 is embedded into the algebra $\Bm$ as the subalgebra of operators
of multiplication by symmetric polynomials.
 The
algebra $\C[z_1,\dots,z_n]^S$ is embedded into the algebra $\Oz$, the elementary symmetric polynomials
$\si_1(z),\dots,\si_n(z)$ being mapped to the elements
$\Si_1,\dots,\Si_n$. These
embeddings give the algebras $\Bm$ and $\Oz$ the structure of $\C[z_1,\dots,z_n]^S$-modules.

\begin{lem}
[{\cite[Lemma 6.4]{MTV3}}]
\label{lem iso m}
Under assumptions \eqref{ass} the map $\tau$ is an isomorphism of $\C[z_1,\dots,z_n]^S$-modules.

\end{lem}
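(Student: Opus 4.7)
Since Theorem \ref{thm isoa} already gives that $\tau$ is a graded algebra isomorphism, the only thing to check is compatibility of the two $\C[z_1,\dots,z_n]^S$-module structures. These structures come from two embeddings of $\C[z_1,\dots,z_n]^S$: on the $\B(\mu,m,\ell)$-side, $\sigma_s(z)$ acts by multiplication by $\sigma_s(z)$ (as an element of $\V^S$); on the $\O(\zeta,m,\ell)$-side, $\sigma_s(z)$ is identified with $\Sigma_s$ via the Wronski map. So the statement reduces to the identity
\begin{equation*}
\tau(\Sigma_s) \,=\, \sigma_s(z), \qquad s=1,\dots,n,
\end{equation*}
and for this it suffices, since $\tau$ is an algebra map, to match the power-sum generators of $\C[z_1,\dots,z_n]^S$ on both sides.

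The plan is to extract these power sums from the first-order coefficient on each side. By Lemma \ref{lem LG} we already have $F_{11}=1-n$ and $F_{1j}=-\sum_{s=1}^n z_s^{\,j-1}$ for $j\ge 2$, so the generators $F_{1j}$ are (up to sign and the constant term) the power sums of $z_1,\dots,z_n$, and they generate $\C[z_1,\dots,z_n]^S$ inside $\B(\mu,m,\ell)$. For the $\O(\zeta,m,\ell)$-side I would use formula \eqref{G1} together with \eqref{Wpq} to write
\begin{equation*}
G_1(x) \,=\, -\,\frac{d}{dx}\log \Wr_x(p,q) \,=\, \frac{1}{x}\,-\,\frac{P'(x)}{P(x)},
\end{equation*}
where $P(x)=x^n+\sum_{s=1}^n(-1)^s\Sigma_s x^{n-s}$. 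Expanding $P'(x)/P(x)$ as a Laurent series at $x=\infty$ via the usual identity $P'/P=\sum_{k\ge 0}p_k\,x^{-k-1}$, where $p_k$ is the $k$-th Newton power sum of the (symbolic) roots of $P$, yields $G_{11}=1-n$ and $G_{1j}=-p_{j-1}$ for $j\ge 2$. Since $\tau(G_{1j})=F_{1j}$ by definition, this shows that $\tau$ sends the Newton power sums of the roots of $P$ to the Newton power sums of $z_1,\dots,z_n$.

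Now the elementary symmetric polynomials in any set of $n$ quantities are polynomials in the first $n$ Newton power sums of those quantities (Newton's identities), so $\Sigma_s$ is a universal polynomial in $p_1,\dots,p_n$ and $\sigma_s(z)$ is the same universal polynomial in the Newton power sums of $z_1,\dots,z_n$. Applying the algebra homomorphism $\tau$ to that polynomial identity in $\Sigma_s$ gives $\tau(\Sigma_s)=\sigma_s(z)$ for $s=1,\dots,n$, which is precisely the required compatibility of module structures. The main technical step is the Laurent-expansion bookkeeping in the second paragraph, but this is purely a logarithmic-derivative computation and, given the explicit form of $\Wr_x(p,q)$ in \eqref{Wpq} and of $F_1(x)$ in \eqref{B}, it presents no real obstacle; the rest is formal.
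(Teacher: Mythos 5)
Your proposal is correct and is essentially the paper's own argument: the paper's proof is a one-line citation of the Wronskian formula and \eqref{G1}, and what you have written is exactly the computation that citation is shorthand for — expand $G_1=-(\Wr_x(p,q))'/\Wr_x(p,q)$ at infinity to see that the $G_{1j}$ are (up to sign and the $j=1$ term) the Newton power sums of the roots of $x^n+\sum_s(-1)^s\Si_s x^{n-s}$, match them against $F_{1j}=-\sum_s z_s^{j-1}$, and conclude $\tau(\Si_s)=\si_s(z)$ via Newton's identities. No gaps; your version just makes the bookkeeping explicit.
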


\begin{proof}
The lemma follows from formulas \eqref{Wr.eqn2}, \eqref{G1}.
\end{proof}

\subsection{Isomorphism of modules}
\label{sec imo}

The subspace of $\V^S[\nu]$ of all elements of degree $0$ is of dimension one and is generated by the vector
\bea
v_+ = \sum_{I=(I_1,I_2),\, |I_1|=m, |I_2|=\ell} v_I\,.
\eea
The subspace of $\Oz$ of all elements of degree $0$ is of dimension one and is generated by the element $1$.
Define the $\C[z_1,\dots,z_n]^S$-linear map

\bean
\label{ups}
\phi : \Oz \to \V^S[\nu], \quad G\mapsto \tau(G)\,v_+\,.
\eean
\medskip

\begin{thm}
[{\cite[Theorem 6.7]{MTV3}}]

\label{thm ups}

Under assumptions \eqref{ass}, the map $\phi$ is a graded isomorphism of graded $\C[z_1,\dots,z_n]^S$-modules. 
The maps $\tau$ and $\phi$ intertwine the action of multiplication
operators on $\Oz$  and the action of the Bethe algebra $\Bm$  on $\V^S[\nu]$, that is, for any 
$f,g \in\Oz$, we have

\bean
\label{inter}
\phi(fg) = \tau(f)\,\phi(g).
\eean
${}$

\noindent
In other words, the maps $\tau$ and $\phi$ define an isomorphism between the $\Oz$-module $\Oz$ and the
$\Bm$-module $\V^S[\nu]$.

\end{thm}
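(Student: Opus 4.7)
The intertwining relation \eqref{inter} is immediate from $\phi(G) = \tau(G)v_+$ and Theorem \ref{thm isoa}: for any $f,g \in \Oz$,
\[
\phi(fg) \,=\, \tau(fg)\,v_+ \,=\, \tau(f)\,\tau(g)\,v_+ \,=\, \tau(f)\,\phi(g).
\]
Specializing $f$ to an element of $\C[\Si_1,\dots,\Si_n] \subset \Oz$ and using Lemma \ref{lem iso m} gives the $\C[z_1,\dots,z_n]^S$-linearity of $\phi$. Because $\tau$ is a graded isomorphism (Theorem \ref{thm isoa}) and $v_+$ is homogeneous of degree zero, $\phi$ is a graded map of degree $0$; by Lemmas \ref{lem grO} and \ref{lem frV}, source and target have the same graded character. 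It will therefore be enough to show that $\phi$ becomes an isomorphism on a Zariski-dense collection of fibers of the Wronski map, and then propagate by graded/finiteness considerations.

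Fix $a$ in the open set $X$ of Lemma \ref{lem gen f}, and let $b_1,\dots,b_n$ be the roots of $x^n+\sum_{s=1}^n(-1)^s a_s\, x^{n-s}$. Consider the induced map
\[
\bar\phi_a \,:\, \O_a(\zeta,m,\ell) \,\longrightarrow\, \V^S[\nu]\big/I_a\V^S[\nu],
\]
whose target is canonically $V[\nu]$ with the Bethe algebra $\B(b;\mu;V[\nu])$ acting via $\tau$ (Lemma \ref{lem fib}); both sides have dimension $\binom{n}{m}$. For each $U \in \Wr^{-1}(a)$ the idempotent $e_U \in \O_a(\zeta,m,\ell)$ satisfies $G\cdot e_U = G(U)\,e_U$ for all $G \in \Oz$, so $\bar\phi_a(e_U)$ is a joint eigenvector of $\B(b;\mu;V[\nu])$ whose eigenvalues are the coefficients of $\mc G_U$ evaluated at $U$. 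By Lemma \ref{cor gen f} the point $U$ encodes a Bethe ansatz solution $(t^0;b;\mu)$, and Lemma \ref{lem e=g} identifies $\mc G_U$ with $\mc E^c_{t^0;b;\mu}$; combined with Lemma \ref{hatB2 lem}, the eigenvalues thus coincide with those of the Bethe vector $\om(t^0,b)$. Since for generic $a$ the tuples are pairwise distinct and the Bethe vectors form a basis of $V[\nu]$ by Theorem \ref{thm Bnon} and Lemma \ref{basis}, each $\bar\phi_a(e_U)$ must be a scalar multiple of $\om(t^0,b)$.

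The principal obstacle is then to prove that all of these scalars are nonzero, equivalently that $v_+$ is a cyclic vector for $\B(b;\mu;V[\nu])$ at generic $b$. I would handle this exactly as in \cite[Theorem 6.7]{MTV3} and \cite[Theorem 6.3]{MTV2}: the component of $v_+$ along each joint eigenline is computed through the pairing with the weight function \eqref{wght_f}, and the resulting expression factors into products of differences of the generic parameters $b_1,\dots,b_n$ and of the Bethe roots $t_i^0$, all nonzero by Lemma \ref{lem gen f}. Granted this cyclicity, $\bar\phi_a$ sends the idempotents $\{e_U\}_{U\in\Wr^{-1}(a)}$ to a basis of $V[\nu]$, so $\bar\phi_a$ is an isomorphism for generic $a$. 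Consequently $\phi$ is generically an isomorphism of $\C[z_1,\dots,z_n]^S$-modules; since $\Oz$ is a torsion-free $\C[z_1,\dots,z_n]^S$-module (it is a domain, finite over the subalgebra by Lemma \ref{lem pdeg}), $\phi$ is injective. The equality of graded characters from the first paragraph then forces the cokernel to have vanishing Hilbert series, which upgrades injectivity to a graded isomorphism and completes the proof.
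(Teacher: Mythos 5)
Your first paragraph (the intertwining identity, $\C[z_1,\dots,z_n]^S$-linearity via Lemma \ref{lem iso m}, gradedness, and the equality of graded characters from Lemmas \ref{lem frV} and \ref{lem grO}) matches the paper and is fine, as is the final reduction ``injective graded map with equal graded characters $\Rightarrow$ isomorphism.'' The divergence is in how you get injectivity, and that is where your argument has a genuine gap. You reduce everything to the claim that $v_+$ is a cyclic vector for $\B(b;\mu;V[\nu])$ at generic $b$, equivalently that the component of $v_+$ along each Bethe eigenline is nonzero, and then you explicitly decline to prove it (``I would handle this exactly as in \cite{MTV3}, \cite{MTV2} \dots the resulting expression factors into products of differences \dots''). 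That nonvanishing is the entire content of the injectivity statement in your approach; it is not established anywhere in this paper, and the sketch you give (a factorization of the coefficient of $v_+$ in the weight function \eqref{wght_f} evaluated at Bethe roots) is itself a nontrivial computation that would need to be carried out. A secondary soft spot: your fiberwise argument tacitly assumes that for generic $a$ the algebra $\O_a(\zeta,m,\ell)$ is spanned by the idempotents $e_U$, $U\in\Wr^{-1}(a)$, i.e.\ that generic fibers are reduced with $\binom{n}{m}$ points and that the corresponding joint eigenvalues are distinct; this is true (generic smoothness plus freeness of $\Oz$ over $\C[\Si_1,\dots,\Si_n]$), but it is not stated in the paper before this theorem --- the point count appears only afterwards as Corollary \ref{cor deg} --- so it needs to be argued rather than invoked.

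The paper proves injectivity by a much shorter commutative-algebra argument that avoids Bethe vectors entirely: by \eqref{inter} the kernel of $\phi$ is an ideal $\mc I\subset\Oz$, proper since $\phi(1)=v_+\ne0$; it meets $\C[z_1,\dots,z_n]^S$ trivially because $\tau(p)$ acts on the free module $\V^S[\nu]$ as multiplication by $p$ (Lemma \ref{lem iso m}), so $p\,v_+\ne 0$ for $p\ne 0$; and since $\Oz$ is an integral domain finite over $\C[z_1,\dots,z_n]^S$ (Lemma \ref{lem pdeg}), any nonzero element of $\mc I$ would, via its integral equation, force a nonzero element of $\mc I\cap\C[z_1,\dots,z_n]^S$. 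Hence $\mc I=0$. I would recommend either adopting this argument or actually supplying the nonvanishing computation your route requires; as written, your proof is incomplete at its central step.
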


\begin{proof}
First we show that the map $\phi$ is injective. Indeed,
the algebra $\Oz$ is a free polynomial algebra containing the subalgebra $\C[z_1,\dots,z_n]^S$.
The quotient algebra $\Oz/\C[z_1,\dots,z_n]^S$ is finite-dimensional by Lemma \ref{lem pdeg}.
The  kernel of $\phi$ is a proper ideal $\mc I$ in $\Oz$. Then $\tau(\mc I)$ is an ideal in $\Bm$.
Any proper ideal in $\Bm$ has zero intersection with $\C[z_1,\dots,z_n]^S$. Hence
$\mc I$ has zero intersection with $\C[z_1,\dots,z_n]^S$ and therefore is 
the zero ideal. The injectivity is proved.

The map $\phi$ is graded. 
The graded characters of $\V^S[\nu]$ and $\Oz$ are equal by Lemmas \ref{lem frV}    and  \ref{lem grO}.
Hence $\phi$ is an isomorphism.
\end{proof}

\begin{cor}
\label{lem isofi}
Assume that $a=(a_1,\dots,a_n)\in \C^n$ is such that the polynomial 
$x^n + \sum_{s=1}^n(-1)^s a_sx^{n-s}$ has distinct roots
$b_1,\dots,b_n$. Then under  assumptions \eqref{ass}, the isomorphisms 
$\tau$, $\phi$ induce the isomorphism of  the $\B(b_1,\dots,b_n;\mu; V[\nu])$-module
$V[\nu]$ and the $\O_a(\zeta;m,\ell)$-module $\O_a(\zeta;m,\ell)$, where
$\O_a(\zeta;m,\ell)$ is the algebra of functions on the fiber $\Wr^{-1}(a)$
of the Wronski map, see  \eqref{ide}.
\end{cor}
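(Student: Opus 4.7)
The plan is to descend the global isomorphism of Theorem \ref{thm ups} to fibers of the Wronski map. The first step is to verify that $\tau$ identifies the two natural copies of $\C[z_1,\dots,z_n]^S$ sitting inside the two modules: on the $\O(\zeta,m,\ell)$ side, the subalgebra generated by $\Si_1,\dots,\Si_n$ (identified with $\C[z_1,\dots,z_n]^S$ via $\si_s\leftrightarrow\Si_s$); on the $\B(\mu,m,\ell)$ side, the subalgebra of multiplication operators by symmetric polynomials in $z$. Using \eqref{G1} together with \eqref{Wpq}, the Laurent coefficients $G_{1j}$ at infinity are, up to explicit signs, the power sums in the roots $b_1,\dots,b_n$ of $x^n+\sum_s(-1)^s\Si_sx^{n-s}$, matching the formulas for $F_{1j}$ from Lemma \ref{lem LG} under $z_s\leftrightarrow b_s$. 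Since power sums generate symmetric functions, this shows $\tau(\Si_s)=\si_s(z)$ as multiplication operators; this is essentially the content of Lemma \ref{lem iso m}.

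Next, for $a=(a_1,\dots,a_n)$ as in the hypothesis, I would consider the ideal $J_a\subset\O(\zeta,m,\ell)$ generated by $\Si_s-a_s$ and the ideal $I_a\subset\C[z_1,\dots,z_n]$ generated by $\si_s(z)-a_s$. The previous step gives $\tau(J_a)=I_a\cdot\B(\mu,m,\ell)$, so $\tau$ descends to an algebra isomorphism $\O_a(\zeta,m,\ell)\to\B(\mu,m,\ell)/I_a\B(\mu,m,\ell)$. For $\phi$, the $\C[z_1,\dots,z_n]^S$-linearity from Lemma \ref{lem iso m} and the surjectivity statement in Theorem \ref{thm ups} give $\phi(J_a)=I_a\cdot\phi(\O(\zeta,m,\ell))=I_a\V^S[\nu]$, so $\phi$ descends to a vector-space isomorphism $\O_a(\zeta,m,\ell)\to\V^S[\nu]/I_a\V^S[\nu]$.

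Finally, I would apply Lemma \ref{lem fib} to canonically identify $\V^S[\nu]/I_a\V^S[\nu]$ with $V[\nu]$ and the induced quotient of the Bethe algebra $\B(\mu,m,\ell)$ with $\B(b_1,\dots,b_n;\mu;V[\nu])$. Chaining these identifications, the descended $\tau$ becomes an algebra isomorphism $\O_a(\zeta,m,\ell)\to\B(b_1,\dots,b_n;\mu;V[\nu])$, and the descended $\phi$ a vector-space isomorphism $\O_a(\zeta,m,\ell)\to V[\nu]$. The intertwining identity \eqref{inter} passes to the quotient, yielding $\phi(fg)=\tau(f)\phi(g)$ for all $f,g\in\O_a(\zeta,m,\ell)$, which is precisely the module-isomorphism statement.

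The substantive work has already been carried out in Lemmas \ref{lem fib} and \ref{lem iso m} and Theorem \ref{thm ups}; the corollary is a formal quotient argument. The main place requiring care is matching the two $\C[z_1,\dots,z_n]^S$-structures under $\tau$ precisely (not merely up to abstract isomorphism), so that the ideals $J_a$ and $I_a\B(\mu,m,\ell)$ correspond; once that is verified, everything else is bookkeeping.
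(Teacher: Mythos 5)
Your proposal is correct and follows essentially the same route as the paper, which proves the corollary in one line by citing Lemma \ref{lem fib} together with Theorems \ref{thm isoa} and \ref{thm ups}; you have simply made explicit the quotient bookkeeping (matching the two $\C[z_1,\dots,z_n]^S$-structures via Lemma \ref{lem iso m}, descending $\tau$ and $\phi$ modulo the corresponding ideals, and passing \eqref{inter} to the fibers) that the paper leaves implicit.
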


\begin{proof}
The corollary follows from Lemma \ref{lem fib} and Theorems \ref{thm isoa}, \ref{thm ups}.
\end{proof}

\begin{cor}
\label{cor deg}
The degree of the Wronski map $\Wr$ equals $\dim V[\nu] = \binom{n}{m}$.
\qed
\end{cor}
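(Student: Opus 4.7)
The plan is to realize the Wronski map algebraically and compute its degree via the isomorphisms $\tau$, $\phi$ already established, rather than by counting points in a fiber geometrically. The Wronski map corresponds to the inclusion $\C[\si_1,\dots,\si_n]\hookrightarrow \Oz$ sending $\si_s\mapsto \Si_s$, and for a finite flat map of affine spaces of the same dimension the degree equals the rank of the source coordinate ring as a free module over the target coordinate ring. So the whole task reduces to showing that $\Oz$ is a free $\C[\si_1,\dots,\si_n]$-module of rank $\binom{n}{m}$.

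First I would transfer the module structure via $\phi$. By Theorem \ref{thm ups}, $\phi:\Oz\to\V^S[\nu]$ is a graded isomorphism of $\C[z_1,\dots,z_n]^S$-modules; here the $\C[z]^S$-action on $\Oz$ is by multiplication in the subalgebra generated by $\Si_1,\dots,\Si_n$, while on $\V^S[\nu]$ it is ordinary multiplication by symmetric polynomials (the two actions match because $\tau$ is $\C[z]^S$-linear by Lemma \ref{lem iso m}, which in turn rests on the embedding of Lemma \ref{lem z in B} and the identification $\si_s\leftrightarrow \Si_s$). Second, by Lemma \ref{lem frV}, $\V^S[\nu]$ is a free $\C[z_1,\dots,z_n]^S$-module of rank $\binom{n}{m}=\dim V[\nu]$. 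Pulling this structure back through $\phi$ shows that $\Oz$ is a free module of rank $\binom{n}{m}$ over its subalgebra generated by $\Si_1,\dots,\Si_n$. Because the Wronski map is surjective of positive degree (Lemma \ref{lem pdeg}), this subalgebra is a polynomial ring in $\Si_1,\dots,\Si_n$ isomorphic to $\C[\si_1,\dots,\si_n]$ via the Wronski map.

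Finally I would translate the algebraic rank into the degree of the map. A finite flat morphism of affine spaces has degree equal to the rank of its associated free module; equivalently, the degree equals $\dim_\C \O_a(\zeta,m,\ell)$ for any $a\in\C^n$ in the open locus where the fiber is reduced. Either formulation immediately yields $\deg \Wr = \binom{n}{m}$. As a consistency check (and an alternative route if one prefers the fiber viewpoint), Corollary \ref{lem isofi} gives, for generic $a$ with distinct roots $b_1,\dots,b_n$, an isomorphism of $\O_a(\zeta,m,\ell)$ with $V[\nu]$ as vector spaces, so $\dim_\C \O_a(\zeta,m,\ell)=\dim V[\nu]=\binom{n}{m}$, which is exactly the generic fiber cardinality counted with multiplicity.

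The main obstacle, such as it is, is bookkeeping rather than mathematics: one must verify carefully that the $\C[z]^S$-module structure appearing in Theorem \ref{thm ups} on the $\Oz$ side is precisely the multiplication action by $\C[\si_1,\dots,\si_n]$ used to define the Wronski map, so that the rank computation on one side is transported faithfully to the other. Once that identification is made, the statement follows without further work.
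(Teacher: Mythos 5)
Your argument is correct and follows the same route the paper intends: the corollary is stated with an immediate \qed because it falls out of Theorem \ref{thm ups} (equivalently Corollary \ref{lem isofi}), which identifies the generic fiber algebra $\O_a(\zeta,m,\ell)$ with $V[\nu]$, together with the rank count of Lemma \ref{lem frV}. Your free-module/finite-flat formulation is just a more explicit bookkeeping of that same identification, so there is nothing to add.
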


\subsection{Dynamical Bethe algebra and quasi-polynomials}

The space $V = V_1^{\ox n}$ has a nontrivial zero weight subspace if $n$ is even.
Let $n=2m$. For the zero weight subspace $V[0]$,
 we have $\nu=0, \,m=\ell$, and 
assumptions \eqref{ass} take the form
\bean
\label{asss}
\zeta = \frac\mu 2 \quad \on{and} \quad 
\mu\notin\  \Z\,.
\eean
\smallskip

Let $a=(a_1,\dots,a_n)\in \C^n$ be such that the polynomial 
$x^n + \sum_{s=1}^n(-1)^s a_sx^{n-s}$ has distinct nonzero roots
$b_1,\dots,b_n$. 
Consider the functional space $E[\mu]$ as the module
over the  dynamical Bethe algebra $\B(b_1,\dots,b_n;E[\mu])$, see
Section \ref{sec BE}.  Consider the
$\O_a(\zeta;m,m)$-module $\O_a(\zeta;m,m)$, where
$\O_a(\zeta;m,m)$ is the algebra of functions on the fiber $\Wr^{-1}(a)$
of the Wronski map.

\begin{cor}
\label{lem isofi}

Under  assumptions \eqref{asss}, the isomorphisms 
$\tau$, $\phi$ and the isomorphism $V[0]\to E[\mu]$  in Corollary \ref{iso B}
induce the isomorphism of the $\B(b_1,\dots,b_n;E[\mu])$-module $E[\mu]$ and
  the $\O_a(\zeta;m,m)$-module $\O_a(\zeta;m,m)$.
  \qed
  \end{cor}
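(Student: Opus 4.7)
The plan is to prove the corollary by composing the two module isomorphisms already established. First I would verify that the hypotheses \eqref{asss} activate all prerequisites: with $\nu = 0$, $m = \ell$, and $n = 2m$, the assumption $\zeta = \mu/2$ matches the specialization $\zeta = \mu/2 - \nu/4$ in \eqref{ass}; the condition $\mu \notin \Z$ is equivalent to $\zeta \notin \tfrac12 \Z$, rules out $\mu \in \frac{n}{2} + \Z$, and in particular implies $\mu \notin \Z_{>0}$. Therefore both Corollary \ref{iso B} and the preceding Corollary \ref{lem isofi} apply at the specialization $z = (b_1,\dots,b_n)$.

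Next I would chain the two isomorphisms
\bea
\bigl(\O_a(\zeta;m,m),\, \O_a(\zeta;m,m)\bigr) \xrightarrow{(\tau,\phi)} \bigl(\B(b_1,\dots,b_n;\mu;V[0]),\, V[0]\bigr) \xrightarrow{v\,\mapsto\,\psi_v} \bigl(\B(b_1,\dots,b_n;E[\mu]),\, E[\mu]\bigr),
\eea
in which the first arrow is furnished by the preceding corollary with $\nu = 0$ and the second by Corollary \ref{iso B}. Each arrow is really a pair consisting of an algebra isomorphism together with a vector-space isomorphism intertwining the two actions; concatenating the algebra isomorphisms and concatenating the vector-space isomorphisms produces the desired isomorphism of the $\B(b_1,\dots,b_n;E[\mu])$-module $E[\mu]$ with the $\O_a(\zeta;m,m)$-module $\O_a(\zeta;m,m)$.

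There is no substantive obstacle. The corollary is a formal transitivity statement for module isomorphisms, and the entire content lies in the results being chained. The only genuine check is the matching of parameters in the first paragraph, which is exactly what the hypothesis $\zeta = \mu/2$ enforces. This is presumably why the authors record the statement with an immediate $\qed$.
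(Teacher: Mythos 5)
Your proposal is correct and is exactly the argument the paper intends: the corollary is recorded with an immediate \qed precisely because it is the composition of the module isomorphism of the preceding corollary (at $\nu=0$, $\ell=m$, specialized to $z=(b_1,\dots,b_n)$) with the isomorphism $v\mapsto\psi_v$ of Corollary \ref{iso B}, once one checks that $\zeta=\mu/2$ and $\mu\notin\Z$ imply all the needed hypotheses. Your parameter verification ($\zeta\notin\tfrac12\Z\iff\mu\notin\Z$, which for $n=2m$ also gives $\mu\notin\tfrac n2+\Z$ and $\mu\notin\Z_{>0}$) is the only genuine content, and it is done correctly.
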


\subsection{Weyl involution and transposition of quasi-polynomials}

Consider the 
\\
$\Bm$-module $\V^S[\nu]$ and  $\B(-\mu, \ell,m)$-module
$\V^S[-\nu]$. Consider the $\Oz$-module $\Oz$ and 
$\mc O(-\zeta, \ell, m)$-module $\mc O(-\zeta, \ell, m)$.

Under assumptions \eqref{ass}, consider the diagram,

\bean
\label{comd}
\begin{tikzcd}
 (\Bm,\,  \V^S[\nu])  \arrow[r, ] \arrow[d, ] & (\B(-\mu, \ell,m),\,\V^S[-\nu])  \arrow[d, ] 
\\
 (\Oz,\,  \Oz) \arrow[r,   ]  & (\mc O(-\zeta, \ell, m), \,\mc O(-\zeta, \ell, m)
)
\end{tikzcd}\ \   .
\eean
Here $\V^S[\nu] \to    \Oz$ and 
$\V^S[-\nu] \to \mc O(-\zeta, \ell, m)$ are  the module isomorphisms of Theorem \ref{thm ups}.
The map $ \V^S[\nu] \to \V^S[-\nu]$ is the module isomorphism of Lemma \ref{lem BB iso}.
The map    $\Oz \to  \mc O(-\zeta, \ell, m)$ is the module isomorphism defined by the transposition
of the quasi-polynomials $p,q$.

\begin{thm}
\label{thm tra}
The diagram \eqref{comd} is commutative.

\end{thm}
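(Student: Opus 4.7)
The plan is to verify the commutativity of \eqref{comd} in two steps: first as a square of graded algebras evaluated on a set of generators of $\Oz$, then as a square of modules evaluated at the single element $1\in\Oz$. Since the algebra square commutes and $\V^S[\nu]$ is generated over $\Bm$ by $v_+=\phi(1)$, the intertwining relation \eqref{inter} propagates commutativity at $1$ to all of $\Oz$.

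For the algebra square, consider the transposition $\Oz\to\mc O(-\zeta,\ell,m)$ that interchanges the generating quasi-polynomials $p$ and $q$. In the defining formula \eqref{DO1} for $\mc G$, swapping the first two rows of the $3\times 3$ matrix changes the sign of the row determinant, while the Wronskian $\Wr_x(p,q)$ in the denominator also changes sign under $p\leftrightarrow q$. Hence $\mc G$, and with it every Laurent coefficient $G_{ij}$, is invariant under the transposition. On the Bethe algebra side, Lemma \ref{lem BB iso} sends $F_{ij}(\mu,m,\ell)$ to $F_{ij}(-\mu,\ell,m)$. Since $\tau(G_{ij})=F_{ij}$ by Theorem \ref{thm isoa}, the two compositions of algebra maps carry $G_{ij}\in\Oz$ to the same generator $F_{ij}(-\mu,\ell,m)\in\B(-\mu,\ell,m)$, so the algebra square commutes.

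For the module square, I would localize over a generic fiber of the Wronski map. By Lemma \ref{lem gen f} there is a Zariski-open $X\subset\C^n$ with $\Wr^{-1}(a)$ consisting of $\binom{n}{m}$ generic points for every $a\in X$; by Lemma \ref{cor gen f} each such point $U$ yields a solution $(t^0;b;2\zeta+\nu/2)$ of the Bethe ansatz equations \eqref{tr.BAE}, and the transposed point $\tilde U$ with $p$ and $q$ swapped yields the solution $(\tilde t^{\,0};b;-2\zeta-\nu/2)$. By Corollary \ref{lem isofi}, under $\phi$ the delta function at $U$ corresponds to a nonzero multiple of the Bethe vector $\om(t^0\!,b)\in V[\nu]$, and analogously for $\tilde U$ on the right side of \eqref{comd}. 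The identity \eqref{BVA} of Theorem \ref{thm WB},
\[\mc A\Big(\mu+\frac{\nu}{2}-1\Big)\,\om(t^0\!,b,\mu)=\const\cdot\om(\tilde t^{\,0}\!,b,-\mu),\]
is exactly the commutativity relation on the delta-function basis of the generic fiber.

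The main obstacle is matching the nonzero scalars on the two paths—the normalization relating a delta function to its Bethe vector under $\phi$, and the factor $\const$ in \eqref{BVA}—and showing they combine consistently, so that pointwise commutativity on the dense set of generic fibers upgrades to a genuine identity of module maps. This can be handled either by a direct evaluation at $1\in\Oz$, using $S_n$-invariance of $v_+$ and the fact that $\mc A(\mu+\nu/2-1)$ is $S_n$-equivariant (hence carries $v_+^{(m,\ell)}$ into the one-dimensional $S_n$-invariant subspace of $V[-\nu]$ spanned by $v_+^{(\ell,m)}$), or by a polynomiality argument: the difference of the two compositions is a $\C[z_1,\dots,z_n]^S$-linear map vanishing on a Zariski-dense set of fibers and hence vanishing identically.
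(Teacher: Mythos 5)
Your proof is correct and rests on the same key fact as the paper's: the second-order operator attached to a point of $\Om(\zeta,m,\ell)$ is unchanged when the two quasi-polynomials spanning its kernel are transposed, which (via Lemma \ref{lem e=g}) is precisely the content of Lemma \ref{lem e=e} that the paper cites as its entire proof. You establish this globally through the row-swap/sign cancellation in \eqref{DO1} rather than fiberwise through the Bethe-ansatz duality, and you are more explicit than the paper about reducing the module-level commutativity to the generator $1\in\Oz$; both refinements are consistent with, and fill in details of, the paper's one-line argument.
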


\begin{proof}
The theorem follows from Lemma \ref{e=e}.
\end{proof}

The commutativity of diagram \eqref{comd} implies the commutativity of the 
diagram of fibers over a generic point $a\in\C^n$,
\bean
\label{cofd}
\begin{tikzcd}
 (\B(b_1,\dots,b_n;\mu, V[\nu]), \,  V[\nu])  \arrow[r, ] \arrow[d, ] &
 (\B(b_1,\dots,b_n;-\mu, V[-\nu]), \,  V[-\nu])  \arrow[d, ] 
\\
   (\O_a(\zeta, m,\ell),\,\O_a(\zeta, m,\ell)) \arrow[r,   ]  & (\mc O_a(-\zeta, \ell, m),\,\O_a(-\zeta,\ell,m))
\end{tikzcd}  \ \ ,
\eean
see notations in Section \ref{sec imo}. 

Combining commutative diagrams \eqref{cofd} and \eqref{comD} we obtain the commutative
diagram
\bean
\label{cofd}
\begin{tikzcd}
 (\B(z; E[\mu]),\, E[\mu]) \arrow[r,   ]\arrow[d,]  & (\B(z; E[-\mu]),\, E[-\mu])\arrow[d, ]
\\
  (\O_a(\zeta, m,m),\,\O_a(\zeta, m,m)) \arrow[r,   ]  & (\mc O_a(-\zeta, m, m),\,\O_a(\zeta, m,m))
\end{tikzcd}  \ \ ,
\eean
which holds if $n=2m$ is even and $\mu\notin\Z$. The diagram 
identifies the Weyl involution $E[\mu]\to E[-\mu]$ in the functional spaces
of eigenfunctions of the KZB operator $H_0$ 
with the isomorphism $\O_a(\zeta, m,m)\to \O_a(-\zeta, m,m)$
induced by the transposition  of quasi-polynomials.

\bigskip

\end{document}